\documentclass{article}
\usepackage{amsfonts}
\usepackage{amsmath}
\usepackage{amsthm}
\usepackage[margin=\parindent]{caption}
\usepackage{dsfont}
\usepackage{enumitem}
\usepackage[T1]{fontenc}
\usepackage[osf]{sourcesanspro}
\usepackage[bitstream-charter,greekuppercase=upright,cal=cmcal]{mathdesign}
\usepackage{mathtools}
\usepackage{svg}
\usepackage{subcaption}
\usepackage{xcolor}
\usepackage[most]{tcolorbox}
\usepackage{placeins}
\usepackage{stmaryrd}
\usepackage{pgfplots}
\usepackage{float}
\usepackage{makecell}
\pgfplotsset{compat=1.18}
\usepackage[normalem]{ulem}
\usepackage{tikz}
\usetikzlibrary{arrows.meta}

\usepackage{algorithm}
\usepackage{algpseudocode}
\algrenewcommand\algorithmicrequire{\textbf{Input:}}
\algnewcommand{\algorithmicassumption}{\textbf{Assumption:}}
\algnewcommand{\Assumption}{\item[\algorithmicassumption]}
\algrenewcommand\algorithmicensure{\textbf{Output:}}

\usepackage{tikz}
\usetikzlibrary{backgrounds,fit,positioning,calc,decorations.markings,decorations.pathreplacing,calligraphy}
\usepackage[paper=a4paper,margin=3cm]{geometry}

\usepackage{sectsty}
\allsectionsfont{\fontseries{sb}\sffamily}
\definecolor{commentgreen}{RGB}{0,110,65}

\usepackage[pdfusetitle]{hyperref}
\usepackage[affil-it]{authblk}
\definecolor{dullmagenta}{rgb}{0.5,0,0.4}
\definecolor{dullblue}{rgb}{0.0,0,0.86}
\hypersetup{colorlinks=true,citecolor=dullblue,linkcolor=dullblue,filecolor=dullblue,urlcolor=dullblue,breaklinks=true,bookmarksnumbered=true, bookmarksopen=true,bookmarksopenlevel=1}
\usepackage{bookmark}
\usepackage{zref-clever}
\zcRefTypeSetup{problem}{
    name-sg = problem,
    Name-sg = Problem,
    name-pl = problems,
    Name-pl = Problems,
}

\newcommand{\cref}[1]{\zcref[S]{#1}}
\newcommand{\Cref}[1]{\zcref[S]{#1}}

\usepackage{amsthm} 
\usepackage{hyperref} 
\newtheorem{theorem}{Theorem}[section]
\newtheorem{lemma}{Lemma}[section]
\newtheorem{proposition}{Proposition}[section]
\newtheorem{corollary}{Corollary}[section]

\newtheorem{definition}{Definition}[section]
\newtheorem{problem}{Problem}[section]

\newlist{problemdescr}{description}{1} 
\setlist[problemdescr,1]{font=\textendash\enskip\scshape\bfseries,noitemsep,topsep=0pt}

\tikzset{
  fgfac/.style = {
    shape=rectangle, 
    semithick, 
    fill=white,
    text=black, 
    rounded corners=0.5pt, 
    draw=black, 
    inner sep=0.5mm, 
    outer sep=0.0mm,
    minimum height=4mm, 
    minimum width=4mm,
    text height=1.5ex,
    text depth=.25ex
  },
  fgvar/.style = {
    shape=circle, 
    semithick, 
    fill=white,
    text=black, 
    draw=black,
    inner sep=0.5mm, 
    outer sep=0.0mm,
    minimum height=4mm, 
    minimum width=4mm,
    text height=1.5ex,
    text depth=.25ex
  },
  box/.style = {
    shape=rectangle, 
    semithick, 
    text=black, 
    rounded corners=0.5pt, 
    draw=black, 
    fill=black!5, 
    inner sep=1.5mm, 
    outer sep=0.0mm,
    minimum height=8mm, 
    minimum width=8mm,
    text height=1.5ex,
    text depth=.25ex
  },
  edge/.style = {
    semithick,
    -stealth
  }
 }

\DeclareFontFamily{U}{bbold}{}
\DeclareFontShape{U}{bbold}{m}{n}
 {
  <-5.5> s*[1.069] bbold5
  <5.5-6.5> s*[1.069] bbold6
  <6.5-7.5> s*[1.069] bbold7
  <7.5-8.5> s*[1.069] bbold8
  <8.5-9.5> s*[1.069] bbold9
  <9.5-11> s*[1.069] bbold12 
  <11-15> s*[1.069] bbold12
  <15-> s*[1.069] bbold17
 }{}

\DeclareRobustCommand{\id}{%
  \text{\usefont{U}{bbold}{m}{n}1}%
}

\DeclareMathOperator{\Tr}{Tr}
\DeclareMathOperator{\sech}{sech}
\DeclareMathOperator{\asech}{asech}
\DeclareMathOperator{\Ima}{Im}

\newcommand{\abs}[1]{\left\lvert#1\right\rvert}
\newcommand{\norm}[1]{\left\lVert #1 \right\rVert}
\newcommand{\symp}{\mathrm{Sympl}}
\providecommand{\ket}[1]{\ensuremath{\left\lvert #1 \right\rangle}}

\providecommand{\braket}[2]{\ensuremath{\left\langle\smash{#1}\middle\vert\smash{#2}\right\rangle}}
\providecommand{\braopket}[3]{\ensuremath{\left\langle\smash{#1}\middle\vert\smash{#2}\middle\vert\smash{#3}\right\rangle}}
\providecommand{\ketbra}[2]{\ensuremath{\left\lvert \smash{#1}\middle\rangle\!\middle\langle\smash{#2}\right\rvert}}
\providecommand{\proj}[1]{\ketbra{#1}{#1}}
\DeclareMathOperator{\tr}{Tr}
\DeclareMathOperator\atanh{atanh}

\def\ff{\mathbb F}

\def\cC{\mathcal C}
\def\cD{\mathcal D}
\def\cH{\mathcal H}

\def\cP{\mathcal P}
\def\calC{\mathcal C}
\def\calH{\mathcal H}

\def\psc{\ensuremath{\textsc{PSC}}}
\def\bec{\ensuremath{\textsc{BEC}}}
\def\bsc{\ensuremath{\textsc{BSC}}}

\def\pfail{\ensuremath{p_{\mathrm{fail}}}}
\def\supp{\ensuremath{\mathrm{supp}}}

\newcommand{\vnop}{\varoast}
\newcommand{\cnop}{\boxast}

\title{\fontseries{m} \LARGE\sffamily Gibbs state preparation through quantum decoding}
\author[1]{\sffamily Christophe Piveteau}
\author[2]{\sffamily Yihui Quek}
\author[1,3]{\sffamily M. Isabel Franco Garrido}
\affil[1]{\small IBM Research Europe - Zurich}
\affil[2]{\small École Polytechnique Fédérale de Lausanne}
\affil[3]{\small Institute for Quantum Information and Matter, California Institute of Technology}
\date{}

\begin{document}
\maketitle

\begin{abstract}
Recent approaches to quantum Gibbs-state preparation have largely sought to mimic thermalization. Here, we instead exploit algebraic structure in the Hamiltonian for Gibbs sampling. Our starting point is Hamiltonian Decoded Quantum Interferometry (HDQI), a recent application of Regev's reduction that reduces Gibbs-state preparation for Pauli Hamiltonians to
decoding error-correcting codes. We introduce quantum decoding HDQI (QD-HDQI),
replacing HDQI's coherent classical decoder by a
\emph{quantum decoder}: a quantum algorithm that acts directly on
superpositions of noisy codewords and can exploit interference between different errors. This reveals a channel-coding picture of Gibbs state preparation that immediately improves the temperatures accessible to HDQI.

For commuting Pauli Hamiltonians, our reduction maps temperature directly onto decoding noise: preparing colder states requires a decoder that tolerates more noise. For uniformly random Hamiltonian signs, the vanishing-error threshold of an efficient decoder directly determines the lowest temperature that our algorithm can reach. Even using only classical decoders, this improves upon HDQI's achievable inverse temperature from linear to square-root in the correctable error fraction; quantum decoders can improve further by tolerating still stronger noise. A similar picture extends to the noncommuting regime previously handled by HDQI, where now the noise is correlated within small, independent blocks. 


We study two such quantum decoders in our QD-HDQI Gibbs sampling framework. The first, based on unambiguous state discrimination, works efficiently for arbitrary binary linear codes. However, we ``dequantize" this strategy for commuting Hamiltonians, contributing a classical rejection sampler that matches its temperature threshold. The second is belief propagation with quantum messages (BPQM), a quantum generalization of classical belief propagation. Using a novel analysis of density evolution for correlated noise, we show that for certain dilute spin glasses, BPQM's decoding threshold exceeds the Shannon limit for classical decoding, reaching temperatures inaccessible to any classical decoder within our framework.
This advantage persists after adding a sufficiently small constant density of noncommuting defects to the Hamiltonian. 

Our results thus identify quantum decoding as a new route to Gibbs-state preparation at lower temperatures, with future improvements tied directly to the development of better quantum decoders.
\end{abstract}

\newpage
\tableofcontents

\newpage
\section{Introduction}
Preparing a Gibbs state is usually approached by engineering quantum dynamics that relax to the desired thermal state. Here we explore a different perspective, approaching Gibbs sampling as a problem in error-correcting codes. This draws from a new quantum algorithmic paradigm in which computational problems are reduced to decoding tasks over associated error-correcting codes \cite{thomas_reduction_2024,chen_quantumalgo_2022,yamakawa_quantumadvantage_2024,jordan_dqi_2025,chailloux_quantumadvanteage_2025,garrido2026regevsreductioncandidatequantum}. 
This line of work can be traced back to Regev's foundational result establishing a reduction from certain worst-case lattice problems to learning with errors~\cite{regev_reduction_2009}. Decoded Quantum Interferometry (DQI)\cite{jordan_dqi_2025} applies this idea to optimization, while Hamiltonian Decoded Quantum Interferometry (HDQI)\cite{schmidhuber2025hamiltoniandecodedquantuminterferometry} extends it to Gibbs state and ground state preparation for Pauli Hamiltonians. HDQI, however, reduces these state preparation problems to the problem of decoding a classical error-correcting code related to the Hamiltonian, and addresses this by coherently applying classical decoders. It also first approximates the Gibbs function by a low-degree polynomial, whose degree cannot exceed the maximum error weight the decoder can correct.

For other algorithms based on Regev’s reduction, it has been observed that the decoding step can sometimes be performed more effectively using a \emph{quantum decoder} rather than a classical one~\cite{chen_quantumalgo_2022,chailloux_quantumdecoding_2023, garrido2026regevsreductioncandidatequantum}. A quantum decoder is a quantum algorithm that acts directly on a superposition of noisy
codewords and can exploit interference between different errors.
This raises a natural question for Gibbs-state preparation:
\begin{center}
\textit{Can this quantum decoding advantage be converted into Gibbs samplers at lower temperatures?}
\end{center}

In this work, we extend the HDQI framework beyond classical decoding and show how to incorporate quantum decoders into the state-preparation procedure.
We consider general Hamiltonians of the form
\begin{equation}\label{eq:hamiltonian}
    \cH = \sum_{i=1}^m (-1)^{y_i}v_i P_i
    \, .
\end{equation}
where $v_i\in [0,1],y_i \in \mathbb{F}_2$ denote the magnitude and sign of the $i-$th coupling.
For some target real-valued function $F$, we propose an algorithm called \emph{quantum decoding HDQI} (QD-HDQI) which reduces the problem of preparing filters of the Hamiltonian
\begin{equation}
    \rho_{\mathrm{QD-HDQI}} = \frac{F^2(\cH)}
    {\Tr[F^2(\cH)]}
\end{equation}
to a certain \emph{quantum} decoding problem over a code associated with the Hamiltonian (its {\em symplectic code}, introduced later) with a noise function $f$ that is determined by $F$. By contrast, $F$ in HDQI is restricted to be a polynomial.

More broadly, our reformulation suggests a different way to think about HDQI. Rather than going through the lossy step of an intermediate polynomial approximation, the target function $F$ now directly defines the noise model that will be applied to the Hamiltonian's symplectic code, which must be decoded. For Gibbs-state preparation, this viewpoint has a particularly appealing interpretation: lowering the temperature corresponds directly to increasing the noise faced by the decoder.

This perspective raises two questions that organize the rest of our work:
\begin{enumerate}
    \item When does the noise model induced by the Gibbs function retain enough structure for QD-HDQI to yield an efficient Gibbs sampler?
    \item For what Hamiltonians do quantum decoders beat classical decoders within our algorithm for Gibbs sampling?
\end{enumerate}

We find unexpectedly clean answers to both questions. For commuting Hamiltonians, we prove that the reduction implement by QD-HDQI is efficient, and the resulting decoding problem is that of transmitting the Hamiltonian’s symplectic code through independent binary pure-state channels, with the strength of the channel noise explicitly determined by the temperature. 
This setting naturally allows us to leverage techniques from classical-quantum channel coding to solve the quantum decoding problem efficiently.
Even if we limit our algorithm to use classical decoders, the achievable temperature improves upon HDQI's achievable inverse temperature from linear to square-root in the correctable error fraction; quantum decoders can improve further by tolerating still stronger noise. 
The construction and its efficiency extend to non-commuting Hamiltonians whose anti-commutation graph has logarithmic-size connected components. In this setting, the induced channel decomposes into small blocks, within which the noise may exhibit correlations\footnote{Here it is less obvious how to directly compare our achievable temperatures to those of HDQI, as those temperatures depend on the exact topology of the clusters in the Hamiltonian's anticommutation graph.}

The second question has a more subtle answer: pairing our algorithm with an unambiguous state discrimination (USD) measurement followed by Gaussian elimination decoding of the resulting erasure channel allows us to efficiently decode any code family, even completely structureless random linear codes. Yet we construct a classical Gibbs-sampling algorithm whose performance, for {\em any} commuting Hamiltonian family, exactly matches that of QD-HDQI with the USD-based quantum decoder. On the positive side, a different quantum decoder, belief propagation with quantum messages, yields a genuine advantage over classical decoders for a noncommuting Hamiltonian family, sparse spin-glass Hamiltonians with defects. 

\subsection{State preparation using classical syndrome decoding}\label{sec:state_prep_with_syndrome_decoding}
Before presenting our contributions, and specifically the general QD-HDQI algorithm, it is useful to describe a prototypical HDQI-like algorithm to reduce state preparation to \emph{classical syndrome decoding} (as opposed to quantum decoding).
This algorithm can essentially be viewed as a reformulation of the HDQI algorithm, but in somewhat more general language.
It will serve to motivate and explain the key technical improvements that lead to our QD-HDQI algorithm.

Consider a fixed Hamiltonian $\cH$ of the form \cref{eq:hamiltonian}, and let's define the product Pauli operator
\begin{equation}
    Q_e\coloneqq P_1^{e_1}P_2^{e_2}\cdots P_m^{e_m}
\end{equation}
Notice in particular that the symplectic representation of $Q_e$ is $\symp(Q_e)=He$, where $H$ is the $2n\times m$ parity check matrix whose columns are the symplectic representations of the Paulis in $\calH$:
\begin{equation}\label{eq:symplectic_code_matrix}
    \calC_{\calH}\coloneqq\ker(H)
    \, , \qquad
    H = \begin{pmatrix}
        \vert & \vert & & \vert \\
        \symp(P_1) & \symp(P_2) & \cdots & \symp(P_m) \\
        \vert & \vert & & \vert
    \end{pmatrix}
    \, .
\end{equation}
The code $\calC_{\calH}$ with parity check matrix $H$ is called the \emph{symplectic code} $\calC_{\calH}$ associated to $\cH$.

Consider a normalized function $f:\ff_2^m\rightarrow\mathbb{C}$ where $\norm{f}_2=1$.
The task is to prepare the state $\rho_f$ which is defined as follows.
\begin{equation}\label{eq:def_rhof}
    \rho_f \coloneqq \frac{1}{2^n} A_fA_f^\dagger
    \qquad\text{where}\qquad
    A_f \coloneqq \sum_ef(e)Q_e
    \, .
\end{equation}
For a specific choice of $f$, the following three-step procedure reduces the problem of preparing $\rho_f$ to a syndrome decoding problem over the symplectic code.
\begin{enumerate}
    \item \textbf{(Pilot state preparation)} Prepare the $m$-qubit state
    \begin{equation}
        \sum_{e\in\ff_2^m} f(e) \ket{e}
    \end{equation}
    as well as a maximally-entangled $2n$-qubit state $\ket{\Psi^n} = 2^{-n/2}\sum_{x\in\ff_2^n}\ket{x}\ket{x}$.

    \item \textbf{(Syndrome computation)} Controlled on the pilot state, apply the Pauli operator $Q_e$ on the first half of $\ket{\Psi^n}$, obtaining
    \begin{equation}
        \sum_{e\in\ff_2^m} f(e)\ket{e} \otimes (Q_e\otimes\mathds{1})\ket{\Psi^n}
        \, .
    \end{equation}
    Recall that for the Pauli strings $Q$, the states $(Q\otimes\mathds{1})\ket{\Psi^n}$ form an orthonormal basis.
    Therefore, there exists a unitary map $U_{\mathrm{sympl}}$ acting as $(Q_e\otimes\mathds{1})\ket{\Psi^n}\mapsto \eta(Q_e)\ket{He}$ for all $e\in\ff_2^m$, where $\eta(Q_e)$ is some phase term.
    Applying $U_{\mathrm{sympl}}$, we obtain the state
    \begin{equation}
        \sum_{e\in\ff_2^m} f(e)\eta(Q_e) \ket{e} \otimes \ket{He}
        \, .
    \end{equation}

    \item \textbf{(Uncomputation through syndrome decoding)}
    In order to uncompute the first register, coherently run a classical syndrome decoder that can recover $e$ from $He$ for any $e\in\supp(f)$.
    \begin{equation}
        \sum_{e\in\ff_2^m} f(e)\eta(Q_e) \ket{He}
        \, .
    \end{equation}
    After applying the unitary $U_{\mathrm{sympl}}^\dagger$, we obtain
    \begin{equation}\label{eq:hdqi_final_state}
        \left( \left( \sum_{e\in\ff_2^m} f(e)Q_e \right) \otimes \mathds{1} \right) \ket{\Psi^n}
        = (A_f\otimes\mathds{1})\ket{\Psi^n}
    \end{equation}
\end{enumerate}
The final state in \cref{eq:hdqi_final_state} is precisely the purification of the desired state, i.e., tracing out the second register yields $\rho_f$.

There are two important observations we can make about the above algorithm.
First, it is efficient only if the preparation of the pilot state $\sum_ef(e)\ket{e}$ and the decoding step $He\mapsto e$ are efficiently implementable.
Second, the uncomputation step can only succeed if the map $e\mapsto He$ is injective on $\supp(f)$.

The HDQI algorithm in \cite{schmidhuber2025hamiltoniandecodedquantuminterferometry} can be understood as a specialization of this procedure, with a specific choice of $f$, that fulfills the above conditions.
Concretely, for a given polynomial $\cP$, the authors construct an explicit noise function $f$ such that $\rho_f=\rho_{\mathrm{HDQI}}$ and the associated pilot state can be prepared in polynomial time.
Moreover, the authors' construction ensures that the maximum weight of errors $e\in \supp(f)$ is at most $\deg(\cP)$.
That means that the maximum error weight tolerated by the decoder directly determines the maximum achievable degree of the polynomial $\cP$.

\subsection{Overview of contributions}\label{sec:contributions}
Our first contribution is to reformulate HDQI so that its uncomputation
step can use a quantum codeword decoder rather than a classical syndrome
decoder. We first explain this change of viewpoint, before examining its
consequences for Gibbs-state preparation.

\subsubsection{From syndrome decoding to codeword decoding} \label{subsec:qdecoder}

HDQI is formulated in what we call the ``syndrome picture".
As seen in \cref{sec:state_prep_with_syndrome_decoding}, the algorithm prepares a state of a form that is a superposition over syndromes $\sum_{e\in\ff_2^m}g(e)\ket{He}$, where $g(e)\coloneqq f(e)\eta(Q_e)$, which is unitarily equivalent to the purification of the desired target state $\rho_f$.
The corresponding uncomputation step (step 3 in the previous section) runs a classical syndrome decoder on each branch of the superposition to achieve
\begin{equation}
    \sum_e g(e) \ket{e}\ket{He} \rightarrow \sum_e g(e) \ket{\mathbf 0}\ket{He}.
    \qquad
    \left(\parbox{3cm}{\centering\footnotesize uncomputation through syndrome decoding}\right)
\end{equation}
On the other hand, quantum decoders naturally work in the ``codeword picture", taking as input a noisy codeword
\begin{equation}\label{eq:input_qdecoder}
    \ket{\Psi_c} :=\sum_{e\in\ff_2^m}g(e)\ket{c+e}
\end{equation}
and inferring the underlying codeword $c\in \cC$.
\begin{definition}[The Quantum Decoding Problem (informal) \cite{chailloux_quantumdecoding_2023}]
Consider a known classical code $\cC\subset\ff_2^m$ and function $g:\ff_2^m\rightarrow\mathbb{C}$, $\norm{g}_2=1$.
The task is to recover $c\in\cC$ given access to a noisy codeword state $\ket{\Psi_c}$.
\end{definition}
Indeed, a quantum decoder is nothing but a measurement that performs state discrimination on the possible noisy-codeword
states $\{\ket{\Psi_c}\}_{c\in C}$.
One possible strategy is to measure in the computational basis,
obtaining $c+e$ with probability $|g(e)|^2$, and then apply a classical
decoder. A quantum decoder need not make this intermediate measurement:
it can exploit interference between different error components to
distinguish the codewords more reliably.

In general, quantum decoders can be used to perform a different uncomputation step of the form
\begin{equation}
\begin{array}{ccc} 
    \frac{1}{\sqrt{\abs{\cC}}}\sum_{c\in\cC}\ket{c}\ket{\Psi_c} & \rightarrow & \frac{1}{\sqrt{\abs{\cC}}}\sum_{c\in\cC}\ket{\mathbf 0}\ket{\Psi_c} \\ \rotatebox{90}{$=$} & & \rotatebox{90}{$=$} \\
   \frac{1}{\sqrt{|\cC|}} \sum\limits_{\substack{c\in\cC\\ e\in\ff_2^m}} g(e) \ket{c}\ket{c+e} & & \frac{1}{\sqrt{|\cC|}}\sum\limits_{\substack{c\in\cC\\ e\in\ff_2^m}} g(e) \ket{\mathbf 0}\ket{c+e}
    \end{array}
    \, .
    \qquad
    \left(\parbox{3cm}{\centering\footnotesize uncomputation through codeword decoding}\right)
\end{equation}

To use quantum decoders for Gibbs state preparation, we must first build the bridge between these two pictures by formulating a transformation between the two involved states.
In \cref{lem:codeword_to_syndrome}, we construct an efficiently implementable unitary $U_{\text{transform}}$ such that
\begin{equation}\label{eq:Utransform}
\frac{1}{\sqrt{\abs{\cC}}}
\sum_{\substack{c\in\cC\\ e\in\ff_2^m}}
g(e)\ket{c+e}
\mathrel{
\begin{gathered}
\xrightarrow{\quad U_{\text{transform}}\quad}\\[-0.7ex]
\xleftarrow{\quad U_{\text{transform}}^\dagger\quad}
\end{gathered}
}
\sum_{e\in\ff_2^m} g(e)\ket{He}.
\end{equation}
This opens up the following strategy: first, use quantum decoders to prepare the ``codeword" state on the left-hand-side of \Cref{eq:Utransform} and then transform back to the ``syndrome" state which is the form required for Gibbs state preparation.
That is:

\begin{tcolorbox}[
    colback=blue!5,
    colframe=blue!70!black,
    boxrule=1.5pt,
    arc=4pt,
    left=3pt,
    right=3pt,
    top=2pt,
    bottom=2pt
]
\begin{equation}\label{eq:codeword_skeleton_intro}
\begin{aligned}
&
\frac{1}{\sqrt{\abs{\cC}}}\sum_{c\in\cC}
\ket{c}\left(\sum_{e} g(e)\ket{c+e}\right)
\xrightarrow{\mathrm{Decode}}
\frac{1}{\sqrt{\abs{\cC}}}
\sum_{c \in \cC,e}
g(e)\ket{c+e}
\xrightarrow{U_\text{transform}}
\sum_{e} g(e)\ket{He}.
\end{aligned}
\end{equation}
\end{tcolorbox}
Notably, the initial state on the left now precisely matches the intermediate state used in Regev's reduction (formulated with quantum decoders by \cite{chailloux_quantumdecoding_2023}), before the quantum Fourier transform.
It can be prepared by first separately preparing  a superposition over errors and a superposition over codewords and then applying transversal CNOT gates:
\begin{equation}
    \left(\frac{1}{\sqrt{\abs{\cC}}} \sum_{c\in \calC} \ket{c} \right)\otimes \left(\sum_{e} g(e) \ket{e}\right) \rightarrow \frac{1}{\sqrt{\abs{\cC}}}\sum_{c\in\cC}
    \sum_{e} g(e)\ket{c}\ket{c+e}.
\end{equation}
Preparing the superposition over codewords is always efficient because the map from messages to codewords is linear and injective, and thus can be made reversible. 
In contrast, the noise state $\sum_e g(e)\ket{e}$ might not be efficiently preparable.
The efficiency must be considered on a case-by-case basis.

To summarize the above considerations, we have shown that the preparation of the state $\rho_f$ can be reduced to the quantum decoding problem w.r.t. the noise function $g$. 
Our algorithm, which will realize this reduction, is called \emph{quantum decoding HDQI} (QD-HDQI).
QD-HDQI is efficient only if the noise state $\sum_eg(e)\ket{e}$ can itself be prepared efficiently.
In the next sections, we will see that for certain classes of Hamiltonians this can be guaranteed, for example if the terms in the Hamiltonians all commute, or if they anti-commute only in a restricted fashion.
To obtain a practical state preparation procedure for $\rho_f$, one also separately requires a quantum decoding algorithm that is efficient.

\subsubsection{Gibbs state preparation for commuting Hamiltonians}
The advantage of quantum decoders is particularly nicely illustrated in the context of \emph{commuting Hamiltonians}, where the terms $P_i$ all mutually commute.
The key insight is that the Gibbs state factorizes into a simple product form:
\begin{align}
    \exp\left(-\frac{\beta}{2}\cH\right)
    &= \prod_{i=1}^m
    \exp\left(-\frac{\beta}{2}(-1)^{y_i}v_i P_i\right) \\
    &= \mathrm{const.}
    \prod_{i=1}^m
    \left(
        \sqrt{1-p_i}\,\mathds{1}
        -
        (-1)^{y_i}\sqrt{p_i}\,P_i
    \right)\\
    & = \mathrm{const.}\sum_{e\in \mathbb{F}_2^m} f(e) Q_e,
\end{align}
where
\begin{equation}\label{eq:def_tilde_f_intro}
    f(e) \coloneqq \frac{1}{\sqrt{Z_\beta'}} (-1)^{(y+\vec 1)\cdot e} \prod_{i=1}^m \sqrt{(1-p_i)^{(1-e_i)}p_i^{e_i}}
    \, ,
\end{equation}
$p_i\coloneqq \frac{1}{2}\left(1 - \sech(\beta v_i)\right)$ and $Z_\beta'$ is some constant.
Therefore, running QD-HDQI with the noise function $f$ precisely prepares the Gibbs state
\begin{equation}
    \rho_{f} = \frac{1}{\tr[\exp(-\beta\cH)]}\exp\left(-\beta\cH\right)
    \, .
\end{equation}
Notice that the function $f$ defined in \cref{eq:def_tilde_f_intro} is not necessarily well-normalized, given that $Z_\beta'$ might not be equal to $1$.
In \cref{lem:average_normalization}, we show that if we average over the signs $y$ in the Hamiltonian, $Z_\beta'$ remains very close to 1, at least when the associated quantum decoding problem can be solved with small failure probability $\pfail$
\begin{equation}
    \mathbb{E}_{Y\sim\mathrm{Uniform}(\ff_2^m)}\left[ \abs{Z_\beta'-1} \right]
    \leq 2\sqrt{\pfail}
    \, .
\end{equation}
Therefore, we can treat $Z_{\beta}'$ to be essentially $1$.

One advantage of the product noise in \cref{eq:def_tilde_f_intro} is that the associated pilot state can be prepared efficiently: Indeed, we show that it is essentially just a product state, which can be prepared in constant depth.

Another important consequence of the noise in \cref{eq:def_tilde_f_intro} is that the quantum decoding problem involved in QD-HDQI can be understood as the decoding problem of transmitting the symplectic code over a sequence of independent binary-input pure-state classical-quantum channels $\psc[p_i]$ of the form
\begin{equation}
    \ff_2 \ni x \mapsto \sqrt{1-p_i}\ket{x} + \sqrt{p_i}\ket{1-x}
    \, .
\end{equation}
In particular, the noise strength $p_i$ of the pure-state channel is directly determined by the desired inverse temperature of the Gibbs state.
In the setting where the couplings $v_i$ are uniform, the probabilities are all equal $p_i\equiv p$.
This allows us to establish a particularly elegant connection between the achievable temperatures and the threshold of the decoder, i.e., the largest value $p^*$ such that $\forall p<p^*$ the decoding failure probability is asymptotically vanishing.
\begin{theorem}[Gibbs sampling for commuting Hamiltonians, informal version of \Cref{cor:hdqi_end_to_end}]
\label{thm:commthm}
Consider a family of commuting Pauli Hamiltonians
\begin{equation}
    \mathcal H_y
    =
    \sum_{i=1}^m (-1)^{y_i} P_i,
    \qquad
    y\in\mathbb F_2^m,
\end{equation}
with associated families of symplectic codes.
Suppose these codes admit polynomial-time classical or quantum codeword decoders with decoding threshold $p^\ast$ over, respectively, the binary symmetric channel or the binary pure-state channel.
Then, for every
\begin{equation}
    \beta
    <
    \beta^\ast
    \coloneqq
    \asech(1-2p^\ast),
\end{equation}
QD-HDQI efficiently prepares the Gibbs state of $\mathcal H_y$ with vanishing trace-distance error with probability $1-o(1)$ over a uniformly random choice $y\in\mathbb F_2^m$.
\end{theorem}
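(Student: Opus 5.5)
The plan is to combine three ingredients already set up above: the product-form decomposition of $\exp(-\beta\cH/2)$, the codeword-to-syndrome transformation $U_{\text{transform}}$ of \cref{lem:codeword_to_syndrome}, and the averaged normalization bound of \cref{lem:average_normalization}.

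\textbf{Identifying the noise function.} With $v_i\equiv 1$ we have $p_i\equiv p=\tfrac12(1-\sech\beta)$. Inverting, the condition $\beta<\beta^\ast=\asech(1-2p^\ast)$ is equivalent to $p<p^\ast$, because $\beta\mapsto\tfrac12(1-\sech\beta)$ is increasing on $[0,\infty)$. So the decoder is run strictly below threshold, and its failure probability $\pfail(m)$ tends to $0$.

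\textbf{Running QD-HDQI.} I would run QD-HDQI with $g(e)=f(e)\eta(Q_e)$, where $f$ is as in \cref{eq:def_tilde_f_intro}, but with the normalization $Z'_\beta$ dropped.
\begin{enumerate}
\item Prepare the product pilot state $\bigotimes_i\big(\sqrt{1-p}\ket0-(-1)^{y_i}\sqrt p\ket1\big)$ in constant depth.
\item Attach the phases $\eta(Q_e)$. For commuting Paulis, $Q_e$ is a product of commuting factors, so the phase should be computable efficiently. I would check, or arrange via the choice of $U_{\mathrm{sympl}}$, that $\eta$ is a product/character phase; otherwise I would simply absorb it, since $U_{\text{transform}}$ is built to act on $g$.
\item Prepare the uniform codeword superposition, apply transversal CNOTs, and run the decoder on the register $\ket{c}\ket{\Psi_c}$.
\end{enumerate}

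\textbf{Relating the noisy-codeword states to the channels.} Conditioned on $c$, the state $\ket{\Psi_c}$ is exactly $\bigotimes_i\psc[p](c_i)$, up to the known sign phases $(-1)^{(y_i+1)e_i}$. These phases are local unitaries $Z^{y_i+1}$ and can be undone before decoding. Consequently a quantum decoder with threshold $p^\ast$ on $\psc$ applies directly. Alternatively, measuring in the computational basis yields a $\bsc[p]$ output with the same marginal distribution, so a classical decoder with threshold $p^\ast$ on $\bsc$ also applies.

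\textbf{Error analysis.} By the gentle-measurement/coherent-decoding argument, the uncomputed state is $O(\sqrt{\pfail})$-close in $\ell_2$ to the ideal $\frac{1}{\sqrt{|\cC|}}\sum_{c,e}g(e)\ket{c+e}$. Applying $U_{\text{transform}}$ and then $U_{\mathrm{sympl}}^\dagger$ produces a purification of $A_fA_f^\dagger/2^n$. The unnormalized ideal has norm $\sqrt{Z'_\beta}$; this is the content of \cref{lem:average_normalization}, where the failure of $e\mapsto He$ to be injective shows up as interference between distinct errors with the same syndrome. So the output state is within trace distance $O(\sqrt{\pfail}+|Z'_\beta-1|)$ of the normalized $\rho_f$, which equals $e^{-\beta\cH_y}/\tr e^{-\beta\cH_y}$.

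\textbf{Passing to high probability over $y$.} The main obstacle is that $Z'_\beta$, and the decoder's failure probability, may depend on the sign $y$, and only averaged control is available. I would first check that $\pfail$ is independent of $y$: the signs enter only as known local phases, removable as above, so this holds. \Cref{lem:average_normalization} gives $\mathbb{E}_y|Z'_\beta-1|\le2\sqrt{\pfail}$. Markov's inequality then gives $|Z'_\beta-1|\le\pfail^{1/4}$ except with probability $2\pfail^{1/4}$ over $y$. Both quantities are $o(1)$, which yields vanishing trace-distance error with probability $1-o(1)$ over $y$.

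The delicate part is bookkeeping the normalization: one must make sure the trace-distance error bound is stated relative to the correctly normalized target. I would handle this by comparing the normalized output to the normalized $\rho_f$ via the standard inequality $\|\psi/\|\psi\|-\phi\|\le2\|\psi-\phi\|$.
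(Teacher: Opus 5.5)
Your skeleton matches the paper's: inverting $\beta<\beta^\ast\iff p<p^\ast$, the constant-depth product pilot state, the reduction to $\bigotimes_i\psc[p]$, making $\pfail$ independent of $y$ via the $Z^{y}$-symmetrized decoder, then \cref{lem:average_normalization} and Markov. However, the error analysis has a genuine gap at the claim that ``by the gentle-measurement/coherent-decoding argument, the uncomputed state is $O(\sqrt{\pfail})$-close in $\ell_2$ to the ideal.''

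Gentle measurement controls each branch $\ket{\Psi_c}\ket{c}$ separately. After uncomputation, however, all branches sit on the same reference state $\ket{0}$. The per-branch error vectors $(\mathds{1}-\Pi_c)\ket{\Psi_c}$, with $\Pi_c=U_{\mathrm{dec}}^\dagger(\mathds{1}\otimes\proj{c})U_{\mathrm{dec}}$, therefore add coherently rather than in quadrature. This is exactly the interference term $\alpha$ of \cref{thm:analysis_codeword}. The correct per-$y$ bound is $\norm{\tilde\rho-\rho_f}_1\leq\pfail+2\alpha+\alpha^2$, and a priori $\alpha$ is only bounded by about $\sqrt{\abs{\cC}\,\pfail}$.

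Your own bookkeeping shows the claim cannot hold. The ideal vector has norm $\sqrt{Z'_\beta}$, while the algorithm's state has norm at most one. An $O(\sqrt{\pfail})$ bound for every $y$ would therefore force $Z'_\beta=1+O(\sqrt{\pfail})$ for every $y$, which would make your Markov step superfluous. That is false in general. For the ferromagnetic choice $y=\vec 1$ of a classical spin glass one gets $Z'_\beta=\sum_{c\in\cC}(\tanh\beta)^{\abs{c}}$, which need not be close to $1$ even when $\pfail\to 0$.

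The missing idea is that the coherent uncomputation error, and not only the normalization, must be averaged over the signs. With the symmetrized decoder $VZ^{y}$ and $Z^{y}\ket{\Psi^y_c}=(-1)^{y\cdot c}\ket{\Psi^0_c}$, the cross terms between branches $c_1\neq c_2$ in $\alpha^2$ acquire the character $(-1)^{y\cdot(c_1+c_2)}$. This averages to zero, so $\mathbb{E}_Y[\alpha^2]=\pfail$ (\cref{lem:average_alpha_codeword}). Combined with Jensen and \cref{lem:average_normalization}, this gives $\mathbb{E}_Y\norm{\tilde\rho-e^{-\beta\cH_Y}/Z_\beta}_1\leq 6\sqrt{\pfail}$ (\cref{thm:analysis_commuting}). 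Markov must then be applied to this total error, not to $\abs{Z'_\beta-1}$ alone.

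A smaller point concerns the phases. $\eta(Q_e)$ is not a character on $\ff_2^m$ in general: for $P_1=X\otimes Z$, $P_2=Z\otimes X$ one has $\eta(Q_{11})=-1\neq\eta(Q_{10})\eta(Q_{01})$. So removing the local signs $Z^{y+\vec 1}$ does not by itself expose the $\psc$ structure. You also need $U_{\mathrm{phase}}^\dagger$ and the identity $\eta(Q_{c+e})=\eta(Q_c)\eta(Q_e)$ for $c\in\cC$, which holds because the terms commute and $Q_c\propto\mathds{1}$.
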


This correspondence makes the advantage of quantum decoding clear: the decoding threshold of any quantum decoder is fundamentally limited by the Holevo capacity of this channel, whereas classical decoders are limited by the Shannon capacity of the corresponding binary symmetric channel.
Consequently, any quantum decoder operating beyond the Shannon limit immediately translates into Gibbs-state preparation at temperatures unattainable with any classical decoder.

Let us briefly compare the result in \cref{thm:commthm} to the achievable temperatures of the HDQI algorithm.
Theorem~3 of \cite{schmidhuber2025hamiltoniandecodedquantuminterferometry}
shows that approximating the Gibbs state to trace distance $\delta$ can be achieved within the generic HDQI construction with a polynomial of degree
$\deg(\cP) \leq 1.12\beta\|\cH\|+O(\log(1/\delta))$.
Suppose the classical decoder used by HDQI is guaranteed to correct every
error of weight at most $\tau m$. 
The same decoder, when applied to a
$BSC[p]$, succeeds with probability $1-o(1)$ for every $p<\tau$, and
therefore has BSC threshold at least $\tau$. For extensive Hamiltonians
with $\|\cH\|=\Theta(m)$, the generic HDQI guarantee therefore gives an
achievable inverse temperature scaling linearly in $\tau$, whereas
QD-HDQI using the same classical decoder achieves inverse temperatures $\beta$ up to
$\asech(1-2\tau)=\Theta(\sqrt{\tau})$ for small $\tau$. This is a quadratic improvement in inverse temperature.
Furthermore, as $\tau$ approaches $1/2$, the inverse temperature $\beta$ achieved by QD-HDQI diverges, but not the one achieved by HDQI.
Quantum decoders can provide a further improvement by attaining a larger
threshold on the corresponding pure-state channel.

Let us now briefly comment on a peculiar property of the QD-HDQI algorithm which strongly differs from HDQI.
Notice that the noise function $f$ has support on \emph{every} error $e\in\ff_2^m$.
This is in stark contrast to the HDQI algorithm, where $\supp(f)$ only contained errors that are information theoretically perfectly decodable.
Therefore, perfect deterministic decoding will no longer be possible.
In the codeword picture, this translates to the statement that the overlap $\langle \Psi_c| \Psi_{c'}\rangle$ is generally nonzero, and therefore, the codeword states may not be perfectly distinguishable.
One of our contributions is thus to analyze QD-HDQI in the setting in which the quantum decoder is allowed to fail with some probability $p_{\mathrm{fail}}$.
We prove a robustness theorem \ref{thm:analysis_codeword} showing that the trace-distance error of the prepared state can be bounded directly in terms of the decoder's failure probability. 
In particular, we highlight that this robustness leads to a strikingly different behavior of QD-HDQI compared to the original HDQI: It can directly prepare Gibbs states of commuting Hamiltonians instead of approximating it with a polynomial filter.

\subsubsection{Extension to noncommuting Hamiltonians.}
For commuting Hamiltonians, the noise state $\sum_eg(e)\ket{e}$ in QD-HDQI, which plays a role analogous to the pilot state in HDQI, has a substantially simpler form than the latter. 
It is therefore natural to ask if writing the state in full generality in our algorithm has bought us anything. After all, 
the efficiency of QD-HDQI for commuting Hamiltonians seems to rely on substantial structure availed by commutativity: the Gibbs coefficient state factorizes into a simple product state, and the corresponding quantum decoding problem reduces to transmitting the symplectic code through independent binary-input pure-state channels. Neither property need hold for a general non-commuting Hamiltonian.

Perhaps surprisingly, we show that our algorithm nevertheless extends beyond the commuting setting. For a non-commuting Hamiltonian, let us group its Pauli terms so that only Paulis within the same group may anticommute:
\begin{equation}\label{eq:minimalnoncommuting}
    \cH
    = \sum_{\gamma=1}^C \cH_{\gamma},
    \qquad
    \cH_{\gamma}
    \coloneqq \sum_{j\in I_\gamma}(-1)^{y_j}v_jP_j,
\end{equation}
where $C\geq 1$ and $m_\gamma\coloneqq |I_\gamma|$. Furthermore, we assume that
\begin{enumerate}
    \item every Pauli in $I_\gamma$ commutes with every Pauli in
    $I_\delta$ whenever $\gamma\neq\delta$.
    \item $\max_\gamma m_\gamma$ is small, so typically $\mathcal O(\log m)$ or even $\mathcal O(1)$.
\end{enumerate}
We note that in the language of \cite{schmidhuber2025hamiltoniandecodedquantuminterferometry}, these conditions correspond to requiring that the connected components of the anti-commutation graph are of logarithmic size. This is precisely the regime in which the original HDQI algorithm was known to be efficient for Gibbs sampling. Ref.~\cite{schmidhuber2025hamiltoniandecodedquantuminterferometry} further exhibited $p$-semiclassical spin glasses as a concrete family satisfying this condition: starting from a classical $Z$-type spin glass, each Hamiltonian term is independently replaced by an $X$-type term with probability $p$, and for sufficiently small constant $p$ the resulting anti-commutation graph has logarithmic-size connected components with high probability.
We show that QD-HDQI remains efficient throughout this regime. 

The intuition for why our algorithm still works in this regime is that the clusters $H_\gamma$ of Pauli terms can in some sense behave similarly to a single Pauli term in the commuting case. Since the different clusters mutually commute, the exponential function factorizes as
\begin{align}
    \exp\left(-\frac{\beta}{2}\cH\right)
    = 
    \prod\limits_{\gamma=1}^C
    \exp\left(-\frac{\beta}{2}\cH_\gamma\right)
    \propto
    \prod\limits_{\gamma=1}^C
    \sum_{e_\gamma\in\ff_2^{m_\gamma}}(-1)^{y_\gamma\cdot e_\gamma}h_\gamma(e_\gamma)Q^{(\gamma)}_{e_\gamma} 
\end{align}
where we partitioned $e=(e_\gamma)_{\gamma}$, $y=(y_\gamma)_{\gamma}$ and $Q_e=\prod_\gamma Q_{e_\gamma}^{(\gamma)}$.
The small size of each block allows its function $h_\gamma$ to be efficiently computed on a classical computer through a Walsh-Hadamard transformation of the blocks $\exp(-\beta\cH_{\gamma}/2)$ (seen as a function of $y_\gamma$).

This block-product form is inherited by the noise state, making it efficiently preparable.
The corresponding decoding problem becomes transmission of the symplectic code through a product of block quantum channels, at least under an appropriately chosen Clifford change of basis (see \cref{prop:block_pure_state_channel} for details).
More concretely, there exist block pure-state channels
\begin{equation}\label{eq:block_noise}
    \mathsf W_\gamma:
    \ff_2^{m_\gamma}\ni x \mapsto \sum_{e_{\gamma}\in\ff_2^{m_\gamma}} h_\gamma'(e_\gamma) \ket{x + e_\gamma}
    \, .
\end{equation}
for some appropriately chosen $h_{\gamma}'$, such that the quantum decoding problem is precisely that of recovering
$c\in\cC$ after transmission through
\begin{equation}\label{eq:block_channel_intro}
    \mathsf W_1\otimes\mathsf W_2\otimes\cdots\otimes\mathsf W_C.
\end{equation}
The channel is independent across blocks, although it need not be identical
from one block to another and it may contain arbitrary correlations within
each block. 

This yields an efficient QD-HDQI Gibbs-state preparation algorithm for non-commuting Hamiltonians in this regime:
\begin{theorem}[QD-HDQI for non-commuting Hamiltonians, informal]\label{thm:analysis_clustered_inf}
    Consider a non-commuting Hamiltonian $\cH$ of the form \Cref{eq:minimalnoncommuting}, with $\max_\gamma m_\gamma = \mathcal O(\log m)$ and some target inverse temperature $\beta\geq 0$. Assume we have a polynomial-time classical or quantum decoding algorithm for the associated decoding problem which fails with sufficiently low probability $\pfail$. QD-HDQI in conjunction with this decoder prepares a state $\tilde{\rho}$ which is close to the Gibbs state $\rho_{\beta}(\calH_Y)$ in expectation over a uniformly random choice of signs in $\cH$:
    \begin{equation}
        \mathbb{E}_{Y\sim\mathrm{Uniform}(\ff_2^m)}
        \left[ \norm{\tilde{\rho} - \rho_{\beta}(\calH_Y)}_1 \right]
        \leq 6\sqrt{\pfail}
        \, .
    \end{equation}
\end{theorem}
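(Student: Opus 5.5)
We will prove the statement by splitting the error into two pieces and bounding each with results already in hand: (i) the error from running QD-HDQI with an imperfect decoder, via the robustness theorem \ref{thm:analysis_codeword}; and (ii) the error from the fact that the block-product noise function is not exactly normalized, via an averaged normalization bound in the spirit of \cref{lem:average_normalization}.

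\textbf{Step 1: the noise function.} Since distinct clusters commute,
\begin{equation}
\exp(-\beta\cH_Y/2)=\prod_\gamma \exp(-\beta\cH_{\gamma,Y}/2).
\end{equation}
Expanding each factor in the Pauli basis of products of the $P_j$, $j\in I_\gamma$, gives a function of the form
\begin{equation}
f_Y(e)=\frac{1}{\sqrt{Z'_\beta(Y)}}\prod_\gamma(-1)^{y_\gamma\cdot e_\gamma}h_\gamma(e_\gamma).
\end{equation}
Here $h_\gamma$ does not depend on $y_\gamma$. This is the key point: flipping the sign of $P_j$ conjugates the block by a Pauli that anticommutes only with $P_j$, which multiplies the coefficient of $Q^{(\gamma)}_{e_\gamma}$ by $(-1)^{e_j}$. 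We choose $h_\gamma$ so that $\sum_e|\prod_\gamma h_\gamma(e_\gamma)|^2=1$. Then $\rho_{f_Y}$ equals the Gibbs state exactly, provided $f_Y$ is correctly normalized; otherwise $A_{f_Y}A_{f_Y}^\dagger/2^n$ differs from the Gibbs state only by the scalar $Z'_\beta(Y)$. Since $m_\gamma=\mathcal O(\log m)$, each $h_\gamma$ can be computed by a Walsh–Hadamard transform of size $2^{m_\gamma}=\mathrm{poly}(m)$. The pilot state is then a product over blocks of $\mathrm{poly}$-size states and is efficiently preparable. By \cref{prop:block_pure_state_channel}, after the Clifford change of basis, the decoding problem is transmission through $\mathsf W_1\otimes\cdots\otimes\mathsf W_C$, so the assumed decoder applies.

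\textbf{Step 2: combining the errors.} QD-HDQI actually runs with the unnormalized product function $g(e)=\prod_\gamma(-1)^{y_\gamma\cdot e_\gamma}h_\gamma(e_\gamma)$, which has $\norm{g}_2=1$. The robustness theorem \ref{thm:analysis_codeword} then gives an output $\tilde\rho$ within $O(\sqrt{\pfail(Y)})$ of $\rho_g$, where $\rho_g=Z'_\beta(Y)\,\rho_\beta(\cH_Y)$. I do not yet know whether $\rho_g$ is normalized, so I will compare normalized states: by the triangle inequality,
\begin{equation}
\norm{\tilde\rho-\rho_\beta}_1\le \norm{\tilde\rho-\rho_g}_1+\abs{Z'_\beta-1}.
\end{equation}
Taking expectations over $Y$ and applying Jensen to pass $\mathbb E\sqrt{\pfail(Y)}\le\sqrt{\mathbb E\,\pfail(Y)}=\sqrt{\pfail}$ reduces the claim to showing $\mathbb E_Y|Z'_\beta(Y)-1|\le 2\sqrt{\pfail}$, which gives a constant like $6$ in total.

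\textbf{Step 3: the normalization bound (main obstacle).} We need the analogue of \cref{lem:average_normalization} for block noise. The plan is to write
\begin{equation}
Z'_\beta(Y)=\tr[A_gA_g^\dagger]/2^n=\sum_{e,e':He=He'}g(e)\overline{g(e')}\,\eta.
\end{equation}
Summing over $e$ and $e'=e+c$ with $c\in\cC$ expresses $Z'_\beta-1$ as $\sum_{c\neq 0}\langle\Psi_0|\Psi_c\rangle$-type overlap terms, weighted by phases $(-1)^{y\cdot c}$ and Pauli commutation phases. Averaging over uniform $y$, cross terms with distinct codewords should cancel. Then Cauchy–Schwarz bounds $\mathbb E|Z'-1|\le\sqrt{\mathbb E|Z'-1|^2}$ by the total off-diagonal overlap mass. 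For any measurement that recovers $c$ from $\ket{\Psi_c}$, this overlap mass is controlled by its failure probability, via the standard relation between pairwise fidelities and distinguishability (in the form $\sum_{c'\ne c}|\langle\Psi_c|\Psi_{c'}\rangle|^2\lesssim 4\pfail$).

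The delicate part is that the block phases $\eta(Q_e)$ no longer reduce to signs. Products within a block of anticommuting Paulis pick up order-dependent phases, so we must verify that the uniform-$y$ average still kills cross terms. I expect this to follow from the block-commutation assumption, since phases only arise inside blocks and the $y_\gamma$-dependence is a pure character $(-1)^{y_\gamma\cdot e_\gamma}$, but it is the step to check carefully. If it fails, I would fall back to redoing the computation blockwise in the Clifford-rotated basis of \cref{prop:block_pure_state_channel}, where the noise is a genuine product of pure-state block channels.
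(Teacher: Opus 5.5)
Your decomposition is the same as the paper's: the robustness theorem applied to the normalized product function $g$, plus the normalization error $\abs{Z'-1}$, combined by the triangle inequality and Jensen. However, neither of the two bounds you feed into it is justified.

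First, \cref{thm:analysis_codeword} does not give an error of $O(\sqrt{\pfail})$ for a fixed sign vector. It gives $\pfail+2\alpha+\alpha^2$, where $\alpha$ measures the \emph{coherent} sum $\sum_{c\neq\bar c}\ket{f_{c,\bar c}}$ of the wrong-decoding branches. These branches can interfere constructively, so for fixed $y$ Cauchy--Schwarz only yields $\alpha^2\le(\abs{\cC}-1)\pfail$. The missing ingredient is \cref{lem:average_alpha_codeword}. You must use the sign-symmetrized decoder: apply $Z^{y}$, then a fixed decoder for $y=0$. Then $\pfail$ does not depend on $y$, and $\ket{f^{y}_{c,\bar c}}=(-1)^{y\cdot c}\ket{f^{0}_{c,\bar c}}$. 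Averaging over uniform $Y$ kills all cross terms and gives $\mathbb E_Y[\alpha^2]=\pfail$, hence $\mathbb E_Y[\alpha]\le\sqrt{\pfail}$. This is exactly why the guarantee holds only on average over $Y$. Once the normalization term is bounded, it gives $\pfail+2\sqrt{\pfail}+\pfail+2\sqrt{\pfail}\le 6\sqrt{\pfail}$.

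Second, your plan for the normalization term rests on a false inequality. With $r(d)\coloneqq\braket{\Psi_d}{\Psi_0}$ for the $y=0$ states, one has $Z'_{\beta,y}=\sum_{d\in\cC}(-1)^{y\cdot d}r(d)$, so $\mathbb E_Y\abs{Z'-1}^2=\sum_{d\neq 0}\abs{r(d)}^2$. The relation $\abs{r(d)}^2\lesssim 4\pfail$ is fine for a single pair, but the sum over all $d\neq 0$ is not controlled by the multi-hypothesis failure probability.
\begin{itemize}
\item A geometrically uniform noisy-codeword ensemble with $r(d)=\abs{\cC}^{-1/2}$ for all $d\neq 0$ has $\sum_{d\neq 0}\abs{r(d)}^2\approx 1$, yet its optimal failure probability is only $\approx\abs{\cC}^{-1/2}$.
\item The same failure occurs for Hamiltonians. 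For commuting $Z$-type terms, $\mathbb E_Y\abs{Z'-1}^2=\sum_{d\in\cC\setminus\{0\}}\tanh(\beta)^{2\abs{d}}$, which is dominated by rare unfrustrated sign patterns with huge $Z'_{\beta,y}$. For random rate-$1/2$ codes this is exponentially large at $\tanh\beta=0.7$, which still lies below the Holevo threshold.
\end{itemize}

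\cref{lem:average_normalization} avoids the second moment in three steps. The values $Z'_{\beta,y}$ are exactly the eigenvalues of the circulant Gram matrix of $\{\ket{\Psi_c}\}$, so $\mathbb E_Y[Z']=1$. The pretty good measurement is optimal for this geometrically uniform ensemble, with success probability $(\mathbb E_Y\sqrt{Z'})^2\ge 1-\pfail$. Cauchy--Schwarz applied to $\abs{\sqrt{Z'}-1}(\sqrt{Z'}+1)$ then gives $\mathbb E_Y\abs{Z'-1}\le 2\sqrt{\pfail}$.

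That lemma is already stated for arbitrary, not necessarily commuting, Pauli Hamiltonians, so nothing needs redoing. The phases $\eta(Q_e)$ that worry you are $y$-independent and are absorbed into $h(e)\eta(Q_e)$, and $\ket{\Psi_c}=X^{c}\ket{\Psi_0}$ holds regardless.

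A smaller point in Step 1: a Pauli anticommuting with $P_j$ alone need not exist when the symplectic vectors of a block are dependent (e.g.\ $X,Y,Z$ on one qubit). \cref{lem:independent_block_expansion} handles this case using only the conjugations $R_t$ with $t\in\cC^\perp$ and a Fourier transform in $y$.
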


\subsubsection{Dequantization of HDQI in conjunction with USD} 
To explore the full power of QD-HDQI, we now consider concretely instantiating the quantum decoders used for the decoding step.
First, we consider using QD-HDQI in conjunction with a quantum decoder based on unambiguous state discrimination (USD) followed by erasure decoding, which is due to Chailloux and Tillich\cite{chailloux_quantumdecoding_2023}.
The first step applies a coherent symbol-wise USD to the noisy codeword, yielding an all-or-nothing outcome for each bit: either the bit is recovered perfectly or all information about it is lost.
The problem of reconstructing the lost bits from the recovered bits can then in turn be viewed as an erasure decoding problem, which can be solved optimally in polynomial time using Gaussian elimination.
In particular, this decoder is efficient for any choice of the underlying symplectic code.
Due to this lack of required structure, QD-HDQI used in conjunction with this USD decoder, can be applied to much broader classes of Hamiltonians which lack an efficient quantum decoder for the underlying symplectic code.

\begin{corollary}[Temperatures achieved by HDQI+USD in terms of BEC threshold, informal]\label{corr:hdqiusdthreshold_intro}
    Consider a family of commuting Hamiltonians \footnote{While it is also possible to apply USD to the non-commuting Hamiltonians we consider, the exact analysis will depend on the precise form of the $h_\gamma$.} where the coefficients are uniformly randomly chosen from $\{+1,-1\}$ (i.e., $v_i=1$ for all $i$ and $y$ is uniformly random).
    Assume that the associated symplectic codes exhibit a threshold $\epsilon^*$ when transmitted over the erasure channel and using Gaussian elimination decoding. QD-HDQI in conjunction with USD can efficiently prepare Gibbs states of temperature $\beta$ with vanishing trace distance error for any $\beta\in [0,\beta^*)$ where
    \begin{equation}
        \beta^* = \atanh(\epsilon^*)
        \, .
    \end{equation}
\end{corollary}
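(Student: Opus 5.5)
The plan is to reduce to \cref{thm:commthm} (i.e.\ \cref{cor:hdqi_end_to_end}) by computing the erasure probability that symbol-wise USD induces on the binary pure-state channel $\psc[p]$, and then translating the BEC threshold into a temperature. With $v_i\equiv 1$, every symbol passes through the same channel $\psc[p]$ with $p=\frac12(1-\sech\beta)$. Its two output states are $\ket{\phi_x}=\sqrt{1-p}\ket{x}+\sqrt{p}\ket{1-x}$ for $x\in\ff_2$.

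\textbf{Step 1: erasure probability of optimal USD.} The optimal unambiguous discrimination of two equiprobable pure states succeeds with probability $1-\abs{\braket{\phi_0}{\phi_1}}$; this is the Ivanovic--Dieks--Peres bound. Here $\braket{\phi_0}{\phi_1}=2\sqrt{p(1-p)}$, so the induced erasure probability is
\begin{equation}
\epsilon(\beta)=2\sqrt{p(1-p)}=\sqrt{1-(1-2p)^2}=\sqrt{1-\sech^2\beta}=\tanh\beta .
\end{equation}
The key structural point is that a conclusive USD outcome reveals the transmitted bit with certainty. Moreover, the measurement acts symbol-wise on a product channel. Hence, conditioned on the codeword, the erasure pattern is i.i.d.\ Bernoulli$(\tanh\beta)$ and independent of the codeword. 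The composite decoder is therefore exactly a $\bec[\tanh\beta]$ followed by Gaussian elimination. Its failure probability is the BEC failure probability at erasure rate $\tanh\beta$.

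\textbf{Step 2: implementing it coherently.} The uncomputation step must act on the superposition $\frac{1}{\sqrt{\abs{\cC}}}\sum_c\ket{c}\ket{\Psi_c}$ without disturbing the noisy-codeword register beyond what the robustness theorem \ref{thm:analysis_codeword} allows. I plan to implement each USD as an isometry with an ancilla flag and an outcome register. Gaussian elimination then runs reversibly on the unerased positions to compute $c$, which is XORed into the first register to clear it. Finally the USD isometries are reversed. On the branch where decoding fails (the erasures hit a non-recoverable pattern), the result is a bad component. Its squared norm is at most $\pfail$, and \cref{thm:analysis_codeword} converts this into a trace-distance bound.

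\textbf{Step 3: concluding.} For $\beta<\atanh(\epsilon^*)$ we have $\tanh\beta<\epsilon^*$, so $\pfail=o(1)$ by assumption. \Cref{lem:average_normalization} makes $Z'_\beta$ concentrate at $1$ on average over uniform $y$. \cref{cor:hdqi_end_to_end} then gives vanishing trace distance error with probability $1-o(1)$ over $y$. Equivalently, the USD decoder has $\psc$ threshold $p^*$ with $2\sqrt{p^*(1-p^*)}=\epsilon^*$. Substituting into $\asech(1-2p^*)$ gives $\asech\bigl(\sqrt{1-\epsilon^{*2}}\bigr)=\atanh(\epsilon^*)$, which agrees with the claimed $\beta^*$.

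\textbf{Main obstacle.} I expect the hardest step to be Step 2: making sure that coherent USD plus erasure decoding genuinely realises the ``Decode'' arrow of \cref{eq:codeword_skeleton_intro}, with error controlled by $\pfail$. The clean independence claim of Step 1 also needs the noise to be exactly product. This holds for commuting Hamiltonians, where the sign factor $(-1)^{(y+\vec 1)\cdot e}$ only flips the relative phase within each symbol and leaves the overlap magnitude unchanged. I would handle this by first checking that the phase can be absorbed into a per-symbol unitary known from $y$, before applying the USD.
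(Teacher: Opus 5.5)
Your proposal is correct and follows the paper's route. The USD isometry degrades $\psc[p]$ to $\bec[2\sqrt{p(1-p)}]=\bec[\tanh\beta]$, so the BEC threshold $\epsilon^*$ becomes the $\psc$ threshold $p^*=\frac12\bigl(1-\sqrt{1-(\epsilon^*)^2}\bigr)$, and \cref{cor:hdqi_end_to_end} then gives $\asech(1-2p^*)=\atanh(\epsilon^*)$.

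Your Step~2 and phase discussion are unnecessary, because that corollary already covers two points where your sketch is loose. First, in the coherent uncomputation the interference term $\alpha$ enters, not just $\pfail$, and it is controlled only on average over $y$. Second, the phases include not only the per-symbol signs but also the generally non-product $\eta(Q_e)$, which are removed by $Z^{y\oplus 1}U_{\mathrm{phase}}^\dagger$ using $\eta(Q_{c+e})=\eta(Q_c)\eta(Q_e)$ on codewords.
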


However, we then realize that we can exactly match this performance with a classical simulation of a Gibbs sampler. The classical simulation is possible since we only consider commuting Hamiltonians, whose eigenbasis consists of stabilizer states, which have efficient classical descriptions. Our classical algorithm thus outputs such a description of a state sampled from the ensemble $\rho_{\beta}(\calH)$, for any $\beta$ for which QD-HDQI in conjunction with USD can prepare the Gibbs state.

A related phenomenon was observed by Chailloux and Tillich~\cite{chailloux_quantumdecoding_2023} in the context of the Short Codeword Problem: when their USD-based quantum decoder is used within Regev's reduction, the resulting quantum algorithm finds short codewords only down to the same weight achieved by the classical Prange algorithm.
To our knowledge, our result is the first manifestation of this dequantization phenomenon in the context of Gibbs-state preparation.

\begin{theorem}[Classical dequantization of HDQI with USD, informal version of \Cref{theorem:trace_distance_error} and \Cref{thm:expruntimerejectionsampler}] \label{thm:intro-classical-dequantization} Let \( \mathcal H_y = \sum_{i=1}^m(-1)^{y_i}v_iP_i \) be a commuting \(n\)-qubit Pauli Hamiltonian, and let \(t_i=\tanh(\beta v_i)\). There is a classical rejection sampler that outputs a classical description of a stabilizer state whose average density matrix \(\sigma_{\beta,y}\) satisfies \[ \mathbb E_y \left[ \left\| \sigma_{\beta,y} - \rho_{\beta}(\calH_y) \right\|_1\right] \leq 2p_{\mathrm{dep}}, \] 
where
\begin{equation}p_{\text {dep }}:=\underset{S \leftarrow \operatorname{Ber}(\mathbf{t})}{\operatorname{Pr}}\left[ \operatorname{rank}(H_S)<|S|\right]
\end{equation}
and $H_S$ is the matrix whose columns are the symplectic
representations of the Pauli terms \(\{P_i:i\in S\}\).
Its expected runtime per sample is $O\left(\frac{\text{poly}(m,n)}{1-p_{dep}}\right)$. In particular, the sampler runs in expected polynomial time
whenever $1-p_{\mathrm{dep}}$ is at least inverse-polynomial
in $m$ and $n$.
\end{theorem}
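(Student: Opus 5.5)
We construct the sampler from the expansion of the Gibbs state of a commuting Hamiltonian, using $\tanh$ rather than the $\sech$ parametrisation of the pilot state. Since the $P_i$ commute and square to the identity, $e^{-\beta(-1)^{y_i}v_iP_i}=\cosh(\beta v_i)\bigl(\mathds 1-(-1)^{y_i}t_iP_i\bigr)$, so
\begin{equation*}
\exp(-\beta\cH_y)\propto\prod_{i=1}^m\bigl(\mathds 1-(-1)^{y_i}t_iP_i\bigr)=\sum_{S\subseteq[m]}\Bigl(\prod_{i\in S}(-1)^{y_i+1}t_i\Bigr)Q_S ,
\end{equation*}
where $Q_S=\prod_{i\in S}P_i$. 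We read this as a mixture over random subsets $S\sim\mathrm{Ber}(\mathbf t)$ (each $i$ included independently with probability $t_i$) of the operators $\prod_{i\in S}\tfrac12(\mathds 1-(-1)^{y_i}P_i)$. Indeed, $\mathds 1-(-1)^{y_i}t_iP_i=(1-t_i)\mathds 1+t_i(\mathds 1-(-1)^{y_i}P_i)$, and $\tfrac12(\mathds 1-(-1)^{y_i}P_i)$ is the projector onto the $(-1)^{y_i+1}$ eigenspace of $P_i$. Hence
\begin{equation*}
\exp(-\beta\cH_y)\propto\sum_S w(S)\,2^{|S|}\,\Pi_{S,y},\qquad w(S)=\prod_{i\in S}t_i\prod_{i\notin S}(1-t_i),
\end{equation*}
where $\Pi_{S,y}$ is the projector onto the joint eigenspace with eigenvalues $(-1)^{y_i+1}$ for $i\in S$.

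\textbf{The sampler.} Draw $S\sim\mathrm{Ber}(\mathbf t)$ and compute $\operatorname{rank}(H_S)$ by Gaussian elimination.
\begin{itemize}
\item If $\operatorname{rank}(H_S)<|S|$, reject and resample.
\item Otherwise the $P_i$, $i\in S$, are independent commuting Paulis. Then $\Pi_{S,y}$ is a nonzero stabilizer code projector of rank $2^{n-|S|}$, and $\Tr\Pi_{S,y}=2^{n-|S|}$ regardless of $y$. Output a stabilizer description of $\Pi_{S,y}/2^{n-|S|}$, for example the generators $\{(-1)^{y_i+1}P_i\}_{i\in S}$ completed by random logical operators to a pure stabilizer state.
\end{itemize}
On acceptance the output equals $\Pi_{S,y}/2^{n-|S|}$, and the factor $2^{|S|}2^{n-|S|}=2^n$ is constant across $S$. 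So conditioned on acceptance, the mixture is exactly $\sum_{S\text{ indep}}w(S)\,2^{|S|}\Pi_{S,y}$ normalised. This is precisely the Gibbs expansion with the dependent subsets removed. Each trial costs $\mathrm{poly}(m,n)$ by Gaussian elimination and symplectic stabilizer completion. The number of trials is geometric with success probability $1-p_{\mathrm{dep}}$, which gives the stated expected runtime.

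\textbf{Error bound.} Write $\exp(-\beta\cH_y)/\prod\cosh(\beta v_i)=A_y+B_y$, where $A_y=\sum_{S\text{ indep}}w(S)2^{|S|}\Pi_{S,y}$ and $B_y$ is the sum over dependent $S$. We have $\Tr A_y=2^n(1-p_{\mathrm{dep}})$, and $\sigma_{\beta,y}=A_y/\Tr A_y$. For dependent $S$, the operator $2^{|S|}\Pi_{S,y}=\prod_{i\in S}(\mathds 1-(-1)^{y_i}P_i)$ is either zero or $2^{|S|}$ times a projector of rank $2^{n-\operatorname{rank}}$. Its trace depends on $y$: it vanishes unless $y$ satisfies the consistency constraints imposed by the linear dependencies.

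So the plan is to bound $\mathbb E_y\|B_y\|_1/\Tr(A_y+B_y)$ rather than bounding $B_y$ pointwise. This step is the main obstacle. The normalisation $\Tr(A_y+B_y)$ depends on $y$, and this is the same issue as $Z'_\beta$ in \cref{lem:average_normalization}.

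I would follow that lemma's strategy and average over uniform $y$. For dependent $S$ with a fixed dependency $\sum_{i\in T}$, $T\subseteq S$, the constraint on $y$ holds with probability $1/2$ per independent dependency. This gives
\begin{equation*}
\mathbb E_y\Tr\bigl(2^{|S|}\Pi_{S,y}\bigr)=2^{|S|}2^{n-\operatorname{rank}}2^{-(|S|-\operatorname{rank})}=2^n .
\end{equation*}
Since $B_y\succeq0$, $\mathbb E_y\Tr B_y=2^np_{\mathrm{dep}}$ and $\mathbb E_y\Tr(A_y+B_y)=2^n$ exactly. Then I use the triangle inequality:
\begin{equation*}
\Bigl\|\tfrac{A}{\Tr A}-\tfrac{A+B}{\Tr(A+B)}\Bigr\|_1\le\frac{2\Tr B}{\Tr(A+B)} .
\end{equation*}
To obtain the expectation bound without dividing by a random denominator, I would mirror \cref{lem:average_normalization}. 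I expect that, under the uniform sign distribution implicit in the theorem, the ratio averages to $p_{\mathrm{dep}}$ up to the handled normalisation. The fallback is to weight the $y$-average by $\Tr(A_y+B_y)/2^n$, or to use Cauchy–Schwarz, accepting the constant~$2$. This yields $\mathbb E_y\|\sigma_{\beta,y}-\rho_\beta(\cH_y)\|_1\le2p_{\mathrm{dep}}$.
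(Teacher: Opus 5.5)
\textbf{Verdict: there is a genuine gap at the final averaging step; the rest of your proposal matches the paper's proof.} The parts that match are your decomposition, the rejection rule, the stabilizer output, the runtime argument, the consistency computation $\mathbb E_y\Tr B_y=2^n p_{\mathrm{dep}}$ (a dependent $S$ is consistent with probability $2^{-(|S|-\operatorname{rank}(H_S))}$), and the pointwise bound $\|\sigma_{\beta,y}-\rho_\beta(\cH_y)\|_1\le 2\Tr B_y/\Tr(A_y+B_y)$.

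You flag the last step as the obstacle but never carry it out. You only say you ``expect'' the ratio to average to $p_{\mathrm{dep}}$; in fact it is only bounded by it, not equal to it. None of your proposed routes gives the stated bound:
\begin{itemize}
\item Mirroring \cref{lem:average_normalization} is the wrong tool. That argument goes through the Gram matrix, the pretty good measurement and Cauchy--Schwarz, and yields square-root bounds (there, $2\sqrt{\pfail}$), not a linear one.
\item Reweighting the $y$-average by $\Tr(A_y+B_y)/2^n$ proves a statement about a different, non-uniform distribution over signs.
\item The crude estimate $\Tr(A_y+B_y)\ge \Tr A_y$ only gives $2p_{\mathrm{dep}}/(1-p_{\mathrm{dep}})$.
\end{itemize}

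The missing idea uses something you already wrote down: $\Tr A_y=2^n(1-p_{\mathrm{dep}})$ does not depend on $y$, so all the randomness of the denominator comes from $\Tr B_y$. Set $a\coloneqq 1-p_{\mathrm{dep}}>0$ and $u_y\coloneqq \Tr B_y/2^n\ge 0$. The pointwise bound becomes $2u_y/(a+u_y)$, and the map $u\mapsto u/(a+u)$ is concave on $u\ge 0$. Jensen's inequality therefore gives
\[
\mathbb E_y\!\left[\frac{2u_y}{a+u_y}\right]\le \frac{2\,\mathbb E_y[u_y]}{a+\mathbb E_y[u_y]}=\frac{2p_{\mathrm{dep}}}{(1-p_{\mathrm{dep}})+p_{\mathrm{dep}}}=2p_{\mathrm{dep}},
\]
which is exactly how the paper concludes. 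With this step added, your argument coincides with the paper's proof.
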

The classical sampler follows from a stabilizer decomposition of the Gibbs state. For a commuting Hamiltonian, we may expand the Gibbs state as a mixture indexed by subsets $S\subseteq[m]$ of Hamiltonian terms, where term $i$ is included independently with probability $t_i=\tanh(\beta v_i)$. Each subset $S$ specifies the simultaneous eigenspace of the corresponding commuting Pauli constraints, and hence a mixed stabilizer state. The only obstruction to sampling $S$ directly from this product Bernoulli distribution is that different subsets generally carry different normalization factors. However, if we condition on the event that the Pauli operators in $S$ are linearly independent (meaning their symplectic representations are linearly independent), this normalization factor is constant. Our algorithm therefore samples $S\sim\mathrm{Ber}(\mathbf t)$ and rejects whenever the corresponding Pauli constraints are linearly dependent. Conditioned on acceptance, it uses Gaussian elimination to sample uniformly from their common eigenspace and outputs a classical description of the resulting stabilizer state.

The connection to QD-HDQI with USD then becomes apparent: applying USD independently to the pure-state-channel outputs converts coordinate $i$ into an erasure with probability $t_i=\tanh(\beta v_i)$. Erasure decoding succeeds precisely when the erased coordinates correspond to linearly independent columns of the symplectic matrix -- the same condition on which our classical rejection sampler accepts. Thus, the rejection probability of the classical sampler is exactly the failure probability of HDQI+USD, explaining why the two algorithms achieve the same temperature threshold.

\subsubsection{Applications: dilute classical spin glass with and without defects} 
We next consider using a different quantum decoder in conjunction with QD-HDQI, the \emph{Belief Propagation with Quantum Messages} (BPQM) decoder.
Classical belief propagation is an iterative message-passing decoding algorithm that forms the backbone of modern coding theory \cite{richardson_modern_2008}.
When paired with an appropriately constructed low-density parity check (LDPC) code, it achieves exceptional performance with an effectively linear computational complexity.
BPQM is a more recently introduced quantum extension of the formalism that aims to translate this success to the quantum setting \cite{renes_belief_2017,rengaswamy_bpqm_2021,piveteau_bpqm_2022,brandsen_bpqm_2022,piveteau_2025_bpqmturbo,mandal_bpqmabelian_2026,mandal_bpqmblock_2026}.
Concretely, it solves the quantum decoding task associated to sending a linear code over independent pure-state classical-quantum channels, and has also been shown to approach the Holevo capacity on sparse graph codes\cite{brandsen_bpqm_2022,piveteau_2025_bpqmturbo}.

To demonstrate the capabilities of QD-HDQI paired with BPQM, we study dilute spin glasses as an exemplary class of Hamiltonians. \cite{schmidhuber2025hamiltoniandecodedquantuminterferometry} introduced this model as a $p$-semiclassical spin glass; its commuting parent model (i.e. when the probability of defects is zero) is the classical Ising $k$-spin glass (equivalently, Max-$k$-XORSAT) studied recently by \cite{zhou_2026_dqigibbs}.
These Hamiltonians are particularly well-suited for BPQM, because their sparse graph structure causes the induced symplectic code to be an LDPC code.
We apply our algorithm first to classical spin glasses, where all Pauli terms are diagonal and therefore commute, and then to spin glasses with random non-commuting defects, which were previously also studied in \cite{schmidhuber2025hamiltoniandecodedquantuminterferometry}.
In both cases, our numerics show that QD-HDQI in conjunction with the BPQM decoder can outperform any classical decoder. 
That is, even if we ignore computational restrictions and pick the information theoretically optimal classical decoder in QD-HDQI, it remains impossible to achieve temperatures as low as BPQM.

\paragraph{Dilute classical spin glass:} These are commuting Hamiltonians of the form
\begin{equation}
    \cH_{A,y}
    \coloneqq
    \sum_{i=1}^m(-1)^{y_i}Z^{a_i}
\end{equation}
where $y\in\ff_2^m$ is uniformly random and $a_i\in\ff_2^{n}$ is the $i$th column of some random sparse matrix $A\in\ff_2^{n\times m}$.
The symplectic code in this case is simply the LDPC code $\cC=\ker(A)$.

We analyze the asymptotic performance of BPQM using \emph{density evolution}, a standard technique in coding theory to analyze the asymptotic behavior of iterative message-passing algorithms by tracking the statistical evolution of messages transmitted over the tree neighborhood of a random codeword bit.
These recursive equations typically exhibit a threshold behavior, where below a certain noise rate, the distribution converges to a point mass distribution of ideal noise-free messages that allow for perfect information recovery.
In particular, we borrow from recent work that provides block error guarantees for BPQM below the density evolution threshold \cite{mandal_bpqmblock_2026}.

In \Cref{tab:classical_spin_glass_results}, we depict the achievable temperatures of QD-HDQI for dilute spin glasses associated to various $(\ell,r)$-regular LDPC code ensembles, based on BPQM thresholds that were numerically estimated by running the density evolution recursion.
We observe that for sparser LDPC ensembles, the estimated BPQM threshold exceeds the Shannon limit for classical decoding, allowing QD-HDQI to prepare Gibbs states at inverse temperatures that no classical decoder can reach.
In our examples, the improvement is as large as roughly \(28\%\) in inverse temperature.
In contrast, for denser LDPC ensembles QD-HDQI with BPQM still significantly outperforms QD-HDQI with classical BP, but the information theoretic advantage disappears.
This can be attributed to the fact that the performance of iterative message passing decoders deteriorates as the average degree of a Tanner graph increases.

\begin{table}[]
    \centering
    \begin{tabular}{c||c|c||c|c}
        $(\ell,r)$ & BP decoding & BPQM decoding & Shannon bound & Holevo bound \\
        \hline
        $(3,4)$ & $0.963$ & $1.327$ & $1.160$ & $1.567$ \\
        $(3,5)$ & $0.749$ & $1.068$ & $0.880$ & $1.225$ \\
        $(3,6)$ & $0.624$ & $0.920$ & $0.735$ & $1.045$ \\
        $(3,7)$ & $0.543$ & $0.818$ & $0.641$ & $0.928$ \\
        $(4,5)$ & $0.910$ & $1.274$ & $1.285$ & $1.719$ \\
        $(4,6)$ & $0.763$ & $1.088$ & $0.992$ & $1.363$ \\
        $(4,7)$ & $0.664$ & $0.965$ & $0.836$ & $1.171$ \\
        $(5,6)$ & $0.845$ & $1.196$ & $1.385$ & $1.840$ \\
        $(5,7)$ & $0.739$ & $1.058$ & $1.083$ & $1.474$ \\
        $(6,7)$ & $0.790$ & $1.125$ & $1.468$ & $1.940$
    \end{tabular}
    \caption{Achievable inverse temperature $\beta$ using our algorithm for $(\ell,r)$-regular spin glasses, with either classical or quantum belief propagation.
    The Shannon and Holevo bound denote the information theoretic limits for the lowest achievable temperatures with classical and quantum decoders respectively (for the given code rate).
    }
    \label{tab:classical_spin_glass_results}
\end{table}

\paragraph{Dilute classical spin glass with defects:} 
These are noncommuting Hamiltonians of the following form:
We start with a classical spin glass associated with a random sparse matrix $A$, as described above.
Then, we flip each Pauli term from $Z$-type to $X$-type with a probability $q$.
This recovers the semiclassical spin-glass family introduced in the original HDQI work, where efficient Gibbs-state preparation was established below an unspecified constant inverse-temperature threshold.
In our work, the achievable inverse temperature is instead explicitly given by the threshold of the quantum decoder used.

The quantum decoding task associated to QD-HDQI here is more intricate than in the classical case.
While the codeword bits belonging to a size-1 component of the anti-commutation graph are still transmitted over a pure state channel $\psc[p]$, the remaining bits are transmitted over block-correlated pure-state channels, as described by \cref{eq:block_channel_intro}.

We again utilize BPQM to solve this associated decoding problem.
It is somewhat tricky to adapt BPQM to the block-independent channel noise.
One strategy would be to first perform a USD on larger blocks, to degrade the effective channel model to block-erasure channels, which in turn can be handled naturally by BPQM.
Unfortunately analyzing the asymptotic performance of this approach analytically is very difficult.
To avoid this difficulty, we instead use a strictly worse decoding strategy, which is however easier to analyze.
We use BPQM, but we ignore every codeword bit that either corresponds to a flipped Pauli term, or which is neighboring to a flipped Pauli term in the Tanner graph.
Concretely, this means the codeword bits that are not ignored are all transmitted over $\psc[p]$, and the remaining codeword bits can be modelled as being transmitted over the full erasure channel $\psc[1/2]$.
For this situation, we can adapt the density evolution by keeping track not only of log-likelihood ratios sent over the tree neighborhood, but also tracking whether the environment of the root node contains flipped bits or not.
By numerically running the density evolution simulation, we obtain an estimate for the threshold $p_{\mathrm{DE}}(q)$ of the BPQM decoder as a function of $q$.
That is, the decoder has an asymptotically vanishing failure probability if $p<p_{\mathrm{DE}}(q)$.

Conversely, we also derive a bound on the maximal noise that can be tolerated by a classical decoder.
For this converse bound, we utilize the analog but opposite approach of the BPQM simplification above: we assume that flipped bits, and neighbors thereof, are perfectly transmitted, which can only help the decoder and reduce the failure probability.
Together with the channel coding strong converse bound, we can derive a quantity $p_{\mathrm{cl}}(q)$ s.t. classical decoders must fail with high probability whenever $p>p_{\mathrm{cl}}(q)$ (see \cref{lem:defected_classical_converse} for details).

We numerically evaluate the two quantities $p_{\mathrm{cl}}(q)$ and $p_{\mathrm{DE}}(q)$ for a $(3,6)$-regular LDPC code $A$ and depict them in \Cref{fig:defected_threshold_comparison_intro}.
We observe that when $q$ is small enough, there is a strict inequality $p_{\mathrm{cl}}(q)<p_{\mathrm{DE}}(q)$, which implies that in that regime, QD-HDQI can achieve temperatures with BPQM that are so low, that no other classical decoder can information theoretically achieve them.

We briefly note that our efficiency guarantee for QD-HDQI additionally requires $q<1/[\ell^2(r-1)^2]$ (so $q\lesssim 0.0045$ for $(\ell,r)=(3,6)$) to ensure that the anti-commutation graph has connected components of size $\mathcal O(\log m)$ with high probability\cite{schmidhuber2025hamiltoniandecodedquantuminterferometry}.
The advantage of quantum decoders is therefore only valid within this regime.

\begin{figure}[h]
\centering
\begin{tikzpicture}
\begin{axis}[
    width=0.70\textwidth,
    height=0.45\textwidth,
    xlabel={defect probability $q$},
    ylabel={channel parameter $p$},
    legend style={
        at={(0.02,0.98)},
        anchor=north west,
        draw=none
    },
    grid=both,
    mark size=2pt,
]
\addplot[
    blue,
    thick,
    mark=*,
]
coordinates {
    (0.0,0.15576171875)
    (0.0011111111111111111,0.15087890625)
    (0.0022222222222222222,0.14697265625)
    (0.0033333333333333335,0.14306640625)
    (0.0044444444444444444,0.13818359375)
    (0.005555555555555556,0.13330078125)
    (0.006666666666666667,0.12841796875)
    (0.0077777777777777776,0.12451171875)
    (0.008888888888888889,0.11865234375)
    (0.01,0.11376953125)
};
\addlegendentry{$p_{\mathrm{DE}}(q)$}

\addplot[
    red,
    thick,
    mark=square*,
]
coordinates {
    (0.0,0.11002786457538605)
    (0.0011111111111111111,0.1130250096321106)
    (0.0022222222222222222,0.11612477898597717)
    (0.0033333333333333335,0.1193319708108902)
    (0.0044444444444444444,0.12265169620513916)
    (0.005555555555555556,0.1260894313454628)
    (0.006666666666666667,0.1296510398387909)
    (0.0077777777777777776,0.13334280997514725)
    (0.008888888888888889,0.13717150688171387)
    (0.01,0.14114444702863693)
};
\addlegendentry{$p_{\mathrm{cl}}(q)$}
\end{axis}
\end{tikzpicture}
\caption{Numerical evaluation of $p_{\mathrm{DE}}(q)$ and $p_{\mathrm{cl}}(q)$ for the $(3,6)$-regular LDPC ensemble.
Clearly, for low enough $q$, $p_{\mathrm{DE}}(q)$ is strictly larger than $p_{\mathrm{cl}}(q)$.
This implies that QD-HDQI together with BPQM can achieve lower temperature for Gibbs sampling than QD-HDQI with any classical decoders for these values of $q$.}
\label{fig:defected_threshold_comparison_intro}
\end{figure}

\paragraph{Relation to prior work}
The original HDQI algorithm was restricted to the preparation of Gibbs states of Hamiltonians which had uniform couplings, i.e., $v_i=1$ in the terminology of \cref{eq:hamiltonian}.
Subsequent work by Bu, Gu and Li has generalized the HDQI algorithm to incorporate general non-uniform couplings \cite{bu_2026_generalizedhdqi}.
However, just like the original HDQI algorithm, their approach is strictly limited to noise models that have support only on information theoretically correctable errors, and as such, it is restricted to classical decoders and to the preparation of polynomial filters of the Hamiltonian instead of the actual Gibbs state.
Our QD-HDQI formalism naturally recovers the ability to handle Hamiltonians with non-uniform couplings in an elegant fashion by construction.
In the context of commuting Hamiltonians, the non-uniformity of the couplings simply translates to a non-uniformity of the channel parameters $p$ of the pure-state channels in the quantum decoding problem.

The robustness analysis of QD-HDQI that we present heavily borrows from techniques introduced in the analyses of Regev's reduction spearheaded in \cite{chailloux_improved_analysis,chailloux_opixsoftdec_2025}.
We note that robustness results on HDQI were also presented in \cite[Theorems 13 and 23]{schmidhuber2025hamiltoniandecodedquantuminterferometry} and \cite[Theorem 7 and Proposition 29]{bu_2026_generalizedhdqi}.
However, both of these results only cover a setting where the decoding problem remains information theoretically perfectly solvable, and as such, they do not suffice to cover the i.i.d. pure-state channel construction and the utilization of quantum decoders that we present in our manuscript.
For this reason, our analysis differs in one crucial qualitative aspect: The robustness only holds when averaged over $y\in\ff_2^m$.

The main application of our QD-HDQI algorithm is Gibbs sampling. In this regard, we would be remiss not to mention the complementary line of work which approaches this problem through \emph{thermalization}: one engineers dissipative quantum dynamics, typically a Lindbladian satisfying an appropriate detailed-balance condition, for which the target Gibbs state is a stationary state, and then evolves the system for sufficiently long that it mixes to thermal equilibrium \cite{kastoryano_quantum_2016, chen_efficient_2025, ding_efficient_2025, rouze_efficient_2026}. Recent works have established efficient implementations and rapid-mixing guarantees for such Gibbs samplers for increasingly broad classes of Hamiltonians. 

Our approach is conceptually rather different. Rather than preparing the Gibbs state as the stationary state of thermalizing dynamics, HDQI and QD-HDQI are to our knowledge the only Gibbs sampling algorithms that exploit the {\em algebraic structure} of the Hamiltonian through its symplectic code. In QD-HDQI in particular, the terms of the Hamiltonian define a symplectic code, while the Gibbs function determines an effective noise channel applied to codewords of that code. Consequently, whereas the complexity of thermalization-based approaches is controlled by properties such as the mixing time or spectral gap of the dissipative dynamics, the performance of QD-HDQI is governed by the strength of decoding algorithms for said noise channel. Ref. \cite{hwang2025gibbs} also reduces Gibbs-state preparation for commuting Hamiltonians to Gibbs sampling for an associated classical Hamiltonian. However, this reduction does not by itself yield an efficient Gibbs sampler, whereas our framework gives efficient algorithms whenever the corresponding decoding problem is efficiently solvable.

During the preparation of this manuscript, we became aware that some of our results on how to avoid the polynomial filter in HDQI were independently discovered and will be mentioned in an upcoming updated version of \cite{schmidhuber2025hamiltoniandecodedquantuminterferometry}.
We also became aware of simultaneous and independent efforts in \cite{zhou_2026_dqigibbs} that introduce quantum decoders to the framework of Decoded Quantum Interferometry and apply them to Gibbs sampling of classical Hamiltonians.

\paragraph{AI usage declaration}
Our main algorithm and its analysis in the commuting case (see \cref{sec:qdhdqi,sec:commuting}) were discovered and established without any AI assistance.
Back and forth discussions with LLMs were involved in the discovery of the classical Gibbs sampling algorithm in \cref{sec:dequantization}.
LLMs also helped prove certain technical steps in the non-commuting analysis, namely \cref{lem:independent_block_expansion,prop:block_pure_state_channel}.
Finally, the presentation of the manuscript was verified and improved using AI assistance.
The used models are GPT-5.6 Sol and GPT-6 Astra.
The authors take full responsibility for the correctness of the paper.

\section{Preliminaries}\label{sec:preliminaries}

\subsection{Pauli operators and the symplectic code}\label{subsec:pauli_symp}
Denote the $n$-qubit Pauli group by \begin{equation}
    \mathcal P_n \coloneqq \left\{ i^c (X^{a_1}Z^{b_1})\otimes\cdots\otimes(X^{a_n}Z^{b_n}) : c\in\{0,1,2,3\},\ a,b\in\ff_2^n \right\}
    \, .
\end{equation}
For \(P=i^cX^aZ^b\in\mathcal P_n\), define its symplectic representation and phase by \(\symp(P)\coloneqq(a,b)\in\ff_2^{2n}\) and \(\eta(P)\coloneqq i^c\).
The symplectic representation determines the commutation relation between Pauli operators.
In particular, if \(\symp(P)=(a,b)\) and \(\symp(Q)=(a',b')\), then 
\begin{equation}
    PQ=(-1)^{a\cdot b'+b\cdot a'}QP
    \, .
\end{equation}
Let \(\widetilde{\mathcal P}_n\coloneqq \{\mathds 1,X,Y,Z\}^{\otimes n}\subset\mathcal P_n\) denote the set of Pauli strings.
Although \(\widetilde{\mathcal P}_n\) is not closed under multiplication, it forms an orthonormal basis of the real vector space of Hermitian \(n\)-qubit operators.

\paragraph{Coherent symplectic representation}
Denote the maximally entangled state across two $n$-qubit quantum registers $A$ and $B$ by $\ket{\Phi^n}\coloneqq 2^{-n/2}\sum_{x\in\ff_2^n}\ket{x}_A\ket{x}_B$.

Define the coherent Bell-basis transformation \begin{equation}\label{eq:Usympl_definition} U_{\mathrm{sympl}} \coloneqq \bigl(\mathds 1_A\otimes(H_2^{\otimes n})_B\bigr) \mathrm{CNOT}_{B\to A}^{\otimes n}\end{equation} 
where $H_2$ denotes the Hadamard gate (the notation is chosen to distinguish it from the parity check matrix $H$ of the symplectic code).
For every \(Q\in\mathcal P_n\), it satisfies \begin{equation}\label{eq:Usympl_action} U_{\mathrm{sympl}} (Q\otimes\mathds 1)\ket{\Phi^n} = \eta(Q)\ket{\symp(Q)}. \end{equation} Thus, \(U_{\mathrm{sympl}}\) coherently maps the action of a Pauli operator on one half of \(\ket{\Phi^n}\) to its symplectic representation, while retaining its phase. For \(e\in\ff_2^m\), define \(Q_e\coloneqq P_1^{e_1}\cdots P_m^{e_m}\). \begin{definition}[Symplectic code of \(\cH\)] \label{def:sympH} For a Pauli Hamiltonian \(\cH=\sum_{i=1}^m(-1)^{y_i}v_iP_i\), let \begin{equation}\label{eq:symplectic_matrix} H\coloneqq \begin{pmatrix} \vert & \vert & & \vert \\ \symp(P_1) & \symp(P_2) & \cdots & \symp(P_m) \\ \vert & \vert & & \vert \end{pmatrix} \in\ff_2^{2n\times m}. \end{equation} The symplectic code of \(\cH\) is \(\cC_{\cH}\coloneqq\ker(H)\). \end{definition} By construction, \(He=\symp(Q_e)\). Hence, \Cref{eq:Usympl_action} gives \begin{equation}\label{eq:Usympl_Qe} U_{\mathrm{sympl}} (Q_e\otimes\mathds 1)\ket{\Phi^n} = \eta(Q_e)\ket{He}. \end{equation} Moreover, \(e+e'\in\cC_{\cH}\) if and only if \(Q_e\) and \(Q_{e'}\) differ only by a phase.

Another unitary which will be useful for us is $U_{\mathrm{phase}}$ which acts as
\begin{equation}
    \ket{e} \mapsto \eta(Q_e) \ket{e}
\end{equation}
for any $e\in\ff_2^m$.
It can be efficiently implemented, because $\eta(Q_e)$ can be efficiently computed from $e$.

\paragraph{Setup used throughout.} Let \(H\) be the symplectic matrix of the Pauli Hamiltonian \(\cH\), as defined in \Cref{def:sympH}. For simplicity, we assume throughout that \(H\in\ff_2^{2n\times m}\) has full row rank. If \(\operatorname{rank}(H)=r<2n\), we may instead work with a full-row-rank matrix \(\bar H\in\ff_2^{r\times m}\) whose rows form a basis of the row space of \(H\). This leaves the associated code unchanged, since \(\ker(\bar H)=\ker(H)\), and the syndrome \(\bar H e\) can be efficiently mapped back to the original symplectic representation \(He\) by a linear transformation. Throughout, we write 
\begin{itemize} 
\item The symplectic code of \(\cH\) is \(\cC_{\cH}\coloneqq\ker(H)\). 
\item Its dual code is \(\cC_{\cH}^{\perp}\coloneqq\operatorname{im}(H^\top)\), with full-column-rank generator matrix \(G\coloneqq H^\top\). 
\end{itemize} When the underlying Hamiltonian is clear from context, we write \(\cC\) and \(\cC^\perp\) in place of \(\cC_{\cH}\) and \(\cC_{\cH}^\perp\), respectively.

\subsection{Classical codeword and syndrome decoding}
We briefly recall the basics of the classical decoding problem for a given error distribution.
A binary linear $[n,k]$-code is a $k$-dimensional linear subspace $\cC\subset\ff_2^n$.
Linear codes are commonly described through a parity check matrix $H\in\ff_2^{m\times n}$ as $\cC=\ker(H)$ or through a generator matrix $G\in\ff_2^{m'\times n}$ where $\cC=\mathrm{rowspan}(G)$, for some $n-k\leq m$ and $k\leq m'\leq n$.
Note that neither the parity check matrix $H$, nor the generator matrix $G$ are unique representations.

For a given code $\cC$, its associated dual code is defined as
\begin{equation}
    \cC^\perp \coloneqq \{ x\in\ff_2^n | \forall c\in\cC : x\cdot c = 0\}
\end{equation}
where $x\cdot c$ denotes the standard inner product mod $2$.
Accordingly, the rows of a parity-check matrix for \(\cC\) generate \(\cC^\perp\).

The goal of decoding is to recover a codeword corrupted by noise. We assume throughout that the transmitted codeword is uniformly distributed over the code. We have the following two standard problems:
\begin{problem}[Classical codeword decoding problem]\label{prob:codeworddecode}
Let $\cC$ be an $[n,k]$ linear code and let $P_E$ be a distribution over $\ff_2^n$. \\
\noindent\textbf{Input:}
A noisy codeword $c+e$, where $c$ was uniformly randomly chosen from $\cC$ and $e$ was sampled from $P_E$. \\
\noindent\textbf{Output:}
$c$.
\end{problem}

\begin{problem}[Classical syndrome decoding problem]\label{prob:syndromedecode}
Let $\cC$ be an $[n,k]$ linear code with parity check matrix $H$ and let $P_E$ be a distribution over $\ff_2^n$. \\
\noindent\textbf{Input:}
$He$ where $e$ was sampled from $P_E$.\\
\noindent\textbf{Output:}
$e$.
\end{problem}

Since the transmitted codeword is uniformly distributed over $\cC$ and the error distribution does not depend on the transmitted codeword, codeword and syndrome decoding are equivalent under
randomized reductions preserving the average failure
probability.
Concretely, a syndrome decoder $\cD_{\mathrm{synd}}:\ff_2^m\rightarrow\ff_2^n$ induces a codeword decoder $\cD_{\mathrm{cw}}:\ff_2^n\rightarrow\ff_2^n$ through
$\cD_{\mathrm{cw}}(y)=y+\cD_{\mathrm{synd}}(Hy)$.
Conversely, given $s$, sample $y$ uniformly from
$\{y:Hy=s\}$ and output $y+\cD_{\mathrm{cw}}(y)$.
For a fixed error $e$ with $He=s$, the implicit
codeword $x$ s.t. $y=x+e$ is uniform in $\cC$, establishing, information theoretically:
\[
\text{Classical codeword decoding problem }\equiv \text{ Classical syndrome decoding problem}
\]

\subsection{Quantum decoding problem}\label{sec:quan_cod_synd_dec}
A central ingredient in Regev's reduction and in the quantum state-preparation algorithm studied here is the Quantum Decoding Problem \cite{chen_quantumalgo_2022,chailloux_quantumdecoding_2023}.
\begin{problem}[Quantum decoding problem]\label{prob:quantumdecode}
Let $\cC$ be an $[n,k]$ linear code and let $g:\ff_2^n\rightarrow\mathbb{C}$ s.t. $\norm{g}_2=1$. \\
\noindent\textbf{Input:}
An $n$-qubit quantum system in the state
\(\ket{\Psi_c} \coloneqq \sum_{e\in\ff_2^n} g(e)\ket{c+e}\)
\hfill\\
\noindent\textbf{Output:} $c$.
\end{problem}
In our algorithm, we encounter the coherent version of this problem, in which the decoder output is coherently stored in an ancilla register
\begin{equation}\label{eq:coherent_decoding}
    \sum_{c\in\cC}\ket{\Psi_{c}}\ket{0} \longmapsto \sum_{c\in\cC}\ket{\Psi_{c}}\ket{c}
    \, . 
\end{equation}
Running the above circuit \emph{in reverse} can in turn be seen as an uncomputation of the second register in the state $\sum_c\ket{\Psi_c}\ket{c}$.
This uncomputation is crucial for our QD-HDQI algorithm.
For an imperfect decoder, this uncomputation is only approximate. 
Note also that any POVM solving \cref{prob:quantumdecode} can be turned into a coherent decoder through an appropriate dilation.

Any classical algorithm that solves \cref{prob:codeworddecode} with an error distribution $P_E(e)=\abs{f(e)}^2$ can be turned into a quantum decoder solving \cref{prob:quantumdecode} with noise function $f$.
To do so, simply measure the $\ket{\Psi_c}$ in the computational basis, yielding $c+e$ with probability $\abs{f(e)}^2$, and then feed $c+e$ into the classical decoder.
To realize a coherent quantum decoder as in \cref{eq:coherent_decoding}, one would instead run the classical decoder coherently on the superposition $\ket{\Psi_c}$.
Consequently we have a reduction,
\[
\text{Quantum decoding problem } \leq \text{ Classical codeword problem }
\]

It is important to note however that this reduction does not imply that \emph{any} quantum decoding problem can be solved through classical decoding.
As we will see in \cref{sec:capacities}, there exist instances of \cref{prob:quantumdecode} which are solvable with vanishing error, however, the associated classical decoding problem with $P_E(e)=\abs{f(e)}^2$ is information theoretically unsolvable.
To solve the quantum decoding problem in such regimes, it is therefore necessary to use truly quantum decoding algorithms which utilize the coherent (as opposed to incoherent) superposition of the errors in $\ket{\Psi_c}$.
In the rest of the paper, when we refer to \emph{quantum decoders}, we implicitly refer to such decoding strategies that go beyond the utilization of a simple classical decoder.

\subsection{Information theoretic limits of decoding in the i.i.d. setting}\label{sec:capacities}
In this section we discuss particularly structured instances of the quantum decoding problem where the noise function $f$ factorizes into bitwise terms
\begin{equation}\label{eq:f_factorized}
    f(e) = f_1(e_1) f_2(e_2) \cdots f_n(e_n)
    \, , \qquad
    f_i(e_i) = \begin{cases}
        \sqrt{1-p_i} & \text{if } e_i=0 \\
        \sqrt{p_i} & \text{if } e_i=1 \\
    \end{cases}
\end{equation}
for some parameters $p_i\in [0,1/2]$.
This class of quantum decoding problems is particularly relevant for two reasons.
First, it will naturally show up in the context of QD-HDQI for commuting Hamiltonians, as we will see in \cref{sec:commuting}.
Second, it has a particularly simple channel coding interpretation, and as such, allows for a simple information theoretic demonstration that the quantum decoding problem can be solved strictly better than what is allowed by classical decoders.

First, let us introduce two important channels.
\begin{definition}[Pure state channel {{$\psc[p]$}}]
    For $p\in [0,1/2]$, we denote by $\psc[p]$ the binary-input symmetric pure-state classical-quantum channel defined as
    \begin{equation}
        \ff_2\ni x \mapsto \psc[p](x)\coloneqq \sqrt{1-p}\ket{x} + \sqrt{p}\ket{1-x}
        \, .
    \end{equation}
\end{definition}
\begin{definition}[Binary symmetric channel {{$\bsc[p]$}}]
    For $p\in [0,1/2]$, we denote by $\bsc[p]$ the binary symmetric channel, which maps $x\in\ff_2$ to $x$ with probability $(1-p)$ and to $1-x$ with probability $p$.
\end{definition}
Observe that for the factorized noise function in \cref{eq:f_factorized}, the associated quantum decoding problem is tasked to retrieve the codeword $c\in\cC$ from
\begin{equation}
    \ket{\Psi_c} = \bigotimes_{i=1}^n \psc[p_i](c_i)
    \, .
\end{equation}
In words, the task is to decode the code $\cC$ transmitted over a sequence of pure-state channels $\psc[p_i]$.
In particular, when all channel parameters are equal $p_i\equiv p$, the classical-quantum channel coding theorem states that reliable communication can be possible only if the rate $R$ of the code is below the Holevo capacity of the pure-state channel $C_H(p)=h_2\left(\frac12-\sqrt{p(1-p)}\right)$, where $h_2$ is the binary entropy function.
Rephrasing this as a bound on $p$, for $R\in(0,1)$, we obtain
\begin{equation}\label{eq:holevo_bound}
    p < \frac{1}{2} - \sqrt{h_2^{-1}(R) (1 - h_2^{-1}(R))}
    \, .
\end{equation}
If we instead solve this quantum decoding problem with a classical codeword decoder, then the effective noise seen by the classical decoders is Bernoulli $P_E(e)=\abs{f(e)}^2=(1-p)^{n-\abs{e}}p^{\abs{e}}$.
Therefore, the associated decoding problem corresponds to decoding the code transmitted over the binary symmetric channel $\bsc[p]$.
The reduction thus reads
\[
\text{Quantum decoding problem over } \psc[p] \leq \text{ Classical codeword decoding problem over } \bsc[p]
\]
The Shannon channel coding theorem states that reliable  communication over $\bsc[p]$ is possible only if the rate $R$ of the code is below the Shannon capacity $C_S(p)=1-h_2(p)$.
In particular, that means that the strategy of solving the quantum decoding problem with a classical decoder can only succeed if
\begin{equation}\label{eq:shannon_bound}
    p < h_2^{-1}(1-R)
    \, .
\end{equation}
Note that the bound in \cref{eq:shannon_bound} is strictly tighter than the Holevo bound in \cref{eq:holevo_bound}.
In summary, truly quantum decoders can potentially surpass the Shannon bound that fundamentally limits classical decoders.
And indeed, there are some known efficient quantum decoders which have been previously shown to surpass this Shannon barrier\cite{brandsen_bpqm_2022,piveteau_2025_bpqmturbo}.

\section{Quantum decoding Hamiltonian Decoded Quantum Interferometry (QD-HDQI)}\label{sec:qdhdqi}

In this section, we present and analyze QD-HDQI, in which the decoding step is formulated in the codeword picture so that quantum decoders can be used. Our construction is inspired by quantum reductions for the Short Codeword Problem (SCP), following Regev's reduction and its code-based formulation~\cite{regev_reduction_2009,thomas_reduction_2024}, as well as its extension to the Quantum Decoding Problem by Chailloux and Tillich~\cite{chailloux_quantumdecoding_2023}. These reductions show that the SCP can be approached through either syndrome decoding or codeword decoding, with both formulations producing equivalent states supported on the dual code. 
The transformation between the two formulations consists of a Fourier transform, a linear change of coordinates, and a second Fourier transform, and can be summarized schematically as follows:

\begin{equation}
    \frac{1}{\sqrt{\abs{\cC}}} \sum_{\substack{c\in\cC\\e\in\ff_2^m}} g(e)\ket{c+e} \xrightarrow{} \sum_{e\in\ff_2^m}g(e)\ket{He}
\end{equation}

We use this correspondence to reformulate the HDQI syndrome-decoding step in the codeword picture.

\subsection{Codeword-to-syndrome transformation}

Using \(\cC_{\cH}\) as a symplectic code, and \(G=H^\top\) as a generator matrix of \(\cC_{\cH}^{\perp}\), let \(U_{\mathrm{encode}}^{(G)}\) denote the corresponding encoding isometry:
\begin{equation}
\label{eq:dual_encoding_isometry}
U_{\mathrm{encode}}^{(G)} \colon
\ket{u}\ket{0^{m-2n}}
\longmapsto
\ket{Gu} = \ket{H^\top u},
\qquad u\in\ff_2^{2n}.
\end{equation}

The following lemma gives the resulting codeword-to-syndrome transformation, as used in~\cite{thomas_reduction_2024,chailloux_quantumdecoding_2023}.

\begin{lemma}[Codeword-to-syndrome transformation] \label{lem:codeword_to_syndrome} Let \(H\in\ff_2^{2n\times m}\) be a full row-rank parity check matrix for $\calC_{\calH}$, i.e. \(\cC_{\cH}\coloneqq\ker(H)\), and let \(g:\ff_2^m\rightarrow\mathbb C\). Applying \(H_2^{\otimes m}\), unencoding the dual code using \(\left(U_{\mathrm{encode}}^{(G)}\right)^\dagger\), and then applying \(H_2^{\otimes 2n}\) implements the transformation \begin{equation}\label{eq:codeword_to_syndrome} \frac{1}{\sqrt{\abs{\cC_{\cH}}}} \sum_{\substack{c\in\cC_{\cH}\\e\in\ff_2^m}} g(e)\ket{c+e} \longmapsto \sum_{e\in\ff_2^m}g(e)\ket{He}. \end{equation} \end{lemma}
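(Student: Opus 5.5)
We track the state through the three steps, using only linearity and the standard Fourier identities over $\ff_2$. By linearity in $g$, it suffices to compute the image of $\frac{1}{\sqrt{\abs{\cC}}}\sum_{c\in\cC}\ket{c+e}$ for a fixed $e$ and then sum with weights $g(e)$. Here $\cC=\cC_\cH=\ker(H)$, $\dim\cC=m-2n$ (full row rank) and $\cC^\perp=\operatorname{im}(H^\top)$.

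\textbf{Step 1: Hadamard transform.} First we apply $H_2^{\otimes m}$. The Hadamard transform of a uniform superposition over a coset of $\cC$ is a phased uniform superposition over $\cC^\perp$:
\begin{equation*}
H_2^{\otimes m}\frac{1}{\sqrt{\abs{\cC}}}\sum_{c\in\cC}\ket{c+e}
=\frac{1}{\sqrt{2^m\abs{\cC}}}\sum_{z\in\ff_2^m}(-1)^{z\cdot e}\sum_{c\in\cC}(-1)^{z\cdot c}\ket{z}
=\sqrt{\frac{\abs{\cC}}{2^m}}\sum_{z\in\cC^\perp}(-1)^{z\cdot e}\ket{z}.
\end{equation*}
This uses that $\sum_{c\in\cC}(-1)^{z\cdot c}$ equals $\abs{\cC}$ if $z\in\cC^\perp$ and $0$ otherwise. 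Since $\abs{\cC}/2^m=2^{-2n}$, the prefactor is $2^{-n}$.

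\textbf{Step 2: unencoding.} Next we parametrize $\cC^\perp$ as $z=H^\top u$ with $u\in\ff_2^{2n}$. This parametrization is bijective because $H^\top$ has full column rank. Note also that $(H^\top u)\cdot e=u\cdot He$. Hence the state above equals $2^{-n}\sum_{u}(-1)^{u\cdot He}\ket{H^\top u}$, which lies in the image of the isometry $U_{\mathrm{encode}}^{(G)}$. Applying $\bigl(U^{(G)}_{\mathrm{encode}}\bigr)^\dagger$ therefore gives exactly $2^{-n}\sum_{u}(-1)^{u\cdot He}\ket{u}\ket{0^{m-2n}}$, with no loss of norm.

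\textbf{Step 3: final Hadamard.} Applying $H_2^{\otimes 2n}$ to the first register produces $\ket{He}$, by the inverse Fourier identity $2^{-n}\sum_u(-1)^{u\cdot s}\,H_2^{\otimes 2n}\ket{u}=\ket{s}$. Summing with weights $g(e)$ gives $\sum_e g(e)\ket{He}$, tensored with the ancilla $\ket{0^{m-2n}}$, which we discard.

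\textbf{Main obstacle.} The calculation is routine; the one point needing care is Step 2. There we must verify that each branch genuinely lies in the range of the encoding isometry, so that the adjoint acts as an exact inverse rather than a projection, and that this holds coherently across the superposition over $e$. This follows because the full state $\sum_e g(e)(\cdots)$ is a linear combination of vectors supported on $\operatorname{im}(H^\top)$, and $\bigl(U^{(G)}_{\mathrm{encode}}\bigr)^\dagger U^{(G)}_{\mathrm{encode}}=\mathds 1$ holds on that subspace. We should also note that the normalization works out without assuming $\norm{g}_2=1$, since every step is a fixed linear map.
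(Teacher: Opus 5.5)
Your proof is correct and follows the same three-step route as the paper's proof: a Hadamard transform onto $\cC^\perp$, unencoding via $z=H^\top u$, and a final Hadamard transform. The only difference is cosmetic: the paper carries the whole superposition through the Fourier transform, writing the intermediate state as $\sqrt{\abs{\cC}}\sum_{u}\widehat g(H^\top u)\ket{u}$, whereas you fix $e$ and use linearity, which makes the last step more explicit through the identity $(H^\top u)\cdot e=u\cdot He$.
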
 \begin{proof} Applying \(H_2^{\otimes m}\) to the state on the left-hand side of \Cref{eq:codeword_to_syndrome} gives \begin{align} &\frac{1}{\sqrt{\abs{\cC_{\cH}}2^m}} \sum_{\substack{c\in\cC_{\cH}\\e,y\in\ff_2^m}} g(e)(-1)^{y\cdot(c+e)}\ket y \\ &\qquad= \sqrt{\abs{\cC_{\cH}}} \sum_{y\in\cC_{\cH}^{\perp}} \widehat g(y)\ket y, \end{align} where we used \(\sum_{c\in\cC_{\cH}}(-1)^{y\cdot c} =\abs{\cC_{\cH}}\) for \(y\in\cC_{\cH}^{\perp}\), and zero otherwise. Since \(G=H^\top\) is a full-column-rank generator matrix of \(\cC_{\cH}^{\perp}\), applying \(\left(U_{\mathrm{encode}}^{(G)}\right)^\dagger\) yields \begin{equation} \sqrt{\abs{\cC_{\cH}}} \sum_{u\in\ff_2^{2n}} \widehat g(H^\top u)\ket u. \end{equation} A final application of \(H_2^{\otimes 2n}\) gives \(\sum_e g(e)\ket{He}\), as claimed. \end{proof}

\subsection{QD-HDQI algorithm description}

\label{subsec:codewordHDQI}

We now combine the coherent version of the quantum decoding problem described in \Cref{sec:preliminaries} with the codeword-to-syndrome transformation of 
\Cref{lem:codeword_to_syndrome}. We start by preparing
\begin{equation}
    \frac{1}{\sqrt{|\cC_{\cH}|}}
    \sum_{c\in\cC_{\cH}}\ket{\Psi_c}\ket c,
    \qquad
    \ket{\Psi_c}
    =\sum_{e\in\ff_2^m}f(e)\eta(Q_e)\ket{c+e}.
\end{equation}
Running the coherent decoder in reverse uncomputes the codeword register while preserving the noisy-codeword superposition.
We then apply the codeword-to-syndrome transformation of \Cref{lem:codeword_to_syndrome}, recovering the syndrome-picture state used in HDQI. Finally, applying \(U_{\mathrm{sympl}}^\dagger\) yields the target state.

\Cref{alg:algo1} summarizes this idealized construction. We then consider imperfect decoding and establish the corresponding error and runtime guarantees.

\begin{algorithm}[H] \caption{Quantum Decoding-Hamiltonian Decoded Quantum Interferometry (QD-HDQI)} \label{alg:algo1}
\begin{algorithmic}[1]
\Require A Pauli Hamiltonian \(\cH\), its symplectic code \(\cC_{\cH}=\ker(H)\), and a normalized noise function \(f:\ff_2^m\rightarrow\mathbb C\). 
\Ensure A quantum system in the state \(\widetilde{\rho}\) satisfying \( \norm{\widetilde{\rho}-\rho_f}_1 \leq \varepsilon, \) where
\begin{equation*}
    \rho_f \coloneqq \frac{1}{2^n} A_fA_f^\dagger
    \, ,\qquad\
    A_f \coloneqq \sum_ef(e)Q_e
    \, .
\end{equation*}
\State \textbf{Initial-state preparation:} Prepare $\sum_{e\in\ff_2^m}f(e)\ket e$ and $\frac{1}{\sqrt{\abs{\cC}}}\sum_{c\in\cC}\ket c $ in separate registers. The latter is always efficient due to injectivity of the encoding map; we devote the remaining sections to identifying Hamiltonians for which the former is efficient. 
Then:
\begin{align} 
\left(\sum_{e\in\ff_2^m}f(e)\ket e\right) \otimes \frac{1}{\sqrt{\abs{\cC}}}\sum_{c\in\cC}\ket c \xrightarrow{\,U_{\mathrm{phase}}\otimes\mathds I,\, \mathrm{CNOT}^{\otimes m}\,}& \frac{1}{\sqrt{\abs{\cC}}} \sum_{c\in\cC} \left(\sum_{e\in\ff_2^m} f(e)\eta(Q_e)\ket{c+e}\right)\ket c 
\\ =& \frac{1}{\sqrt{\abs{\cC}}} \sum_{c\in\cC}\ket{\Psi_c}\ket c 
\end{align}

\State \textbf{Uncompute using a solver for the quantum decoding problem}, outputting the state

\begin{equation}
     \frac{1}{\sqrt{\abs{\cC}}} \sum_{c\in\cC}\ket{\Psi_c}\ket 0
\end{equation}

\State \textbf{Transformation to syndrome picture:} By \Cref{lem:codeword_to_syndrome}, the above "codeword" state can be transformed into the ``syndrome" state
\begin{equation} \sum_{e\in\ff_2^m} g(e)\ket{ H e}. \end{equation}

\State Apply $U_{\mathrm{sympl}}^\dagger$ and trace out the second register.

\end{algorithmic}
\end{algorithm}

Step~2 describes the idealized action of the decoder.
If the decoder succeeds deterministically, it exactly uncomputes the codeword register and the algorithm prepares \(\rho_f\).
In general, the decoder may fail to identify $c$ with a certain nonzero probability $\pfail$.
As long as the $\pfail$ is small enough, we will show that the uncomputation step can still be performed approximatively.
We briefly note that whenever $\pfail>0$, the ideal target state $\rho_f$ might not be perfectly normalized, i.e., $\tr[\rho_f]\neq 1$.

We next specify the coherent implementation of the decoder and analyze how decoding errors affect the output.

\subsection{Robustness to imperfect decoding}

A general quantum decoder can be described by a POVM \(\{\Lambda_{c'}\}_{c'\in\cC}\), where \(c'\) is the decoder's estimate of the transmitted codeword \(c\).
Its average failure probability is
\begin{equation}
    \pfail\bigl(\{\Lambda_{c'}\}_{c'\in\cC}\bigr) \coloneqq 1-\frac{1}{\abs{\cC}} \sum_{c\in\cC} \braopket{\Psi_c}{\Lambda_c}{\Psi_c}
    \, .
\end{equation} 
To implement the uncomputation in Step~2 of \Cref{alg:algo1}, we require a coherent implementation of this POVM (i.e. a dilation), which is generally described by a unitary $U_{\mathrm{dec}}$ acting as
\begin{equation} \label{eq:general_quantum_decoder} U_{\mathrm{dec}} \bigl(\ket{\Psi_c}\ket{0_{\mathrm{anc}}}\bigr) = \sum_{c'\in\cC} \ket{f_{c,c'}}\ket{c'}, \qquad \norm{\ket{f_{c,c'}}}_2^2 = \braopket{\Psi_c}{\Lambda_{c'}}{\Psi_c}. \end{equation} 

Here, \(\ket{c'}\) stores the decoder's output, while \(\ket{f_{c,c'}}\) denotes the corresponding subnormalized workspace state. Since \(\{\Lambda_{c'}\}_{c'\in\cC}\) is a POVM, for every \(c\in\cC\),
\(
\sum_{c'\in\cC}\norm{\ket{f_{c,c'}}}_2^2=1.
\)
Hence, the average decoding failure probability is
\begin{equation}
    \pfail
    =
    1-\frac{1}{\abs{\cC}}
    \sum_{c\in\cC}
    \norm{\ket{f_{c,c}}}_2^2.
\end{equation}
Applying the decoder coherently to the uniform superposition of noisy codewords, with the corresponding codeword \(c\) stored in a reference register, gives
\begin{equation}
\label{eq:coherent-decoder-output}
\frac{1}{\sqrt{\abs{\cC}}}
\sum_{c\in\cC}
\ket{\Psi_c}\ket{c}\ket{0_{\mathrm{anc}}}
\xmapsto{U_{\mathrm{dec}}}
\frac{1}{\sqrt{\abs{\cC}}}
\sum_{c,c'\in\cC}
\ket{f_{c,c'}}\ket{c'}\ket{c}.
\end{equation}

We now use this coherent implementation to approximately realize the codeword-uncomputation step required in \Cref{alg:algo1}.

\begin{algorithm}[!htbp] \caption{Coherent codeword uncomputation} \label{alg:postselected-codeword-uncomputation} \textbf{Input:} The state \( \frac{1}{\sqrt{\abs{\cC}}} \sum_{c\in\cC} \ket{\Psi_c}_{A}\ket c_{R}, \) and a coherent decoder \(U_{\mathrm{dec}}\).

\medskip 

\textbf{Output:} A state approximating \( \frac{1}{\sqrt{\abs{\cC}}} \sum_{c\in\cC} \ket{\Psi_c}_{A}\ket{0^m}_{R}. \) \begin{algorithmic}[1] \State Append a decoder-output register \(D\) initialized to \(\ket{0_{\mathrm{anc}}}\). 
\State Apply \(U_{\mathrm{dec}}\) to registers \(A\) and \(D\):
\begin{equation}
    \frac{1}{\sqrt{\abs{\cC}}} \sum_{c\in\cC} \ket{\Psi_c}_{A}\ket c_{R}\ket{0_{\mathrm{anc}}}_{D} \longmapsto \frac{1}{\sqrt{\abs{\cC}}} \sum_{c,c'\in\cC} \ket{f_{c,c'}}_{A}\ket c_{R}\ket{c'}_{D}. 
\end{equation}
\State Add the decoder output into the reference register:
\begin{equation}
    \ket c_{R}\ket{c'}_{D} \longmapsto \ket{c+c'}_{R}\ket{c'}_{D}.
\end{equation}
\State Measure \(R\) and postselect on the outcome \(0^m\). Postselecting the reference register \(R\) on \(\ket{0^m}\) retains precisely the correctly decoded branches \(c'=c\). The postselection succeeds with probability \(  1-\pfail. \) Conditioned on success, the normalized postselected state is \begin{equation} \label{eq:decoder_postselected_state} \ket{\Psi} \coloneqq \frac{1}{\sqrt{\abs{\cC}(1-\pfail)}} \sum_{c\in\cC} \ket{f_{c,c}}_{A} \ket{0^m}_{R} \ket c_{D}. \end{equation}
\State Apply \(U_{\mathrm{dec}}^\dagger\) to registers \(A\) and \(D\). \State \Return the registers \(A\) and \(R\). \end{algorithmic} \end{algorithm}

For comparison, applying \(U_{\mathrm{dec}}\) to the ideal output of the uncomputation procedure gives the ideal output of that step \begin{equation} \label{eq:ideal_coherent_decoder_output} \ket{\Psi'} \coloneqq \frac{1}{\sqrt{\abs{\cC}}} \sum_{c,c'\in\cC} \ket{f_{c,c'}}_{A} \ket{0^m}_{R} \ket{c'}_{D} \end{equation}
(which might in general not be a properly normalized state).
Applying \(U_{\mathrm{dec}}^\dagger\) to this state recovers 
\begin{equation}
    \frac{1}{\sqrt{\abs{\cC}}} \sum_{c\in\cC} \ket{\Psi_c}_{A}\ket{0^m}_{R}
    \, ,
\end{equation}
which is the ideal output of the codeword-uncomputation step. The postselected state contains only the correctly decoded branches, whereas the ideal state also contains the branches \(c'\neq c\). Although \(\pfail\) controls the total weight of these incorrect branches, their coherent contributions can interfere and lead to a larger error. The following theorem bounds this contribution through a coherent interference term we denote by \(\alpha\).

\begin{theorem}[Error and runtime guarantees for \cref{alg:algo1}]\label{thm:analysis_codeword}
    Consider a function $f:\ff_2^m\rightarrow\mathbb{C}$ such that $\norm{f}_2=1$ and the state $\sum_ef(e)\ket{e}$ can be efficiently prepared.
    Consider a quantum decoding algorithm for the symplectic code $\mathcal{C}_{\mathcal H}$ which can be efficiently realized and which fails with probability $\pfail$.
    Furthermore, assume that $\frac{1}{1-\pfail} = \mathcal{O}(\mathrm{poly}(m))$.
Then, the above algorithm is efficient and prepares the state $\tilde\rho$ which is close to $\rho_f$
    \begin{equation}
        \norm{\tilde\rho - \rho_f}_1 \leq \pfail + 2\alpha + \alpha^2
        \, .
    \end{equation}
    where $\alpha$ is 
    \begin{equation}
        \alpha\coloneqq
        \sqrt{ \frac{1}{\abs{\cC}}\sum_{\bar c\in\cC} \norm{\sum_{c\neq \bar c} \ket{f_{c,\bar c}}}^2 }
        \, .
    \end{equation}
\end{theorem}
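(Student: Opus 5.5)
We bound the distance between the normalized postselected state $\ket{\Psi}$ of \cref{eq:decoder_postselected_state} and the (possibly subnormalized) ideal state $\ket{\Psi'}$ of \cref{eq:ideal_coherent_decoder_output}. Every subsequent step of \cref{alg:algo1} is unitary: $U_{\mathrm{dec}}^\dagger$, the transformation of \cref{lem:codeword_to_syndrome}, and $U_{\mathrm{sympl}}^\dagger$. Tracing out the purifying register can only decrease trace distance. So it suffices to bound $\norm{\proj{\Psi}-\proj{\Psi'}}_1$. Under the ideal map, $\ket{\Psi'}$ is sent to $(A_f\otimes\mathds 1)\ket{\Phi^n}$, whose reduced state is exactly $\rho_f$, including any lack of normalization. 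Efficiency follows because the only non-unitary step is the postselection, which succeeds with probability $1-\pfail$ and is repeated an expected $\frac{1}{1-\pfail}=\mathrm{poly}(m)$ times.

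Next we decompose $\ket{\Psi'}$ by reindexing with $\bar c=c'$. Write $\ket{\Psi'}=\ket{\phi}+\ket{\chi}$, where
\[
\ket{\phi}=\tfrac{1}{\sqrt{\abs{\cC}}}\sum_{\bar c}\ket{f_{\bar c,\bar c}}\ket{0^m}\ket{\bar c}
\quad\text{and}\quad
\ket{\chi}=\tfrac{1}{\sqrt{\abs{\cC}}}\sum_{\bar c}\Bigl(\sum_{c\neq\bar c}\ket{f_{c,\bar c}}\Bigr)\ket{0^m}\ket{\bar c}.
\]
The $\ket{\bar c}_D$ are orthogonal, so $\norm{\chi}^2=\frac{1}{\abs{\cC}}\sum_{\bar c}\norm{\sum_{c\neq\bar c}\ket{f_{c,\bar c}}}^2=\alpha^2$. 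We also have $\norm{\phi}^2=1-\pfail$ and $\ket{\Psi}=\ket{\phi}/\norm{\phi}$.

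We then bound the trace norm of the difference of rank-one operators. With $\ket a=\ket\Psi$ and $\ket b=\ket{\Psi'}=\sqrt{1-\pfail}\ket\Psi+\ket\chi$,
\[
\proj{a}-\proj{b}=\pfail\proj{\Psi}-\sqrt{1-\pfail}\bigl(\ketbra{\Psi}{\chi}+\ketbra{\chi}{\Psi}\bigr)-\proj{\chi}.
\]
The triangle inequality with $\norm{\ketbra{u}{v}}_1=\norm u\norm v$ gives
\[
\norm{\proj{\Psi}-\proj{\Psi'}}_1\le\pfail+2\sqrt{1-\pfail}\,\alpha+\alpha^2\le\pfail+2\alpha+\alpha^2.
\]

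The main subtlety is not the estimate itself but two bookkeeping points. First, the returned registers $A,R$ after applying $U_{\mathrm{dec}}^\dagger$ must correctly track the ideal $\ket{\Psi'}$: the $D$ register returns to $\ket{0_{\mathrm{anc}}}$ only in the ideal case, so we compare on the full $A,R,D$ system before tracing out. Second, the comparison must be against the unnormalized target $\rho_f$, so we avoid renormalizing $\ket{\Psi'}$. Both are handled by treating every map after the postselection as a fixed isometry followed by a partial trace, which is trace-norm contractive on arbitrary Hermitian operators.
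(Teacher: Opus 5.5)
Your proposal is correct and follows the paper's proof essentially line for line. It uses the same decomposition $\ket{\Psi'}=\sqrt{1-\pfail}\ket{\Psi}+\ket{B}$ with $\norm{B}=\alpha$, the same rank-one expansion bounded by the triangle inequality, and the same contractivity and postselection-efficiency arguments. Your extra bookkeeping — comparing on the full $A,R,D$ system, and using trace-norm contractivity of trace-preserving maps on the non-normalized difference — only makes explicit what the paper leaves implicit.
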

\begin{proof}
    The post-selection probability is $1-\pfail$, so if $\frac{1}{1-\pfail} = \mathcal{O}(\mathrm{poly}(m))$ then all steps of the algorithm, including the post-selection, are efficient.  
    
    The only error in the above derivation occurs in the decoding step. By contractivity of the trace norm, it therefore suffices to bound the trace distance between the post-measurement state \begin{equation} 
    \ket{\Psi}= \frac{1}{\sqrt{|\cC|(1-\pfail)}} \sum_{c\in\cC} \ket{f_{c,c}}\otimes\ket{c} 
    \end{equation} and the ideal state
    \begin{equation} 
    \ket{\Psi'}=\frac{1}{\sqrt{|\cC|}} \sum_{c,\bar c\in\cC} \ket{f_{c,\bar c}}\otimes\ket{\bar c}
    \end{equation}
    (which may in general not be properly normalized).
    We can write $\ket{\Psi'}=\sqrt{1-\pfail}\ket{\Psi} + \ket{B}$ where
    \begin{equation}
        \ket{B} \coloneqq \frac{1}{\sqrt{\abs{\cC}}}\sum_{c\neq \bar c}\ket{f_{c,\bar c}}\otimes\ket{\bar c}
        \, .
    \end{equation}
    Observe that
    \begin{align}
        \norm{ \proj{\Psi} - \proj{\Psi'} }_1 
        &= \norm{ \pfail\proj{\Psi} - \sqrt{1-\pfail}(\ketbra{\Psi}{B} + \ketbra{B}{\Psi}) - \proj{B} }_1 \\
        &\leq \pfail\norm{\Psi}^2 + 2\sqrt{1-\pfail}\norm{\Psi}\norm{B} + \norm{B}^2
    \end{align}
    The desired statement follows from $\norm{B}=\alpha$.
\end{proof}

In summary, the QD-HDQI algorithm can efficiently and successfully prepare the state $\rho_f$ if the following conditions are fulfilled:
\begin{itemize} 
\item The coefficient state \( \sum_{e\in\ff_2^m} f(e)\ket e \) can be prepared efficiently.
\item The coherent codeword decoder can be implemented efficiently and has small failure probability \(\pfail\)
\item The decoder's coherent interference term \[ \alpha \coloneqq \left( \frac{1}{\abs{\cC}} \sum_{\bar c\in\cC} \left\| \sum_{c\neq\bar c}\ket{f_{c,\bar c}} \right\|_2^2 \right)^{1/2} \] is small. \end{itemize}

In the next section, we instantiate this framework for Gibbs-state
preparation for commuting Pauli Hamiltonians.

\section{Application to commuting Hamiltonians}\label{sec:commuting}
As discussed at the end of \cref{sec:qdhdqi}, the successful preparation of the Gibbs state of a Hamiltonian $\cH$ via QD-HDQI requires three conditions to be satisfied.
In this section, we will show that these three conditions simplify significantly whenever $\cH$ is commuting, i.e., only contains pairwise commuting Pauli terms $P_i$.

Concretely, the noise state is a simple product state which can always be prepared efficiently, and the quantum decoding problem can be described as the decoding problem associated to transmitting the symplectic code over a series of independent binary-input pure-state classical-quantum channels.
Furthermore, we can show that $\alpha$ vanishes by averaging over the uniformly random values of $y$.

These simplifications are a consequence of the fact that, for commuting Hamiltonians, the Gibbs state factorizes in a convenient manner, yielding a particularly simple and elegant noise function $f$.
We start by describing this noise function in the following subsection.

\subsection{Noise function from the Gibbs state expansion}
Recall that for QD-HDQI, we require a function $f$ which fulfills $\rho_f = \exp(-\beta\cH) / Z_{\beta}$, which is ensured by choosing,
\begin{equation}
    \sum_{e\in\ff_2^m} f(e) Q_e = \sqrt{\frac{2^n}{Z_\beta}}\exp\left(-\frac{\beta}{2}\cH\right)
    \, .
\end{equation}
To find such a noise function, we use the crucial observation that the Gibbs state factorizes for commuting Hamiltonians
\begin{equation}
    \exp\left(-\frac{\beta}{2}\cH\right)
    = \prod_{i=1}^m
    \exp\left(-\frac{\beta}{2}(-1)^{y_i}v_i P_i\right) 
    = \prod_{i=1}^m
    \left(
        \cosh\left(-\frac{\beta}{2}(-1)^{y_i}v_i\right)\mathds{1}
        +
        \sinh\left(-\frac{\beta}{2}(-1)^{y_i}v_i\right)P_i
    \right) 
    \, .
\end{equation}
This directly translates to a simple expansion of $\exp(-\beta\cH/2)$ in terms of $Q_e$.
\begin{lemma}\label{lemma:coeff_function}
    Let $\cH=\sum_{i=1}^m(-1)^{y_i}v_iP_i$, where $v_i\geq 0$, be a Hamiltonian with mutually commuting Pauli terms.
    Define
    \begin{equation}
        p_i \coloneqq \frac{\sinh^2(\beta v_i/2)}{\cosh^2(\beta v_i/2)+\sinh^2(\beta v_i/2)} =\frac{1}{2}\left(1-\sech(\beta v_i)\right)
        \quad\text{and}\quad
        Z_\beta' \coloneqq \frac{Z_\beta}{2^n\prod_{i=1}^m\cosh(\beta v_i)}
        \, .
    \end{equation}
    Then the function $\tilde f:\ff_2^m\rightarrow\mathbb{C}$ defined by
    \begin{equation}
        \tilde f(e) \coloneqq \frac{1}{\sqrt{Z_\beta'}} (-1)^{(y+\vec 1)\cdot e} \prod_{i=1}^m \sqrt{(1-p_i)^{(1-e_i)}p_i^{e_i}}
    \end{equation}
    satisfies
    \begin{equation}
        \sum_{e\in\ff_2^m}\tilde f(e)Q_e = \sqrt{\frac{2^n}{Z_\beta}}\exp\left(-\frac{\beta}{2}\cH\right)
        \, .
    \end{equation}
\end{lemma}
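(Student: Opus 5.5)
The plan is a direct computation: expand the product form of $\exp(-\tfrac{\beta}{2}\cH)$ displayed just before the lemma, and match coefficients term by term.

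First, since the $P_i$ mutually commute, $\exp(-\tfrac{\beta}{2}\cH)=\prod_{i=1}^m\exp(-\tfrac{\beta}{2}(-1)^{y_i}v_iP_i)$. Because $P_i^2=\mathds 1$, each factor equals $\cosh(\tfrac{\beta v_i}{2})\mathds 1-(-1)^{y_i}\sinh(\tfrac{\beta v_i}{2})P_i$, using that $\cosh$ is even and $\sinh$ is odd. I will keep the factors in the order $i=1,\dots,m$.

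Second, I expand the product by choosing, for each $i$, either the $\mathds 1$ term ($e_i=0$) or the $P_i$ term ($e_i=1$). The operator selected in the branch labelled by $e$ is then exactly $P_1^{e_1}\cdots P_m^{e_m}=Q_e$, with no reordering needed. The scalar coefficient of this branch is
\begin{equation*}
\prod_i \cosh(\tfrac{\beta v_i}{2})^{1-e_i}\bigl(-(-1)^{y_i}\sinh(\tfrac{\beta v_i}{2})\bigr)^{e_i}=(-1)^{(y+\vec 1)\cdot e}\prod_i\cosh(\tfrac{\beta v_i}{2})^{1-e_i}\sinh(\tfrac{\beta v_i}{2})^{e_i}.
\end{equation*}

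Third, I rewrite the hyperbolic functions in terms of $p_i$. Since $\cosh^2(x/2)+\sinh^2(x/2)=\cosh x$, the definition gives $p_i=\sinh^2(\beta v_i/2)/\cosh(\beta v_i)$ and $1-p_i=\cosh^2(\beta v_i/2)/\cosh(\beta v_i)$. Because $\beta,v_i\ge 0$, both hyperbolic values are nonnegative, so taking square roots yields
\begin{equation*}
\cosh(\tfrac{\beta v_i}{2})=\sqrt{\cosh(\beta v_i)}\sqrt{1-p_i}, \qquad \sinh(\tfrac{\beta v_i}{2})=\sqrt{\cosh(\beta v_i)}\sqrt{p_i}
\end{equation*}
without any sign ambiguity. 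The coefficient of $Q_e$ therefore becomes $\sqrt{\prod_i\cosh(\beta v_i)}\,(-1)^{(y+\vec1)\cdot e}\prod_i\sqrt{(1-p_i)^{1-e_i}p_i^{e_i}}$. The alternative form $p_i=\tfrac12(1-\sech(\beta v_i))$ follows from $\sinh^2(x/2)=(\cosh x-1)/2$.

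Finally, I multiply by $\sqrt{2^n/Z_\beta}$. The constant prefactor is $\sqrt{2^n\prod_i\cosh(\beta v_i)/Z_\beta}=1/\sqrt{Z_\beta'}$ by the definition of $Z_\beta'$, so the coefficient of $Q_e$ is exactly $\tilde f(e)$. I do not expect a genuine obstacle. The only points needing care are that the expansion order matches the definition of $Q_e$, so no phases appear, and that the sign convention produces $(y+\vec 1)\cdot e$ in the exponent.
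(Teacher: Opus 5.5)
Your proposal is correct and follows the paper's proof. Both factor $\exp(-\tfrac{\beta}{2}\cH)$ using commutativity, write each factor as $\cosh(\tfrac{\beta v_i}{2})\mathds 1+(-1)^{y_i+1}\sinh(\tfrac{\beta v_i}{2})P_i$, rewrite these coefficients as $\sqrt{1-p_i}$ and $\sqrt{p_i}$ with the constant absorbed into $1/\sqrt{Z_\beta'}$, and expand the product; the only difference is that the paper substitutes $p_i$ before expanding, and your sign check via $\beta,v_i\geq 0$ is a detail the paper leaves implicit.
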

Here, $\vec{1}$ denotes the all-1 vector.
\begin{proof}
Since the Pauli terms commute,
    \begin{align}
        \sqrt{\frac{2^n}{Z_\beta}}
        \exp\left(-\frac{\beta}{2}\cH\right)
        &=
        \sqrt{\frac{2^n}{Z_\beta}}
        \prod_{i=1}^m
        \left(
            \cosh\left(\frac{\beta v_i}{2}\right)\mathds{1}
            +(-1)^{y_i+1}
            \sinh\left(\frac{\beta v_i}{2}\right)P_i
        \right) \\
        &=
        \frac{1}{\sqrt{Z_\beta'}}
        \prod_{i=1}^m
        \left(
            \sqrt{1-p_i}\,\mathds{1}
            +(-1)^{y_i+1}\sqrt{p_i}\,P_i
        \right).
    \end{align}
    Expanding the product yields
    \begin{equation}
        \frac{1}{\sqrt{Z_\beta'}}
        \sum_{e\in\ff_2^m}
        (-1)^{(y+\vec 1)\cdot e}
        \prod_{i=1}^m
        (1-p_i)^{(1-e_i)/2}p_i^{e_i/2}Q_e
        =
        \sum_{e\in\ff_2^m}\tilde f(e)Q_e,
    \end{equation}
    which proves the desired identity.
\end{proof}
Notice that $\tilde f$ is not yet a valid noise function that we can utilize in QD-HDQI: It may generally not be properly normalized, since $\norm{\tilde f}_2=1/\sqrt{Z_\beta'}$ and $Z_\beta'$ is generally not equal to $1$.
Instead our strategy will be to pick the normalized version of $\tilde f$ as our noise function $f$, i.e.,
\begin{equation}\label{eq:f_commuting}
    f(e) \coloneqq \frac{\tilde f(e)}{\norm{\tilde f}_2} = (-1)^{(y+\vec 1)\cdot e} \prod_{i=1}^m \sqrt{(1-p_i)^{(1-e_i)}p_i^{e_i}}
    \, .
\end{equation}
The state prepared by QD-HDQI will therefore be $\rho_f$ instead of $\rho_{\tilde f}=\exp(-\beta\cH)/Z_\beta$.
We will show that the error induced by this approximation is vanishing, by showing that $Z_{\beta}'\approx 1$ for uniformly random choices of $y$. 

In the rest of this subsection, we verify that $f$ is a well-chosen noise function for QD-HDQI by verifying that the noise state is efficiently preparable, by characterizing the associated quantum decoding problem, and by showing that $Z_{\beta}'\approx 1$ on average.

\paragraph{Noise state preparation}
Notice that the noise state $\sum_e f(e)\ket{e}$ can be efficiently prepared by first preparing the product state
\begin{equation}
    \bigotimes_{i=1}^m\left( \sqrt{1-p_i}\ket{0} + \sqrt{p_i}\ket{1} \right)
\end{equation}
and then applying the operator $Z^{y\oplus 1}$.
This can be achieved with a constant-depth quantum circuit.

\paragraph{The corresponding quantum decoding problem is over a pure state channel}
Let us now have a look at the quantum decoding problem that needs to be solved within QD-HDQI with the noise function defined in \cref{eq:f_commuting}.
The task is to recover $c\in\cC$ from the state
\begin{equation}
    \ket{\Psi_c}=\sum_e f(e)\eta(Q_e)\ket{c+e}
    \, .
\end{equation}
The amplitudes of this state are a bit cumbersome, because the phase terms $(-1)^{(y+1)\cdot e}$ and $\eta(Q_e)$ generally cause correlations across the qubits.
However, a simple unitary basis change with $U_{\mathrm{phase}}^{\dagger}$ and $Z^{y\oplus 1}$ allows us to remove these phases
\begin{align}
    Z^{y \oplus 1}U_{\mathrm{phase}}^{\dagger}\ket{\Psi_c}
    &=\overline{\eta(Q_c)}(-1)^{(y\oplus 1)\cdot c}\sum_{e\in\ff_2^m} \prod\limits_{i=1}^m \left(\sqrt{(1-p_i)^{1-e_i}p_i^{e_i}} \right)\ket{c+e} \\
    &= \overline{\eta(Q_c)}(-1)^{(y\oplus 1)\cdot c} \left(\bigotimes_{i=1}^m \psc[p_i](c_i)\right)
\end{align}
where we used the fact that
\begin{equation}
    \eta(Q_{c+e}) = \eta(Q_cQ_e) = \eta(Q_c)\eta(Q_e)
\end{equation}
because the Hamiltonian is commutative and $Q_c$ is proportional to the identity.
Thus, the quantum decoding problem to be solved corresponds to transmitting the symplectic code over $\otimes_{i=1}^m \psc[p_i]$.
In particular, if we use a classical decoder to solve this quantum decoding problem, then the effective noise seen by said classical decoder corresponds to $\otimes_{i=1}^m \bsc[p_i]$.

In the special case where all $v_i$ are chosen to be $1$, then all channel probabilities are equal $p_i\equiv p$ and the information theoretic limitations of quantum and classical decoders are particularly easy to state.
In particular, the largest $p$ for which the quantum decoding problem can be used is limited by the Holevo bound.
And in particular, if a classical decoder is used, then the largest achievable $p$ is limited by the stricter Shannon bound.

\paragraph{$\tilde f$ is almost normalized}
We end this section by showing that indeed $Z_{\beta}'$ is typically very close to $1$, at least when we average over the signs $y$ of the Hamiltonian and whenever we are in the regime where the associated quantum decoding problem over $\otimes_i\psc[p_i]$ is information theoretically well solvable.
\begin{lemma}\label{lem:average_normalization}
    Let $\cH$ be a Pauli Hamiltonian of the form \cref{eq:hamiltonian} (not necessarily commuting).    
    Consider a family of functions $g_y:\ff_2^m\rightarrow\mathbb{C}$ of the form
    \begin{equation}
        g_y(e) = \frac{1}{\sqrt{K_y}}(-1)^{e\cdot y}h(e)
    \end{equation}
    for $y\in\ff_2^m$, $K_y>0$ and some $h:\ff_2^m\rightarrow\mathbb{C}$ independent of $y$ s.t. $\norm{h}_2=1$.
    Furthermore, assume that
    \begin{equation}
        \forall y\in\ff_2^m: \frac{1}{\sqrt{2^n}}\sum_e g_y(e) Q_e = \sqrt{\rho}_y
    \end{equation}
    for some normalized quantum states $\rho_y$ (i.e., $\rho_y\geq 0$ and $\tr[\rho_y]=1$).

    Furthermore, suppose there exists a quantum decoder that can solve the quantum decoding problem associated with the symplectic code $\cC$ and the noise function $h(e)\eta(Q_e)$ with a failure probability $\pfail$.
    Then,
    \begin{equation}
        \mathbb{E}_{Y\sim\mathrm{Uniform}(\ff_2^m)}\left[ \abs{K_y-1} \right]
        \leq 2\sqrt{\pfail}
        \, .
    \end{equation}
\end{lemma}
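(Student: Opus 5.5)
The plan is to express $K_y$ through the overlaps of the noisy-codeword states, average over $y$ by Fourier analysis (Parseval), and then bound the remaining off-diagonal overlaps by the decoder's failure probability; this last bound is the step I expect to be hardest.

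\textbf{Step 1: write $K_y$ as a character sum over the code.} Set $\varphi(e)\coloneqq h(e)\eta(Q_e)$ and $\ket{\Psi_c}\coloneqq\sum_e\varphi(e)\ket{c+e}=X^c\ket{\Psi_0}$, the noisy codewords of the decoding problem in the statement.

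Normalization $\tr[\rho_y]=1$ gives $2^{-n}\tr[A_yA_y^\dagger]=1$ with $A_y=\sum_e g_y(e)Q_e$. We use $2^{-n}\tr[Q_eQ_{e'}^\dagger]$, which vanishes unless $He=He'$, i.e.\ $e'=e+c$ with $c\in\cC$. In that case it equals $\eta(Q_e)\overline{\eta(Q_{e+c})}$, since both operators are phases times the same Pauli string. Substituting,
\begin{equation*}
K_y=\sum_{c\in\cC}(-1)^{c\cdot y}S(c),\qquad S(c)\coloneqq\sum_{e}\varphi(e)\overline{\varphi(e+c)}=\braket{\Psi_c}{\Psi_0}.
\end{equation*}
Since $S(0)=\norm{h}_2^2=1$, this gives $K_y-1=\sum_{c\in\cC\setminus\{0\}}(-1)^{c\cdot y}S(c)$.

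\textbf{Step 2: average over $y$.} The characters $y\mapsto(-1)^{c\cdot y}$ for distinct $c\in\ff_2^m$ are orthonormal under the uniform measure. Jensen/Cauchy--Schwarz followed by Parseval therefore gives
\begin{equation*}
\mathbb{E}_Y\abs{K_Y-1}\le\Bigl(\mathbb{E}_Y\abs{K_Y-1}^2\Bigr)^{1/2}=\Bigl(\sum_{c\neq0}\abs{\braket{\Psi_0}{\Psi_c}}^2\Bigr)^{1/2}.
\end{equation*}
So it suffices to show $\sum_{c\neq0}\abs{\braket{\Psi_0}{\Psi_c}}^2\le4\pfail$.

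\textbf{Step 3: bound the overlaps by $\pfail$.} Because $\ket{\Psi_c}=X^c\ket{\Psi_0}$ and $\cC$ is a group, $\abs{\braket{\Psi_c}{\Psi_{c'}}}$ depends only on $c+c'$. Hence
\begin{equation*}
\sum_{c\neq0}\abs{\braket{\Psi_0}{\Psi_c}}^2=\frac{1}{\abs{\cC}}\sum_{c\neq c'}\abs{\braket{\Psi_c}{\Psi_{c'}}}^2,
\end{equation*}
which is the off-diagonal Hilbert--Schmidt mass of the normalized Gram matrix of $\{\ket{\Psi_c}\}$. I would work with the coherent dilation $U_{\mathrm{dec}}\ket{\Psi_c}\ket{0}=\sum_{c'}\ket{f_{c,c'}}\ket{c'}$ from \cref{eq:general_quantum_decoder}. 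This turns each overlap into $\braket{\Psi_c}{\Psi_{c'}}=\sum_{d}\braket{f_{c,d}}{f_{c',d}}$.

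Then split according to whether $d\in\{c,c'\}$. Terms with $d\notin\{c,c'\}$ involve two failure branches. Terms with $d=c$ are bounded by $\norm{f_{c',c}}$, a failure branch of $c'$, and terms with $d=c'$ by $\norm{f_{c,c'}}$. The aim is to bound the resulting sums via Cauchy--Schwarz, using $\frac1{\abs\cC}\sum_c\sum_{d\neq c}\norm{f_{c,d}}^2=\pfail$ together with the completeness $\sum_c\Lambda_c=\mathds 1$. The two "one correct branch" contributions should produce the factor $4=(1+1)^2$.

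\textbf{Main obstacle.} Step 3 is where I expect the difficulty. A naive Cauchy--Schwarz over $c$ requires a frame bound $\sum_c\abs{\braket{u}{\Psi_c}}^2\lesssim\norm{u}^2$, which can fail for these highly non-orthogonal states. The first thing I would try is to avoid the frame bound. Write $\braket{\Psi_c}{\Psi_{c'}}=\braopket{\Psi_c}{\Lambda_c+(\mathds1-\Lambda_c)}{\Psi_{c'}}$ and sum over $c'\neq c$. The term $\sum_{c'\neq c}\ket{\Psi_{c'}}\bra{\Psi_{c'}}$ can then be compared, through the translation symmetry (in the Hadamard basis it is block-diagonal over cosets of $\cC^\perp$), with the measurement operators $\Lambda_{c'}$, whose total is the identity. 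This comparison is what should control $\sum_{c'\neq c}\abs{\braopket{\Psi_c}{\Lambda_c}{\Psi_{c'}}}^2$ by $\braopket{\Psi_c}{\mathds1-\Lambda_c}{\Psi_c}$.
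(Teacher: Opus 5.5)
Your Steps 1 and 2 are correct. Step 1 is exactly the paper's identification of $K_y$ with the eigenvalue $\sum_{d\in\cC}(-1)^{y\cdot d}\braket{\Psi_d}{\Psi_0}$ of the Gram matrix $G_{c,c'}=\braket{\Psi_c}{\Psi_{c'}}$, and it gives $\mathbb{E}_Y[K_Y]=1$.

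The gap is Step 3: the inequality $\sum_{c\neq0}\abs{\braket{\Psi_0}{\Psi_c}}^2\le4\pfail$ is false, so no argument via the dilation or via $\sum_c\Lambda_c=\mathds{1}$ can prove it. The frame-bound obstruction you noticed is real: $\norm{\sum_c\proj{\Psi_c}}=\max_y K_y$, which can be of order $\abs{\cC}$. Your left-hand side equals $\mathbb{E}_Y[K_Y^2]-1$, a second moment of the Gram spectrum, whereas decoding success is controlled only by $\mathbb{E}_Y[\sqrt{K_Y}]$.

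Here is a concrete failure. Let the Hadamard-basis weight of $\ket{\Psi_0}$ put mass $\epsilon$ on one coset of $\cC^\perp$ and spread $1-\epsilon$ evenly over the other $\abs{\cC}-1$ cosets. Then $\abs{\braket{\Psi_0}{\Psi_c}}\approx\epsilon$ for every $c\neq0$, so your left side is $\approx\epsilon^2\abs{\cC}$. Meanwhile the pretty good measurement succeeds with probability $(\tr\sqrt{G})^2/\abs{\cC}^2\approx1-\epsilon$. This example satisfies the lemma's hypotheses: take $n=1$, all $P_i=Z$, and $h$ supported on $\cC$ with $\sum_e(-1)^{e\cdot y}h(e)=\sqrt{K_y}$ for this spiky profile, so that $\rho_y=\mathds{1}/2$.

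The same failure occurs in the paper's commuting application. There $\abs{\braket{\Psi_0}{\Psi_c}}=(\tanh\beta)^{\abs{c}}$, so your target becomes the union-bound-type sum $\sum_{c\neq0}\tanh^{2\abs{c}}\beta$. For random linear codes of rate $R$ this sum is exponentially large once $R>1-\log_2(1+\tanh^2\beta)$, which is strictly below the Holevo bound $h_2\bigl(\tfrac{1-\tanh\beta}{2}\bigr)$. The information is lost at the Jensen step in Step 2.

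The missing idea is to apply Cauchy--Schwarz to $\sqrt{K_Y}$ rather than to $K_Y-1$. Writing $\abs{K-1}=\abs{\sqrt K-1}(\sqrt K+1)$ gives
\begin{equation*}
\mathbb{E}_Y\abs{K_Y-1}\le\sqrt{\mathbb{E}_Y(\sqrt{K_Y}-1)^2\,\mathbb{E}_Y(\sqrt{K_Y}+1)^2}=\sqrt{(2-2\mu)(2+2\mu)}=2\sqrt{1-\mu^2},\qquad \mu\coloneqq\mathbb{E}_Y\sqrt{K_Y},
\end{equation*}
which uses only $\mathbb{E}_Y[K_Y]=1$. Since $K_Y$ is a uniformly random eigenvalue of $G$, you have $\mu=\tr\sqrt{G}/\abs{\cC}$. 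The ensemble $\{\ket{\Psi_c}\}$ is geometrically uniform, so the pretty good measurement is optimal and succeeds with probability $\mu^2$; hence $1-\mu^2\le\pfail$. This is the paper's proof. It needs only the first moment of the spectrum together with $\mathbb{E}\sqrt{K}$, which is why it survives the heavy upper tail that breaks your second-moment bound.
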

We briefly highlight that the existence of the quantum decoder in the above lemma should be understood as an information theoretic statement, i.e., we do not require that decoder to be efficiently implementable.
\begin{proof}
   Consider the expansion of the Hermitian operator $\sqrt{\rho_y}$ into the $n$-qubit Pauli basis $\widetilde{\mathcal{P}}_n=\{B_1,B_2,\dots,B_{4^n}\}$
    \begin{equation}
        \sqrt{\rho_y} = \sum_i c_i B_i
        \, .
    \end{equation}
    We have
    \begin{align}
        \sum_i c_i^2
        &= \frac{1}{4^n} \sum_{P\in\widetilde{\mathcal{P}}_n} \tr\left[ P \sqrt{\rho_y} \right]^2 \\
        &= \frac{1}{4^n} \sum_{P\in\widetilde{\mathcal{P}}_n} \tr\left[ (P\otimes P) (\sqrt{\rho_y}\otimes\sqrt{\rho_y}) \right] \\
        &= \frac{1}{2^n}\tr\left[ \mathrm{SWAP} (\sqrt{\rho_y}\otimes\sqrt{\rho_y}) \right] \\
        &= \frac{1}{2^n}\tr[\rho_y] \\
        &= \frac{1}{2^n}
    \end{align}
    where we used the SWAP trick.

    Next, observe that
    \begin{equation}
        c_i = \frac{\overline{\eta(B_i)}}{\sqrt{2^n}}\sum_{e:He=\symp(B_i)} g_y(e)\eta(Q_e)
        \, .
    \end{equation}
    Therefore, we have        
    \begin{equation}
        1
        = 2^n\sum_i \abs{c_i}^2
        = \sum_{s\in\ff_2^{2n}} \abs{ \sum_{e:He=s} g_y(e)\eta(Q_e) }^2
    \end{equation}
    or equivalently,
    \begin{equation}\label{eq:Ky_simplified}
        K_y
        = \sum_{s\in\ff_2^{2n}} \abs{ \sum_{e:He=s} h(e)(-1)^{e\cdot y}\eta(Q_e) }^2
    \end{equation}

    Next, let us consider the Gram matrix $G_{c,c'}\coloneqq \braket{\Psi_c}{\Psi_{c'}}$.
    Due to the symmetry $\ket{\Psi_c}=X^c\ket{\Psi_0}$, the matrix is circulant and the value on the off-diagonal indexed by $d\in\cC$ is given by
    \begin{equation}
        r(d) \coloneqq \braket{\Psi_d}{\Psi_0}
        = \sum_e \overline{h(e)\eta(Q_e)}h(e+d)\eta(Q_{e+d})
        \, .
    \end{equation}
    The eigenvalues of $G$ (indexed by $y\in\ff_2^m)$ are therefore given by
    \begin{equation}
        \lambda_y = \sum_{d\in\cC}(-1)^{y\cdot d}r(d)
        \, .
    \end{equation}
    On the other hand, expanding \Cref{eq:Ky_simplified}, we obtain
    \begin{align}
        K_y
        &= \sum_{e,e':e+e'\in\cC} \overline{h(e)\eta(Q_e)}h(e')\eta(Q_{e'})(-1)^{(e+e')\cdot y} \\
        &= \sum_{d\in\cC,e}\overline{h(e)\eta(Q_e)}h(e+d)\eta(Q_{e+d})(-1)^{d\cdot y} \\
        &= \lambda_y
        \, .
    \end{align}
    Thus, the random variable $K_y$, for uniformly random $Y\in\ff_2^m$, is distributed as a uniformly random eigenvalue of $G$ (sampled with
  multiplicity).
  In particular,
  \begin{equation}
      \mathbb E_Y[K_Y]
      =
      \frac{1}{\abs{\mathcal{C}}}\tr[G]
      =1
      \, .
  \end{equation}
  The ensemble $\{\ket{\Psi_c}\}_{c\in\cC}$ is geometrically uniform.
  For an equiprobable geometrically uniform pure-state ensemble, the pretty good measurement is optimal \cite{ban_optimum_1997,eldar_2001} and succeeds with probability
  \begin{equation}
      P_{\mathrm{succ}}^{\mathrm{opt}}
      =
      \frac{1}{\abs{\cC}^2}\left(\tr\sqrt G\right)^2
      =
      \left(\mathbb E_Y[\sqrt{K_Y}]\right)^2.
      \label{eq:GU_success_probability}
  \end{equation}
  Writing $\mu\coloneqq\mathbb E_Y[\sqrt{K_Y}]$, Cauchy-Schwarz gives us
  \begin{align}
      \mathbb E_Y[\abs{K_Y-1}]
      &=\mathbb E_Y[
          \abs{\sqrt{K_Y}-1}
          (\sqrt{K_Y}+1)
      ] \\
      &\leq
      \sqrt{
          \mathbb E_Y[(\sqrt{K_Y}-1)^2]
          \mathbb E_Y[(\sqrt{K_Y}+1)^2]
      } \\
      &= \sqrt{(2-2\mu)(2+2\mu)} \\
      &= 2\sqrt{1-P_{\mathrm{succ}}^{\mathrm{opt}}}
      \, .
  \end{align}
    
\end{proof}

\subsection{End-to-end analysis}
To prove that our improved HDQI procedure succeeds when equipped with a quantum decoder with low failure probability, it remains to prove that the term $\alpha$ in \cref{thm:analysis_codeword} is vanishing.
We can do this by relating $\alpha$ to the failure probability of the decoder when the signs $y$ in the Hamiltonian are chosen uniformly randomly.
\begin{lemma}\label{lem:average_alpha_codeword}
    Consider a family of noise functions $g_y:\ff_2^m\rightarrow\mathbb{C}$ of the form
    \begin{equation}
        g_y(e) = (-1)^{e\cdot y}h(e)
    \end{equation}
    for $y\in\ff_2^m$ and some function $h:\ff_2^m\rightarrow\mathbb{C}$ s.t. $\norm{h}_2=1$.
    The associated QDP problem is to retrieve the codeword $c\in\cC$ from the state $\ket{\Psi_c^y}\coloneqq \sum_e g_y(e)\eta(Q_e)\ket{c+e}$.
    Consider a quantum decoding algorithm that can decode the $y=0$ case (i.e. retrieve $c$ from $\ket{\Psi_c^0}$) with failure probability $\pfail$.
    This decoder induces a quantum decoder for arbitrary $y$ through the following procedure:
    \begin{itemize}
        \item Apply $Z^{y}$ on $\ket{\Psi_c^y}$ to obtain $(-1)^{y\cdot c}\ket{\Psi_c^0}$
        \item Apply the $y=0$ decoder to obtain $c$ from $\ket{\Psi_c^0}$.
    \end{itemize}
    Clearly, the failure probability $\pfail$ is then independent of $y$.
    Let $\alpha$ be defined as in \cref{thm:analysis_codeword}.
    Then,
    \begin{equation}
        \mathbb{E}_{Y\sim\mathrm{Uniform}(\ff_2^m)}\left[ \alpha^2 \right]
        = \pfail
        \, .
    \end{equation}
\end{lemma}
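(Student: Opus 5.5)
We will make the $y$-dependence of the workspace states $\ket{f^y_{c,c'}}$ explicit and then average $\alpha^2$ over $y$ using character orthogonality on the code.

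\textbf{Step 1: express the $y$-decoder through the $y=0$ decoder.} Let $U_0$ denote the coherent dilation of the $y=0$ decoder, with
\begin{equation*}
U_0\ket{\Psi^0_c}\ket{0_{\mathrm{anc}}}=\sum_{c'}\ket{f^0_{c,c'}}\ket{c'}.
\end{equation*}
The induced decoder for general $y$ is $U_y\coloneqq U_0(Z^y\otimes\mathds 1)$. Since $Z^y\ket{\Psi^y_c}=(-1)^{y\cdot c}\ket{\Psi^0_c}$, we get
\begin{equation*}
\ket{f^y_{c,c'}}=(-1)^{y\cdot c}\ket{f^0_{c,c'}}.
\end{equation*}
The squared norms are therefore independent of $y$, which confirms that $\pfail$ does not depend on $y$.

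\textbf{Step 2: expand $\alpha^2$ and average over $y$.} By definition,
\begin{equation*}
\alpha_y^2=\frac{1}{\abs{\cC}}\sum_{\bar c}\Bigl\|\sum_{c\neq\bar c}(-1)^{y\cdot c}\ket{f^0_{c,\bar c}}\Bigr\|^2=\frac{1}{\abs{\cC}}\sum_{\bar c}\sum_{c,c'\neq\bar c}(-1)^{y\cdot(c+c')}\braket{f^0_{c',\bar c}}{f^0_{c,\bar c}}.
\end{equation*}
For $c,c'\in\cC$ we have $\mathbb E_Y[(-1)^{Y\cdot(c+c')}]=\delta_{c,c'}$, because $c+c'$ is a nonzero vector of $\ff_2^m$ whenever $c\neq c'$. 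Only the diagonal terms survive the average:
\begin{equation*}
\mathbb E_Y[\alpha^2]=\frac{1}{\abs{\cC}}\sum_{\bar c}\sum_{c\neq\bar c}\norm{\ket{f^0_{c,\bar c}}}^2.
\end{equation*}

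\textbf{Step 3: identify the failure probability.} Exchange the order of summation and use the POVM normalization $\sum_{c'}\norm{\ket{f_{c,c'}}}^2=1$ for each $c$. The double sum then becomes
\begin{equation*}
\frac{1}{\abs{\cC}}\sum_c\bigl(1-\norm{\ket{f_{c,c}}}^2\bigr)=\pfail.
\end{equation*}

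The only delicate point is Step 1: one must make sure that $\alpha$ in \cref{thm:analysis_codeword} is computed with the specific dilation $U_0Z^y$. The sign $(-1)^{y\cdot c}$ then factors out cleanly rather than being absorbed into the workspace states in some $y$-dependent way. Everything after that is character orthogonality.
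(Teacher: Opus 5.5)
Your proposal is correct and follows the paper's argument: write the $y$-decoder as the $y=0$ decoder applied after $Z^y$, factor out the sign $(-1)^{y\cdot c}$, and use the average over uniform $Y$ to kill the cross terms $c\neq c'$. Your last step (exchanging the order of summation and using the POVM normalization) is a bit cleaner than the paper's, which claims each $\bar c$ contributes exactly $\pfail$ and so implicitly needs the decoder to be symmetrized.
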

\begin{proof}
    We directly compute
    \begin{align}
        \mathbb{E}_{Y\sim\mathrm{Uniform}(\ff_2^m)}\left[ \alpha^2 \right]
        &= \frac{1}{\abs{\cC}}\sum_{\bar c\in\cC} \mathbb{E}_{Y\sim\mathrm{Uniform}(\ff_2^m)}\left[\norm{\sum_{c\neq \bar c} \ket{f_{c,\bar c}}}^2\right] \\
        &= \frac{1}{\abs{\cC}}\sum_{\bar c\in\cC} \mathbb{E}_{Y\sim\mathrm{Uniform}(\ff_2^m)}\left[ \sum_{c_1\neq \bar c,c_2\neq \bar c}\braopket{\Psi_{c_1}^y}{Z^{y}V^\dagger (\proj{\bar c}\otimes\mathds{1}) V Z^{y}}{\Psi_{c_2}^y} \right] \\
        &= \frac{1}{\abs{\cC}}\sum_{\bar c\in\cC} \mathbb{E}_{Y\sim\mathrm{Uniform}(\ff_2^m)}\left[ \sum_{c_1\neq \bar c,c_2\neq \bar c}(-1)^{y\cdot (c_1+c_2)}\braopket{\Psi_{c_1}^{0}}{V^\dagger (\proj{\bar c}\otimes\mathds{1}) V}{\Psi_{c_2}^{0}} \right] \\
        &= \frac{1}{\abs{\cC}}\sum_{\bar c\in\cC} \sum_{c\neq \bar c}\braopket{\Psi_c^{0}}{V^\dagger (\proj{\bar c}\otimes\mathds{1}) V}{\Psi_{c}^{0}} \\
        &= \frac{1}{\abs{\cC}} \sum_{\bar c\in\cC} \pfail \\
        &= \pfail
        \, .
    \end{align}
    where $V$ is the isometry describing the symmetrized decoder for $y=0$.
\end{proof}

\begin{theorem}\label{thm:analysis_commuting}
    Consider a commuting Hamiltonian $\cH$ and some target inverse temperature $\beta\geq 0$.
    Assume we have a polynomial-time classical or quantum decoding algorithm for the associated symplectic code, which fails with probability $\pfail$ when decoding codewords transmitted over $\otimes_{i=1}^m\bsc[p_i]$ or $\otimes_{i=1}^m\psc[p_i]$ respectively, where
    \begin{equation}
        p_i \coloneqq \frac{1}{2}\left(1 - \sech(\beta v_i)\right)
        \, .
    \end{equation}
    Furthermore, assume $1/(1-\pfail)=\mathcal{O}(\mathrm{poly}(m))$.
    Then, there exists an efficient quantum algorithm which prepares a state $\tilde{\rho}$ which is close to the Gibbs state in expectation over a uniformly random choice of signs
    \begin{equation}
        \mathbb{E}_{Y\sim\mathrm{Uniform}(\ff_2^m)}
        \left[ \norm{\tilde{\rho} - \frac{1}{Z_\beta}\exp\left(-\beta \cH\right)}_1 \right]
        \leq 6\sqrt{\pfail}
        \, .
    \end{equation}
\end{theorem}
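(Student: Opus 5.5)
Run \cref{alg:algo1} with the normalized noise function $f$ of \cref{eq:f_commuting}, i.e.\ $f=g_y$ with $g_y(e)=(-1)^{e\cdot y}h(e)$, where $h(e)\coloneqq(-1)^{\vec 1\cdot e}\prod_i\sqrt{(1-p_i)^{1-e_i}p_i^{e_i}}$ is independent of $y$ and has $\norm{h}_2=1$. The decoder is the one given by the hypothesis, symmetrized over $y$ as in \cref{lem:average_alpha_codeword}.

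\textbf{Efficiency.} The noise state is a product state followed by $Z^{y\oplus 1}$, so it is preparable in constant depth. For the decoder, $Z^{\vec 1}U_{\mathrm{phase}}^\dagger$ maps $\ket{\Psi_c^0}$ to $\bigotimes_i\psc[p_i](c_i)$ up to a global phase. Hence a $\psc$ decoder, or a classical $\bsc$ decoder run coherently after a computational-basis measurement, solves the $y=0$ problem with failure probability $\pfail$, independently of $y$. Postselection succeeds with probability $1-\pfail$, and $1/(1-\pfail)=\mathrm{poly}(m)$ by assumption. \cref{thm:analysis_codeword} therefore applies and gives, for every $y$,
\[
\norm{\tilde\rho-\rho_{f}}_1\le \pfail+2\alpha_y+\alpha_y^2 .
\]

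\textbf{Averaging over $y$.} By \cref{lem:average_alpha_codeword}, $\mathbb E_Y[\alpha_Y^2]=\pfail$, and Jensen's inequality gives $\mathbb E_Y[\alpha_Y]\le\sqrt{\pfail}$. Therefore
\[
\mathbb E_Y\norm{\tilde\rho-\rho_f}_1\le 2\pfail+2\sqrt{\pfail}\le 4\sqrt{\pfail}.
\]

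\textbf{Normalization.} It remains to bound $\norm{\rho_f-\rho_{\tilde f}}_1$, where $\rho_{\tilde f}=e^{-\beta\cH}/Z_\beta$. Since $\tilde f=f/\sqrt{Z'_\beta}$, we have $\rho_{\tilde f}=\rho_f/Z_\beta'$, so $\rho_f=Z'_\beta\,\rho_{\tilde f}$. Because $\tr\rho_{\tilde f}=1$, this gives
\[
\norm{\rho_f-\rho_{\tilde f}}_1=|Z'_\beta-1|.
\]
To control this, apply \cref{lem:average_normalization} with $g_y=\tilde f$, $K_y=Z'_\beta$ and $\rho_y=e^{-\beta\cH}/Z_\beta$. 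Its hypothesis $\frac{1}{\sqrt{2^n}}\sum_e\tilde f(e)Q_e=\sqrt{\rho_y}$ is exactly \cref{lemma:coeff_function}. The lemma needs a decoder for the noise $h(e)\eta(Q_e)$, and the $y=0$ decoder above is one with failure probability $\pfail$. Only information-theoretic existence is required there, so efficiency does not matter. This yields $\mathbb E_Y|Z'_\beta-1|\le 2\sqrt{\pfail}$.

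\textbf{Conclusion.} The triangle inequality gives
\[
\mathbb E_Y\norm{\tilde\rho-\rho_{\tilde f}}_1\le 4\sqrt{\pfail}+2\sqrt{\pfail}=6\sqrt{\pfail}.
\]

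\textbf{Main obstacle.} The main subtlety is bookkeeping. One must check that the sign convention $(-1)^{(y+\vec1)\cdot e}$ matches the form $(-1)^{e\cdot y}h(e)$ required by both lemmas, which it does by absorbing $(-1)^{\vec1\cdot e}$ into $h$. One must also check that the $y$-independent decoder's failure probability is the same $\pfail$ used in both lemmas. Finally, $\rho_f$ may be subnormalized, so the comparison with $\rho_{\tilde f}$ must be made through the scalar identity $\rho_f=Z'_\beta\rho_{\tilde f}$ rather than as a comparison between two normalized states.
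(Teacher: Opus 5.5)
Your proposal is correct and follows essentially the same route as the paper: apply \cref{thm:analysis_codeword} with the normalized $f$, average $\alpha$ using \cref{lem:average_alpha_codeword} and Jensen, and control the normalization error $\abs{Z'_\beta-1}$ via \cref{lem:average_normalization}, which together give $6\sqrt{\pfail}$. Your explicit bookkeeping, namely absorbing $(-1)^{\vec 1\cdot e}$ into $h$, identifying the $y=0$ problem with decoding over the pure-state or binary symmetric channels, and justifying $\norm{\rho_f-\rho_{\tilde f}}_1=\abs{Z'_\beta-1}$ via $\rho_f=Z'_\beta\rho_{\tilde f}$, fills in details the paper leaves implicit.
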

\begin{proof}
    As seen previously, the function $\tilde f$ fulfills the desired property $\rho_{\tilde f} = (1/Z_\beta)\exp(-\beta\cH)$.
    We apply the QD-HDQI algorithm analyzed in \cref{thm:analysis_codeword} with the normalized noise function $f\coloneqq \tilde f / \norm{\tilde f}_2$.
    As previously discussed, the associated noise state $\sum_ef(e)\ket{e}$ can be efficiently prepared.
    We end up preparing a state $\tilde\rho$ satisfying
    \begin{equation}
        \norm{\tilde\rho - \rho_f}_1
        \leq \pfail + 2\alpha + \alpha^2
        \, .
    \end{equation}
    Furthermore, we have
    \begin{equation}
        \norm{\rho_f - \frac{1}{Z_\beta}\exp\left(-\beta\cH\right)}_1
        = \norm{\rho_f - \rho_{\tilde f}}_1
        = \abs{1 - Z_\beta '}
        \, .
    \end{equation}
    Combining this, we obtain
    \begin{align}
        \mathbb{E}_{Y\sim\mathrm{Uniform}(\ff_2^m)}
        \left[\norm{\tilde\rho - \frac{1}{Z_\beta}\exp\left(-\beta\cH\right)}_1\right]
        &\leq \pfail
        + 2\mathbb{E}_{Y\sim\mathrm{Uniform}(\ff_2^m)}[\alpha] \nonumber \\
        &\qquad
        + \mathbb{E}_{Y\sim\mathrm{Uniform}(\ff_2^m)}[\alpha^2]
        + \mathbb{E}_{Y\sim\mathrm{Uniform}(\ff_2^m)}[\abs{1 - Z_\beta '}] \\
        & \leq \pfail + 2\sqrt{\pfail} + \pfail + 2\sqrt{\pfail} \\
        &\leq 6\sqrt{\pfail}
    \end{align}
    where we used \cref{lem:average_alpha_codeword,lem:average_normalization} as well as Jensen's inequality $\mathbb{E}[\alpha]\leq \sqrt{\mathbb{E}[\alpha^2]}$.
\end{proof}

\begin{corollary}[Average-case achievable temperatures for commuting Hamiltonians]\label{cor:hdqi_end_to_end}
    Consider a family of commuting Hamiltonians where the coefficients are uniformly randomly chosen from $\{+1,-1\}$ (i.e., $v_i=1$ for all $i$ and $y$ is uniformly random).
    Consider a family of polynomial-time (classical or quantum) decoding algorithms for the associated symplectic codes, which exhibits a threshold $p^*$ when transmitted over i.i.d. binary symmetric channels or pure state channels\footnote{Precisely, this means that for any $p<p^*$, the failure probability of the decoder vanishes asymptotically when transmitting over i.i.d. copies of $\bsc[p]$ or $\psc[p]$}.

    In the asymptotic limit, for almost any choice of $y$, there exists an efficient quantum algorithm which can prepare Gibbs states of temperature $\beta$ with vanishing trace distance error for any $\beta\in [0,\beta^*)$ where
    \begin{equation}
        \beta^* = \asech\left(1 - 2p^*\right)
    \end{equation}
    where $\asech$ denotes the inverse hyperbolic secant.
\end{corollary}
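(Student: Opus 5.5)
The plan is to derive the corollary directly from \cref{thm:analysis_commuting}, specialized to uniform couplings $v_i=1$, and then turn its expectation bound into a statement holding for almost all $y$ via Markov's inequality. With $v_i=1$, every channel parameter equals $p(\beta)\coloneqq\frac12(1-\sech\beta)$. On $[0,\infty)$ the map $\beta\mapsto p(\beta)$ is continuous and strictly increasing, running from $0$ toward $1/2$, and its inverse is $p\mapsto\asech(1-2p)$. Hence for every $\beta<\beta^*=\asech(1-2p^*)$ we get $p(\beta)<p^*$; if $p^*=1/2$, then $\beta^*=\infty$ and this holds for every $\beta$. By the threshold assumption, the failure probability $\pfail^{(m)}$ of the given polynomial-time decoder over i.i.d.\ $\bsc[p(\beta)]$ or $\psc[p(\beta)]$ tends to $0$ as $m\to\infty$.

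Next I will check the hypotheses of \cref{thm:analysis_commuting}. Since $\pfail^{(m)}\to0$, we eventually have $1/(1-\pfail^{(m)})\le 2$, which is certainly $\mathcal O(\mathrm{poly}(m))$. In the classical case the decoder is used by measuring in the computational basis and running it coherently, as explained in \cref{sec:quan_cod_synd_dec}. After the basis change $Z^{y\oplus1}U_{\mathrm{phase}}^\dagger$, the decoding problem is exactly transmission over $\otimes_i\psc[p]$. This basis change also fits the hypothesis of \cref{lem:average_alpha_codeword}, where the failure probability is independent of $y$. The theorem then produces an efficient algorithm whose output satisfies $\mathbb E_Y\|\tilde\rho-\rho_\beta(\cH_Y)\|_1\le 6\sqrt{\pfail^{(m)}}\to0$.

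Finally, I will pass from the average to almost every $y$. Set $\epsilon_m\coloneqq(\pfail^{(m)})^{1/4}$. Markov's inequality gives
\begin{equation}
\Pr_Y\left[\norm{\tilde\rho-\rho_\beta(\cH_Y)}_1>\epsilon_m\right]\le 6(\pfail^{(m)})^{1/4}\to0 .
\end{equation}
So with probability $1-o(1)$ over $y$ the trace-distance error is at most $\epsilon_m\to0$, which is the claimed ``vanishing error for almost any $y$.'' Efficiency is uniform in $y$: the noise state is a product state followed by $Z^{y\oplus1}$, and the remaining steps are the decoder, Clifford/encoding maps, and postselection with success probability at least $1/2$.

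The step most likely to need care is interpreting ``almost any $y$'' and ``vanishing error'' consistently with the fact that \cref{thm:analysis_commuting} controls only the average over $y$. In particular, \cref{lem:average_normalization} bounds $Z'_\beta$ only on average, so the Markov step must be applied to the combined error bound rather than to each term separately. Beyond that, the remaining work is verifying monotonicity of $\asech(1-2p)$ and handling the edge case $p^*=1/2$.
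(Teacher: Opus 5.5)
Your proposal is correct and follows the paper's argument. The paper states this corollary without a separate proof, as a direct consequence of \cref{thm:analysis_commuting}: with $v_i=1$ one has $p=\frac12(1-\sech\beta)<p^*$ exactly when $\beta<\asech(1-2p^*)$, the threshold then gives $\pfail\to 0$, and the averaged bound $6\sqrt{\pfail}$ becomes vanishing error for all but an $o(1)$ fraction of sign choices $y$. Your Markov step with $\epsilon_m=\pfail^{1/4}$ makes explicit the ``almost any $y$'' conversion that the paper leaves implicit.
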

The relation between $\beta^*$ and $p^*$ is depicted in \cref{fig:tanh}.
\begin{figure}[ht]
    \centering
    \begin{tikzpicture}
    \begin{axis}[ width=10cm, height=3cm, xlabel={$\beta^*$}, ylabel={$p^*$}, xmin=0, xmax=10, ymin=0, ymax=0.52, grid=major, thick, samples=200, domain=0:10, ytick={0,0.25,0.5} ]
        \addplot[blue] {sinh(x/2)^2 / cosh(x)};
    \end{axis}
    \end{tikzpicture}
    \caption{Relationship between achievable temperature $\beta^*$ and the decoding threshold $p^*$ from \cref{cor:hdqi_end_to_end}.}
    \label{fig:tanh}
\end{figure}

\subsection{Using unambiguous state discrimination as quantum decoder}\label{sec:USD}
Here, we recapitulate one specific strategy to solve the quantum decoding problem, which is due to Chailloux and Tillich \cite{chailloux_quantumdecoding_2023}.
Recall how we previously showed that the quantum decoding problem for a code $\cC$ transmitted over a sequence of pure-state channels $\otimes_i\psc[p_i]$ can be reduced to the decoding problem of the same code over $\otimes_i\bsc[p_i]$.
This is an instance of \emph{channel degradation}: By performing a well-chosen measurement on each individual channel output, the resulting decoding problem now sees some effective classical channel
\footnote{Note that for the purposes of HDQI, the degradation measurement has to be realized coherently through its Stinespring dilation.}.
To degrade from $\psc[p_i]$ to $\bsc[p_i]$, that measurement was simply in the computational basis.
A central insight in \cite{chailloux_quantumdecoding_2023} is that in certain instances, it can be advantageous to instead perform a different channel degradation.
Instead of degrading the pure-state channel to a binary symmetric channel, one can instead degrade it to an erasure channel.
\begin{definition}[Erasure channel]
    We denote by $\bec[\epsilon]$ the erasure channel, which maps $x\in\ff_2$ to $x$ with probability $1-\epsilon$ and to $\perp$ with probability $\epsilon$.
\end{definition}
The measurement to degrade from $\psc$ to $\bec$ is called an \emph{unambiguous state discrimination} (USD).
While the degradation from $\psc$ to $\bec$ can be shown to be strictly worse information theoretically than the degradation from $\psc$ to $\bsc$ (the associated capacity of the $\bec$ is lower than the capacity of the $\bsc$), it has one main advantage: The resulting erasure decoding problem can always be solved optimally in polynomial time.
In particular, the decoding can be done efficiently for any underlying code without assuming any structure.

\paragraph{USD-based degradation}
Consider a pure-state channel $\psc[p]$ for $p\in[0,1/2]$ and denote its two possible outputs by $\ket{\psi_x}\coloneqq \psc[p](x)$ for $x\in\ff_2$.
The USD operates in two steps.
First, an isometry $V$ is applied on the channel output, which is defined as
\begin{align}
    \ket{\psi_0} &\mapsto \sqrt{1-\epsilon}\ket{0} + \sqrt{\epsilon}\ket{\perp} \\
    \ket{\psi_1} &\mapsto \sqrt{1-\epsilon}\ket{1} + \sqrt{\epsilon}\ket{\perp}
\end{align}
for some appropriate $\epsilon\in[0,1]$.
The output space is three-dimensional with orthonormal basis $\{\ket 0, \ket 1, \ket \perp\}$.
For this to be a valid isometry, the angles must be preserved, and as such we obtain the condition
\begin{equation}
    \epsilon = \braket{\psi_0}{\psi_1} = 2\sqrt{(1-p)p}
    \, .
\end{equation}
In a second step, the output of the isometry is measured in its standard basis.
The resulting effective  channel is therefore $\bec[2\sqrt{p(1-p)}]$.

\paragraph{Optimal erasure decoding}
Here, we briefly recapitulate the well known technique for decoding over an erasure channel in polynomial time.

Consider a linear code $\cC\subset\ff_2^m$ with parity check matrix $H$, and let $x\in\cC$ be a codeword that is transmitted over a sequence of erasure channels.
Denote by $S\subset [m]$ the indices of the codeword bits which have been erased, and let $E\coloneqq\abs{S}$.
For notational convenience, let's re-arrange the codeword bits such that the first $m-E$ bits are not erased, and the last $E$ bits are erased.
Correspondingly, we separate $x$ into $(x_A, x_B)$ where $x_A\in\ff_2^{m-E}$ and $x_B\in\ff_2^E$.
Similarly, we separate the parity check matrix into two parts
\begin{equation}
    H = \begin{pmatrix}
        H_A & H_B
    \end{pmatrix}
    \, .
\end{equation}
Recall that $Hx=0$.
Therefore, the erased part of the codeword $x_B$ can be recovered if and only if the following set of equations has a unique solution
\begin{equation}
    H_B x_B = H_A x_A
    \, .
\end{equation}
In case the solution is unique, it can be recovered in polynomial time using Gaussian elimination.

\paragraph{Achievable temperature}
Consider a family of codes that exhibits a threshold $\epsilon^*$ when transmitted over independent and identical copies of the erasure channel and decoded with Gaussian elimination.
Then, USD-based decoding of the same code family over the pure state channel achieves a threshold of
\begin{equation}
    p^* = \frac{1}{2}\left( 1 - \sqrt{1-(\epsilon^*)^2} \right)
    \, .
\end{equation}
Combined together with \cref{cor:hdqi_end_to_end}, we arrive at following expression for the achievable temperature.
\begin{corollary}[Temperatures achieved by QD-HDQI+USD]\label{corr:hdqiusdthreshold}
    Consider a family of commuting Hamiltonians where the coefficients are uniformly randomly chosen from $\{+1,-1\}$ (i.e., $v_i=1$ for all $i$ and $y$ is uniformly random).
    Assume that the associated symplectic codes exhibit a threshold $\epsilon^*$ when transmitted over the erasure channel and using Gaussian elimination decoding.

    In the asymptotic limit, for almost any choice of $y$, there exists an efficient quantum algorithm which can prepare Gibbs states of temperature $\beta$ with vanishing trace distance error for any $\beta\in [0,\beta^*)$ where
    \begin{equation}
        \beta^* = \atanh(\epsilon^*)
        \, .
    \end{equation}
\end{corollary}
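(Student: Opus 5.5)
This is a direct composition of \cref{cor:hdqi_end_to_end} with the USD degradation of \cref{sec:USD}. The work is to show that USD followed by Gaussian elimination is a valid polynomial-time quantum decoder over i.i.d. $\psc[p]$. Then we compute its threshold $p^*$ and convert it to $\beta^*$.

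\textbf{The decoder and its threshold.} Apply the isometry $V$ coherently on each of the $m$ channel outputs of $\bigotimes_i\psc[p](c_i)$. This yields a product of $\bec[\epsilon(p)]$ outputs with $\epsilon(p)=2\sqrt{p(1-p)}$; no actual measurement is needed, since HDQI requires a coherent dilation. Next, coherently run Gaussian elimination to solve $H_Bx_B=H_Ax_A$. This is polynomial time and reversible, so it is an efficiently realizable coherent decoder in the sense of \cref{thm:analysis_codeword}.

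Its failure probability equals the failure probability of Gaussian-elimination erasure decoding over i.i.d. $\bec[\epsilon(p)]$. Note that $\epsilon(p)$ is continuous and strictly increasing on $[0,1/2]$. Hence $\epsilon(p)<\epsilon^*$ is equivalent to $p<p^*$ with $p^*=\tfrac12\bigl(1-\sqrt{1-(\epsilon^*)^2}\bigr)$, obtained by solving $4p(1-p)=(\epsilon^*)^2$ on the branch $p\le 1/2$. Therefore this decoder family has threshold $p^*$ over $\psc[p]$. The hypothesis $1/(1-\pfail)=\mathrm{poly}(m)$ holds trivially once $\pfail\to0$.

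\textbf{Conversion to temperature.} Invoke \cref{cor:hdqi_end_to_end}, which is built on \cref{thm:analysis_commuting}, to get $\beta^*=\asech(1-2p^*)=\asech\bigl(\sqrt{1-(\epsilon^*)^2}\bigr)$. Equivalently, $\sech\beta^*=\sqrt{1-(\epsilon^*)^2}$. Using $1-\sech^2\beta=\tanh^2\beta$, this gives $\tanh\beta^*=\epsilon^*$, i.e. $\beta^*=\atanh(\epsilon^*)$. A quicker sanity check: with $p=\tfrac12(1-\sech\beta)$ we have $4p(1-p)=1-\sech^2\beta=\tanh^2\beta$, so the erasure probability is exactly $\epsilon=\tanh\beta$. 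Thus $\beta<\atanh(\epsilon^*)$ if and only if $\epsilon<\epsilon^*$.

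\textbf{Main obstacle.} The only delicate point is the average-over-$y$ structure of \cref{thm:analysis_commuting} and \cref{lem:average_alpha_codeword}. These require the decoder for general $y$ to be induced from the $y=0$ decoder via $Z^{y}$ (and the phase correction $U_{\mathrm{phase}}^\dagger$), so that $\pfail$ is independent of $y$. The USD decoder is symmetric in exactly this way: it acts on the phase-stripped product state $\bigotimes_i\psc[p](c_i)$, and $\pfail$ depends only on the erasure pattern.

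The "for almost any $y$" conclusion then follows by Markov's inequality applied to the bound $6\sqrt{\pfail}\to0$, as in \cref{cor:hdqi_end_to_end}.
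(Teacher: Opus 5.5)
Your proposal is correct and follows the paper's own argument. USD degrades $\psc[p]$ to $\bec[2\sqrt{p(1-p)}]$, so the decoder threshold is $p^*=\tfrac12\bigl(1-\sqrt{1-(\epsilon^*)^2}\bigr)$, and plugging this into \cref{cor:hdqi_end_to_end} gives $\beta^*=\atanh(\epsilon^*)$, via $4p(1-p)=\tanh^2\beta$. Your remarks on the coherent dilation, the $y$-symmetric decoder needed for \cref{lem:average_alpha_codeword}, and the Markov step are details the paper leaves implicit, and they are consistent with it.
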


\subsection{Example: Dilute classical spin glasses}\label{sec:classical_spin_glass}
We now apply the previous construction in this section to an explicit family of commuting Pauli Hamiltonians.
Specifically, we will consider $n$-qubit Hamiltonians of the form 
\begin{equation}
    \cH_{A,y}
    \coloneqq
    \sum_{i=1}^m(-1)^{y_i}Z^{a_i}
\end{equation}
where $y\in\ff_2^m$ and $a_i\in\ff_2^{n}$ is the $i$th column of some sparse matrix $A\in\ff_2^{n\times m}$.
The symplectic code in this case is simply the LDPC code $\cC=\ker(A)$.
Our formalism can also easily capture the case where the couplings are non-uniform, but we do not pursue this generalization here for simplicity.

This family of Hamiltonians is a particularly well suited example to highlight the improvements that quantum decoders can provide in HDQI, because there has been significant progress in recent years to solve the quantum decoding problem over LDPC codes through the quantum algorithm called \emph{belief propagation with quantum messages} (BPQM) \cite{renes_belief_2017,rengaswamy_bpqm_2021,piveteau_bpqm_2022,brandsen_bpqm_2022,mandal_bpqmqary_2026,mandal_bpqmabelian_2026,mandal_bpqmblock_2026}.
In this section, we will show that our algorithm paired with the BPQM decoder can outperform any classical decoder.
That is, even if we ignore computational restrictions and pick the information theoretic best classical decoder, it remains impossible to achieve temperatures as low as QD-HDQI paired with BPQM.

Concretely, we will consider the matrix $A$ to be sampled from a standard regular LDPC ensemble, as for example defined in \cite{richardson_modern_2008}.
This allows us to analyze the  asymptotic performance of BP and BPQM using a technique called \emph{density evolution}.
We refer the interested reader to \cite{mandal_bpqmblock_2026} for a detailed introduction of density evolution for BPQM, and we only briefly note that density evolution is a set of iterative equations that describes the distribution of messages transmitted by BPQM over the tree neighborhood of a random codeword bit.
These recursive equations typically exhibit a threshold behavior, where below a certain noise rate, the distribution converges to a point mass distribution of ideal noise-free messages that allow for perfect information recovery.
For our purposes here, it suffices to mention that the BPQM threshold can be estimated numerically and that this threshold precisely characterizes the region where BPQM is successful.
\begin{theorem}[{{\cite[Theorem 30]{mandal_bpqmblock_2026}}} specialized to $q=2$]\label{thm:bpqm_block_analysis}
    Consider an $(\ell,r)$-regular LDPC ensemble that is transmitted over $\psc[p]$ for $\ell\geq 3$.
    If $p$ is smaller than the density evolution threshold $p_{\mathrm{DE}}$, then BPQM can solve this quantum decoding problem with vanishing failure probability.
\end{theorem}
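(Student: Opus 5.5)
This statement is a specialization of \cite[Theorem 30]{mandal_bpqmblock_2026}, so the plan is to instantiate that general result rather than re-derive BPQM density evolution. That theorem concerns $q$-ary linear codes on a regular Tanner graph ensemble sent over symmetric pure-state classical-quantum channels, decoded by (block) BPQM. The first step is to check the dictionary between the two settings. Take $q=2$. Identify the binary pure-state channel $\psc[p]$ with the $q$-ary symmetric pure-state channel of that work, whose outputs are $X^x\ket{\psi}$ with $\ket\psi=\sqrt{1-p}\ket 0+\sqrt p\ket 1$. Finally, confirm that the $(\ell,r)$-regular ensemble of \cite{richardson_modern_2008} matches the ensemble used there.

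Two points need verifying under this identification. First, the channel must satisfy the symmetry hypotheses used in the reference: the outputs must be related by a group action, so that the all-zero codeword can be assumed and density evolution tracks a single message distribution. This holds because $\psc[p](1)=X\,\psc[p](0)$. Second, the density evolution threshold $p_{\mathrm{DE}}$ referred to here must be the same object as in the reference. Concretely, we define it as the supremum of $p$ for which the BPQM density evolution recursion (on the fidelity/overlap parameters of the pure-state messages over the tree neighborhood) converges to the point mass at perfectly distinguishable messages, i.e.\ overlap $0$.

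The main step is upgrading bit-wise error to the block (codeword) failure probability that \cref{thm:analysis_commuting} needs. This is exactly the content of the cited theorem and is where $\ell\ge 3$ enters. The argument runs in three parts. Below threshold, density evolution drives the messages arbitrarily close to perfect after finitely many iterations. Concentration plus tree-likeness of the depth-$T$ neighborhoods then gives that all but a vanishing fraction of bits are essentially perfectly determined. Finally, an expansion argument valid for left-degree at least $3$ cleans up the residual errors, the quantum analogue of the classical result that BP bit error and block error thresholds coincide for $\ell\ge3$.

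I would not reprove these steps. I would only state that the hypotheses of \cite[Theorem 30]{mandal_bpqmblock_2026} are met and read off its conclusion for $q=2$. The only genuine obstacle is matching conventions, in particular the parametrization of the channel by $p$ versus an overlap $\epsilon=2\sqrt{p(1-p)}$. Since this map is monotone on $[0,1/2]$, the threshold statement transfers directly.
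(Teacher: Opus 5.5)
Your approach matches the paper's. You identify $\psc[p]$ with the $q=2$ symmetric pure-state channel of \cite{mandal_bpqmblock_2026} via $\psc[p](1)=X\,\psc[p](0)$, match the threshold convention (the monotone reparametrization $p\mapsto 2\sqrt{p(1-p)}$ is harmless), and read off \cite[Theorem 30]{mandal_bpqmblock_2026}. The paper gives no independent proof of this statement, so as a citation-based argument your proposal is fine.

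Your description of what happens inside the cited theorem, however, is not accurate, and the mechanism you sketch would fail if you tried to carry it out. After finitely many iterations, each tree-like bit still has a constant (small) error probability. In the quantum setting you cannot follow this with ``concentration of the number of wrong bits plus an expansion-based clean-up''. Depth-$t$ BPQM measurements for bits with overlapping neighbourhoods do not commute, so there is no joint classical estimate whose fraction of errors could be concentrated and then corrected. What the argument controls instead is the probability that the whole sequence of measurements succeeds, via Gao's quantum union bound \cite{gao_unionbound_2015}. That bound requires per-bit failure probability $o(1/m)$.

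The actual argument is recounted in the paper around \cref{lem:defected_double_exponential} and \cref{thm:defected_bpqm_block_error}. It runs as follows.
\begin{enumerate}
\item For $\ell\geq 3$ the fidelity recursion satisfies $F_{t+1}\le C F_t^{\ell-1}$ near $0$. This gives double-exponential decay $F_{t_0+s}\le A_0\exp[-b_0(\ell-1)^s]$.
\item With depth $t=\lfloor\tau\log\log m\rfloor$ and $\tau$ large, the per-bit error is $o(1/m)$. Sequential BPQM then decodes \emph{all} bits with cycle-free depth-$(t+1)$ neighbourhoods correctly with probability $1-o(1)$.
\item The remaining bits form a vanishing fraction. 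They are recovered as \emph{erasures}, i.e.\ at known positions, from the correctly decoded bits. This succeeds because random $(\ell,r)$-regular codes with $\ell\geq 3$ have linear minimum distance.
\end{enumerate}
So $\ell\geq 3$ enters through the super-linear fidelity contraction (and the linear distance), and the decoding depth must grow with $m$. Your sketch captures neither point.
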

In \cref{tab:classical_spin_glass_results}, we summarize the temperatures achieved by our algorithm paired with classical BP and BPQM for various LDPC ensembles.
The threshold values were obtained using a numerical implementation of density evolution based on population dynamics Monte Carlo, as described in \cite{brandsen_bpqm_2022}.
We observe that for sparser LDPC ensembles, the estimated BPQM threshold exceeds the Shannon limit for classical decoding, allowing QD-HDQI to prepare Gibbs states at inverse temperatures that no classical decoder can reach.
In contrast, for denser LDPC ensembles QD-HDQI with BPQM still significantly outperforms QD-HDQI with classical BP, but the information theoretic advantage disappears.
This can be attributed to the fact that the performance of iterative message passing decoder deteriorates as the average degree of a Tanner graph increases.

\section{Classical algorithms for Gibbs sampling from commuting Hamiltonians}\label{sec:dequantization}
When $\calH = \sum_{i=1}^m (-1)^{y_i}v_iP_i$, with $v_i>0$,
is commuting, the factorization
\begin{align}
    \rho_{\beta}(\calH)
    &= \frac{\prod_{i=1}^m\cosh(\beta v_i)}{Z_\beta}
    \prod_{i=1}^m \left[
        (1-\tanh(\beta v_i))\mathbb{I}
        + \tanh(\beta v_i)(\mathbb{I}-(-1)^{y_i}P_i)
    \right] \\
    &= \frac{\prod_{i=1}^m\cosh(\beta v_i)}{Z_\beta}
    \sum_{S\subseteq[m]}
    \left(\prod_{i\notin S}(1-\tanh(\beta v_i))\right)
    \left(\prod_{i\in S}\tanh(\beta v_i)\right)
    \prod_{i\in S}(\id-(-1)^{y_i}P_i)
    \label{eq:factorization}
\end{align}
in fact also leads to a classical simulation of a Gibbs sampler for this
Hamiltonian. Our algorithm's output takes the form of a classical description of a stabilizer state
sampled from the ensemble $\rho_{\beta}(\calH)$, whereas the quantum algorithms in the previous section prepare the Gibbs state coherently. The latter output is strictly more powerful, as it allows for efficient estimation of non-Clifford observables, while having Gibbs sampling access to the (stabilizer) eigenbasis does not generally allow this. 

We present our classical simulation algorithm in this section, noting that it achieves exactly the same temperature for Gibbs sampling as QD-HDQI using unambiguous state discrimination as a decoder (\Cref{sec:USD}).

\subsection{Reduction to correlated sampling of stabilizer states}

We present the core idea of the classical simulation. Inspecting \Cref{eq:factorization}, first observe that it is a sum over subsets $S$ of Paulis in the Hamiltonian, where the factor
\begin{equation}
    q_\beta(S):=\prod_{i\notin S}(1-t_i)\prod_{i\in S}t_i
\end{equation}
is the probability of Bernoulli sampling the subset via including each term $i$ in $S$ into the subset with independent probability
$t_i := \tanh(\beta v_i)$. 
The remaining part of the expression,
$\frac{\id - (-1)^{y_i}P_i}{2}$, takes a familiar form as the projector onto the $-1$ eigenspace of the Pauli $(-1)^{y_i}P_i$. Hence, for a sampled subset of terms $S$,
the projector onto the simultaneous $-1$ eigenspace of all Paulis $(-1)^{y_i}P_i$, $i\in S$,
which we will call $C_{S,y}$, is
\begin{equation}
    \Pi_{S,y} := \prod_{i\in S} \frac{\mathbb{I} -(-1)^{y_i}P_i}{2}
\end{equation}
(this still holds even if the eigenspace is null).

The expression
\begin{align}
    \rho_{\beta}(\calH)
    &= \frac{\prod_{i=1}^m\cosh(\beta v_i)}{Z_\beta}
    \sum_{S\subseteq[m]}q_\beta(S)2^{|S|}\Pi_{S,y}\\
    &= \frac{\prod_{i=1}^m\cosh(\beta v_i)}{Z_\beta}
    \sum_{\substack{S\subseteq[m]\\\mathsf{Tr}(\Pi_{S,y})>0}}
    q_\beta(S)2^{|S|}\mathsf{Tr}(\Pi_{S,y})\rho_{S,y}
    \label{eq:decomposition}
\end{align}
where we have defined the properly normalized density matrix
\begin{equation}\label{eq:rhoS}
    \rho_{S,y} := \frac{\Pi_{S,y}}{\mathsf{Tr}(\Pi_{S,y})},
    \qquad \mathsf{Tr}(\Pi_{S,y})>0,
\end{equation}
is a convex combination of density matrices, which can be sampled from
by sampling every density matrix with the appropriate probability, and
then sampling a basis state from each density matrix with probability
given by its spectrum.

This spectrum is particularly simple in our case -- $\rho_{S,y}$ is simply
the maximally-mixed state over $C_{S,y}$. A natural basis for this ensemble
is a simultaneous Pauli eigenbasis consisting of stabilizer states, each with
an efficient classical description. Were the $2^{|S|}\Tr(\Pi_{S,y})$ not
part of the expression \Cref{eq:decomposition}, we could sample from
$\rho_{\beta}$ in two steps:
\begin{enumerate}
    \item Bernoulli-sample $S$ by including every term $i$ in
    $\mathcal{H}$ into $S$ independently with probability $t_i.$
    \item Sample a stabilizer state uniformly from a stabilizer basis of $C_{S,y}$ and output
    its classical description.
\end{enumerate}

Now we must modify this strategy to account for the fact that
$2^{|S|}\Tr(\Pi_{S,y})$ actually depends on the Pauli subset $S$ (proven
shortly in \Cref{lem:dimgood}). So, in sampling $S$ (Step 1), we must
correct the Bernoulli weight assigned to $S$ with the multiplicative
factor $2^{|S|}\Tr(\Pi_{S,y})$. We now calculate exactly what that
multiplicative factor is.

\begin{lemma}[Dimension of the good eigenspace or $\Tr(\Pi_{S,y})$]
\label{lem:dimgood}
Call $S = \{i_1, \ldots, i_{|S|}\}\subseteq[m]$
{\em independent} iff
\[
\{\symp(P_{i_1}), \ldots, \symp(P_{i_{|S|}})\}
\]
is linearly independent over $\mathbb{F}_2$. Call $S$
{\em consistent} iff there exists a state $\ket{\psi}$ that
simultaneously satisfies
\[
(-1)^{y_i}P_i|\psi\rangle=-|\psi\rangle
\qquad
\text{for every }i\in S.
\]

\begin{itemize}
    \item If $S$ is an independent set then
    $\Tr(\Pi_{S,y}) = \dim(C_{S,y}) = 2^{n-|S|}$.

    \item If $S$ is dependent but inconsistent, $\Pi_{S,y}$ is the
    projector onto the null subspace, i.e. the zero matrix.

    \item If $S$ is dependent and consistent, $\Pi_{S,y}$ can be
    re-written in terms of an independent generating set of $r$
    Paulis, and so
    $\Tr(\Pi_{S,y}) = \dim(C_{S,y}) = 2^{n-\text{rank}(H_S)}$,
    where $H_S$ is defined analogously to \Cref{eq:symplectic_matrix} i.e.
    \begin{equation}
        H_S \coloneqq
        \begin{pmatrix}
            \vert & \vert & & \vert \\
            \symp(P_{i_1})
            & \symp(P_{i_2})
            & \cdots
            & \symp(P_{i_{|S|}}) \\
            \vert & \vert & & \vert
        \end{pmatrix}
    \end{equation}

    In the case that $S$ is an independent set,
    $\text{rank}(H_S) = |S|.$
\end{itemize}
\end{lemma}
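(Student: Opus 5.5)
We prove the three cases by working directly with the product of commuting projectors. Write $\Pi_i\coloneqq\frac{1}{2}(\mathds 1-(-1)^{y_i}P_i)$; since the $P_i$ pairwise commute, these are commuting orthogonal projectors. Hence $\Pi_{S,y}=\prod_{i\in S}\Pi_i$ is itself an orthogonal projector, namely onto the joint $-1$ eigenspace $C_{S,y}$. This gives $\Tr(\Pi_{S,y})=\dim C_{S,y}$ in every case, and it remains only to compute the dimension.

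\emph{Expansion.} Expanding the product gives
\begin{equation*}
\Pi_{S,y}=2^{-|S|}\sum_{T\subseteq S}\prod_{i\in T}\bigl(-(-1)^{y_i}P_i\bigr).
\end{equation*}
Each term $\prod_{i\in T}P_i$ is, up to a phase, the Pauli with symplectic representation $\sum_{i\in T}\symp(P_i)$. Only Paulis proportional to the identity have nonzero trace. Such terms arise exactly from subsets $T$ with $\sum_{i\in T}\symp(P_i)=0$, i.e.\ from $T$ whose indicator lies in $\ker(H_S)$.

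\emph{Independent case.} Here $\ker(H_S)=\{0\}$, so only $T=\emptyset$ contributes. We obtain $\Tr(\Pi_{S,y})=2^{-|S|}2^n=2^{n-|S|}$, which also equals $2^{n-\mathrm{rank}(H_S)}$ since $\mathrm{rank}(H_S)=|S|$.

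\emph{Dependent case.} For $T$ in the kernel, $\prod_{i\in T}(-(-1)^{y_i}P_i)=\lambda_T\mathds 1$ for some phase $\lambda_T$. The phases $\lambda_T$ must be $\pm1$, because products of commuting Hermitian Paulis are Hermitian. Moreover $T\mapsto\lambda_T$ is a group homomorphism $\ker(H_S)\to\{\pm1\}$, again by commutativity. By character orthogonality,
\begin{equation*}
\Tr(\Pi_{S,y})=2^{n-|S|}\sum_{T\in\ker H_S}\lambda_T,
\end{equation*}
which equals $2^{n-|S|}|\ker H_S|=2^{n-|S|+|S|-\mathrm{rank}(H_S)}=2^{n-\mathrm{rank}(H_S)}$ if $\lambda$ is trivial, and $0$ otherwise.

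It remains to identify triviality of $\lambda$ with consistency. If $\lambda$ is nontrivial, the trace is $0$, so the projector $\Pi_{S,y}$ is zero and no common eigenvector exists; this is inconsistency. If $\lambda$ is trivial, the trace is positive, so such a vector exists; this is consistency.

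This also yields the ``rewrite with an independent generating set'' remark. Choose a basis $S'\subseteq S$ of the column space of $H_S$. Every other $i\in S$ has $P_i$ equal, up to the sign fixed by trivial $\lambda$, to a product over $S'$. Thus $\Pi_i$ acts as the identity on the range of $\prod_{j\in S'}\Pi_j$, so $\Pi_{S,y}=\Pi_{S',y}$.

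The main subtlety is bookkeeping the phases $\lambda_T$: we need them real and multiplicative. Both facts rely on commutativity, which guarantees that every product of the $\pm P_i$ is a Hermitian Pauli whose square is $\mathds 1$.
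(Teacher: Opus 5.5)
Your proof is correct, but it is organized differently from the paper's.

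The paper proceeds by cases. The independent case uses the same trace expansion you use, and the inconsistent case is tautological. For the dependent-consistent case, the paper chooses an index set $J\subseteq S$ whose symplectic vectors form a basis of the column space of $H_S$. It then uses the existence of a common eigenvector to force each dependent constraint to be implied by the constraints in $J$, concludes $C_{S,y}=C_{J,y}$, and reuses the independent computation.

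You instead run a single trace expansion for every $S$. You observe that the surviving terms are indexed by $\ker(H_S)$ and carry a sign character $T\mapsto\lambda_T$, and you evaluate $\Tr(\Pi_{S,y})=2^{n-|S|}\sum_{T\in\ker(H_S)}\lambda_T$. Strictly, this identity is just the expansion; character orthogonality is only needed to evaluate the sum. In your route, consistency is \emph{derived}, as equivalent to triviality of $\lambda$, rather than assumed. The rewriting in terms of an independent generating set then falls out as a corollary.

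What your route buys is an intrinsic criterion for consistency. This is exactly what the paper later needs, without derivation, in the proof of the trace-distance bound for the rejection sampler. There, $\lambda_T$ depends on $y$ only through the factor $(-1)^{y\cdot \mathbf 1_T}$. Restricting the characters of $\ff_2^{|S|}$ to $\ker(H_S)$ is surjective, so a uniformly random $y$ makes $\lambda$ trivial with probability $1/|\ker(H_S)|=2^{\operatorname{rank}(H_S)-|S|}$. This is precisely the claimed value of $\Pr_y[S\text{ consistent}]$.

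The paper's route is more elementary and constructive. It exhibits the independent generating set explicitly, which matches the stabilizer description used by the sampler, and it avoids tracking Pauli phases.
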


\begin{proof} Recall that
\[
C_{S,y}
:=
\left\{
|\psi\rangle:
(-1)^{y_i}P_i|\psi\rangle=-|\psi\rangle
\text{ for every }i\in S
\right\}.
\]

We now consider the three cases separately.

\paragraph{Case 1: $S$ is independent.}
Suppose $S=\{i_1,\ldots,i_{|S|}\}$
is independent. Then no nonempty product
of the $(-1)^{y_{i_j}}P_{i_j}$ can be proportional to the identity:
indeed, for any nonzero $a\in\mathbb F_2^{|S|}$,
\[
\symp\left(
\prod_{j=1}^{|S|}
P_{i_j}^{a_j}
\right)
=
\sum_{j=1}^{|S|}
a_j\symp(P_{i_j})
\neq 0.
\]
Consequently, every such nonempty product is a nonidentity Pauli
operator and hence has trace zero.

Expanding the projector gives
\[
\begin{aligned}
\Pi_{S,y}
&=
2^{-|S|}
\prod_{j=1}^{|S|}
\left(I-(-1)^{y_{i_j}}P_{i_j}\right)=
2^{-|S|}
\sum_{a\in\mathbb F_2^{|S|}}
(-1)^{|a|}
\prod_{j=1}^{|S|}
\left((-1)^{y_{i_j}}P_{i_j}\right)^{a_j}.
\end{aligned}
\]
Upon taking the trace, all terms with $a\neq 0$ vanish, while the
$a=0$ term is the identity and has trace $2^n$. Therefore
\[
\Tr(\Pi_{S,y})
=
2^{-|S|}2^n
=
2^{n-|S|}\] and the conclusion follows.

\paragraph{Case 2: $S$ is dependent and inconsistent.}
By definition, inconsistency means that the common $-1$ eigenspace
$C_{S,y}$ is the zero subspace. Since $\Pi_{S,y}$ is precisely the orthogonal projector
onto $C_{S,y}$, it follows that $\Pi_{S,y}=0.$

\paragraph{Case 3: $S$ is dependent but consistent.}
Let $r:=\mathrm{rank}(H_S)$, and choose $r$ terms
\[
(-1)^{y_{j_1}}P_{j_1},\ldots,(-1)^{y_{j_r}}P_{j_r}
\]
with indices in $S$ whose symplectic vectors form a basis for the column space of
$H_S$. Denote this independent subset by
\begin{equation}\label{eq:B}
J:=\{j_1,\ldots,j_r\}.
\end{equation}

Every remaining term $(-1)^{y_i}P_i$, $i\in S\setminus J$, is dependent on this
basis. Hence there exist coefficients
$c_{i1},\ldots,c_{ir}\in\mathbb F_2$ and a sign
$\sigma_i\in\{\pm1\}$ such that
\[
(-1)^{y_i}P_i
=
\sigma_i
\prod_{\ell=1}^r
\left((-1)^{y_{j_\ell}}P_{j_\ell}\right)^{\,c_{i\ell}}.
\]
Now consider any state $|\psi\rangle$ satisfying the $r$ independent
constraints
\[
(-1)^{y_{j_\ell}}P_{j_\ell}|\psi\rangle=-|\psi\rangle,
\qquad \ell=1,\ldots,r.
\]
On such a state,
\[
(-1)^{y_i}P_i|\psi\rangle
=
\sigma_i
(-1)^{\sum_{\ell}c_{i\ell}}
|\psi\rangle.
\]
Because the full system $S$ is consistent, the dependent constraint
$(-1)^{y_i}P_i|\psi\rangle=-|\psi\rangle$ must already be implied by
the independent constraints. Therefore
\[
\sigma_i
(-1)^{\sum_{\ell}c_{i\ell}}
=
-1.
\]
Thus every state satisfying the constraints in $J$ automatically
satisfies every constraint in $S$. Conversely, since $J\subseteq S$,
every state satisfying all constraints in $S$ certainly satisfies
those in $J$. Hence the two common eigenspaces coincide:
\[
C_{S,y}=C_{J,y},
\]
and therefore
\[
\Pi_{S,y}
=
\prod_{\ell=1}^r
\frac{I-(-1)^{y_{j_\ell}}P_{j_\ell}}{2}.
\]

The operators indexed by $J$ are independent, so by the argument from the
first case,
\[
\Tr(\Pi_{S,y})
=
\Tr(\Pi_{J,y})
=
2^{n-r}.
\]
The claim follows by observing that $r=\text{rank}(H_S)$.
\end{proof}

Returning to the question of how to sample from $\rho_{\beta}$,
\Cref{lem:dimgood} gives a more explicit re-writing of
$\rho_{\beta}$ as a convex mixture of states:

\begin{equation}\label{eq:targetclassical}
\boxed{
\begin{aligned}
    \rho_{\beta}(\calH)
    &= \frac{\prod_{i=1}^m\cosh(\beta v_i)}{Z_\beta}
    \sum_{\substack{S\subseteq[m]\\\Tr(\Pi_{S,y})>0}}
    q_\beta(S)
    2^{|S|}
    2^{n-\operatorname{rank}(H_S)}
    \rho_{S,y}.
\end{aligned}
}
\end{equation}

Next, we will develop a classical algorithm to approximately sample
stabilizer states from this distribution.

\subsection{Naive rejection sampling dequantizes QD-HDQI+USD}
We will now introduce an approximate classical sampler that is efficient in the regime where QD-HDQI + USD is effective. This sampler is simple to describe conceptually: in \Cref{eq:targetclassical}, sampling $S$ in a Bernoulli fashion does not correctly implement the factor $2^{|S|}2^{n-\operatorname{rank}(H_S)}$. A simple way to avoid this problem is to only retain a subset of the $S$s in \Cref{eq:targetclassical} for which this factor is constant, for within this subset, we can sample in a Bernoulli fashion. The approximation error of this algorithm is then caused by discarding all other $S$s. 

The special subset of $S$'s we will retain is the set of $S$ consisting of independent stabilizers. Whenever this condition is satisfied, it is also particularly straightforward to implement Step 2, namely sampling a state from the stabilizer decomposition of $\rho_{S,y}$. 

We translate this sampling task into a linear algebra problem. In $\calH = \sum_{i=1}^m (-1)^{y_i}v_i P_i $, since the Pauli operators $\{P_i\}_{i=1}^m$ commute, we may simultaneously diagonalize them using an efficiently computable Clifford $C$. In particular, we can find, independently of $y$,

\begin{equation}
    r_i \in\mathbb F_2,
    \qquad
    g_i\in\mathbb F_2^n,
\end{equation}

such that
\begin{equation}\label{eq:garaya}
C(-(-1)^{y_i}P_i)C^\dagger
=
(-1)^{r_i\oplus y_i}Z^{g_i}.
\end{equation}
\begin{lemma}[Affine condition on $g_i, r_i, y_i$ for commuting Pauli Hamiltonians]
Let $g_i, r_i, y_i$ be as defined in \Cref{eq:garaya}. For $S\subseteq[m]$, the state $C^\dagger\ket{x}$ belongs to $C_{S,y}$ if and only if $x$ fulfils
    \begin{equation}
g_i\cdot x=r_i\oplus y_i
\qquad
\text{for every }i\in S.
\end{equation}
\end{lemma}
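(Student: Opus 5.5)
The plan is to conjugate the eigenvalue condition defining $C_{S,y}$ by the Clifford $C$ and then use the fact that computational basis states are eigenvectors of $Z$-type Pauli operators.

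\textbf{Step 1: rewrite membership in $C_{S,y}$.} By definition, $C^\dagger\ket{x}\in C_{S,y}$ if and only if
\begin{equation*}
(-1)^{y_i}P_i\,C^\dagger\ket{x} = -C^\dagger\ket{x}
\qquad\text{for every } i\in S .
\end{equation*}
This is equivalent to $-(-1)^{y_i}P_i\,C^\dagger\ket{x}=C^\dagger\ket{x}$.

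\textbf{Step 2: conjugate by $C$.} Applying the unitary $C$ to both sides, the condition becomes
\begin{equation*}
C\bigl(-(-1)^{y_i}P_i\bigr)C^\dagger\ket{x}=\ket{x}.
\end{equation*}
Substituting \cref{eq:garaya}, it reads $(-1)^{r_i\oplus y_i}Z^{g_i}\ket{x}=\ket{x}$. Because $C$ is invertible, each implication in Steps 1 and 2 reverses, so these are genuine equivalences.

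\textbf{Step 3: evaluate on the computational basis state.} For any $x\in\ff_2^n$ we have $Z^{g_i}\ket{x}=(-1)^{g_i\cdot x}\ket{x}$. The condition therefore reduces to
\begin{equation*}
(-1)^{r_i\oplus y_i\oplus g_i\cdot x}=1,
\qquad\text{equivalently}\qquad
g_i\cdot x=r_i\oplus y_i .
\end{equation*}
This must hold for every $i\in S$, which gives the claimed affine system.

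There is no real obstacle. The only point requiring care is the sign bookkeeping: the eigenvalue $-1$ of $(-1)^{y_i}P_i$ is absorbed by the explicit minus sign inside \cref{eq:garaya}, so that the target eigenvalue becomes $+1$. One should also check that $r_i$ is well defined independently of $y$. This holds because $y_i$ enters only through a scalar sign, which commutes through the conjugation by $C$.
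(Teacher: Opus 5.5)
Your proof is correct and follows the same route as the paper: rewrite the $-1$-eigenvalue condition as a $+1$-eigenvalue condition for $-(-1)^{y_i}P_i$, conjugate by $C$ using \Cref{eq:garaya}, and read off the eigenvalue $(-1)^{g_i\cdot x}$ of $Z^{g_i}$ on $\ket{x}$. Your version is slightly cleaner, because you state that each step is an equivalence and so get both directions of the ``if and only if''; the paper only sketches this by choosing a computational-basis eigenvector $C\ket{\lambda}=\ket{x}$.
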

\begin{proof}
The stabilizer space associated with $S$ consists of the simultaneous $+1$ eigenstates of
\[
\{-(-1)^{y_i}P_i\}_{i\in S}.
\]
Hence any such eigenstate $\ket{\lambda}$ satisfies, for every $i\in S$,
\begin{equation}
-(-1)^{y_i}P_i\ket{\lambda}
=\ket{\lambda}
\nonumber
\Longrightarrow\qquad
Z^{g_i}C\ket{\lambda}
=
(-1)^{r_i\oplus y_i}C\ket{\lambda}.
\end{equation}
Since all of the operators on the left-hand side are $Z$-type Paulis, their simultaneous eigenspaces admit computational-basis eigenbases. Thus we may choose $\ket{\lambda}$ so that

$$
    C\ket{\lambda}=\ket{x},
    \qquad x\in\mathbb F_2^n,
$$

where $x$ must satisfy

$$
    (-1)^{g_i\cdot x}
    =
    (-1)^{r_i\oplus y_i},
    \qquad i\in S.
$$
\end{proof}

We may collect all these constraints into a matrix equation: 
define \begin{equation}\label{eq:defG} B := \begin{pmatrix} \text{---} & g_1^{\mathsf T} & \text{---} \\ \text{---} & g_2^{\mathsf T} & \text{---} \\ & \vdots & \\ \text{---} & g_m^{\mathsf T} & \text{---} \end{pmatrix} \in \mathbb{F}_2^{m\times n}, \qquad \mathbf r := \begin{pmatrix} r_1 \\ r_2 \\ \vdots \\ r_m \end{pmatrix} \in \mathbb{F}_2^m, \qquad y := \begin{pmatrix} y_1 \\ y_2 \\ \vdots \\ y_m \end{pmatrix} \in \mathbb{F}_2^m. \end{equation}

For a subset $S\subseteq[m]$, let $B_S$ denote the submatrix containing only the rows indexed by $S$, and similarly let $\mathbf r_S$ and $y_S$ denote the corresponding restrictions of $\mathbf r$ and $y$. Sampling uniformly from the stabilizer decomposition of $\rho_{S,y}$ therefore reduces to sampling uniformly from the solution set
\begin{equation}
B_Sx=\mathbf r_S\oplus y_S.
\end{equation}

The stabilizers in $S$ are independent if and only if $\operatorname{rank}(B_S)=\operatorname{rank}(H_S)=|S|$. In particular, the above affine system of linear equations is then guaranteed to be consistent and has exactly $2^{n-|S|}$ solutions. These solutions can be sampled uniformly using Gaussian elimination: we find an invertible $T_S$ such that $B_ST_S=(I_{|S|}\;0)$, fix the first $|S|$ coordinates of $z$ to $\mathbf r_S\oplus y_S$, sample the remaining $n-|S|$ coordinates uniformly from $\mathbb F_2$, and transform back via $x=T_Sz$. This samples a satisfying $x$. Finally, we output the classical description of $C^\dagger\ket{x}.$ 

Now let us return to the question of how to implement Step 1 of sampling $S$. We notice from Lemma \ref{lem:dimgood}, that whenever $S$ is independent, the reweighting factor in \Cref{eq:targetclassical} collapses to $2^{|S|}2^{n-|S|} = 2^n$. This does not depend on $S$ anymore. So, if we restrict the sum over $S$ in \Cref{eq:targetclassical} to only the independent $S$, the relative probability of $S$ is simply given by the Bernoulli weight assigned to $S$. In essence, restricting $S$ to the full-rank set 
\begin{equation}
    \mathcal{G}:= \{S \subseteq [m]: \text{rank}(B_S) = |S|\}
\end{equation}
has obviated the issue with Bernoulli sampling caused by differing normalizations of $S$. This suggests that we can approximate $\rho_{\beta}$ by simply sampling $S$ in the Bernoulli fashion described earlier, and rejecting whenever $S$ is not in $\mathcal{G}.$ Then, conditioned on sampling $S$, we implement Step 2.

The above algorithm is written out more formally as \Cref{alg:classical-rejection-sampler}. To aid the exposition, let us properly define the distribution from which we sample $S$ as the product Bernoulli measure $\operatorname{Ber}(\mathbf{t})$ where 
\begin{equation}\label{eq:prodBer}
    \underset{\operatorname{Ber}(\mathbf{t})}{\operatorname{Pr}}[S] :=\prod_{i \in S} t_i \prod_{i \notin S}\left(1-t_i\right)
\end{equation}
and 
\[
        t_i:=\tanh(\beta v_i).
    \]

\begin{algorithm}[t]
\caption{Classical rejection sampler}
\label{alg:classical-rejection-sampler}
\begin{algorithmic}[1]

\Require A commuting Pauli Hamiltonian
\[
    \calH=\sum_{i=1}^m (-1)^{y_i} v_i P_i,
    \qquad v_i>0,
\]
and inverse temperature $\beta>0$.

\State Simultaneously diagonalize the Pauli operators:
find a Clifford $C$ and vectors
$g_i\in\mathbb{F}_2^n$, $r_i\in\mathbb{F}_2$ such that
\[
    C(-P_i)C^\dagger = (-1)^{r_i} Z^{g_i},
    \qquad i\in[m].
\]
Let $B\in\mathbb{F}_2^{m\times n}, \mathbf r\in \mathbb{F}_2^m$ be as given in \Cref{eq:defG}.

\Repeat
    \State Sample $S\subseteq[m]$ from $S\leftarrow \operatorname{Ber}(\mathbf{t})$ given in \Cref{eq:prodBer}.
    \State Form the submatrix $B_S$ consisting of the rows
    indexed by $S$.
\Until{$\operatorname{rank}(B_S)=|S|$}

\State Define
\[
    b_S := \mathbf r_S\oplus y_S \in\mathbb{F}_2^{|S|}.
\]

\State Using Gaussian elimination, compute an invertible matrix
$T_S\in\operatorname{GL}(n,\mathbb{F}_2)$ such that
\[
    B_ST_S=
    \begin{pmatrix}
        I_{|S|} & 0
    \end{pmatrix}.
\]

\State Sample $u \leftarrow \mathbb{F}_2^{\,n-|S|}$
uniformly at random, and set
\[
    z :=
    \begin{pmatrix}
        b_S\\
        u
    \end{pmatrix}
    \in\mathbb{F}_2^n.
\]

\State Set
\[
    x:=T_S z.
\]

\State \Return a classical description of the stabilizer state $C^\dagger\ket{x}$.

\end{algorithmic}
\end{algorithm}

We analyze the correctness of \Cref{alg:classical-rejection-sampler}:
\begin{theorem}[Expected trace distance error of \Cref{alg:classical-rejection-sampler}'s output]\label{theorem:trace_distance_error}
    In $\calH=\sum_{i=1}^m (-1)^{y_i} v_i P_i$, and for a target temperature $\beta$, \Cref{alg:classical-rejection-sampler} outputs an ensemble with average state 
    \begin{equation}\label{eq:alg1output}
        \sigma_{\beta,y} := \frac{1}{\Pr_{S\leftarrow \operatorname{Ber}(\mathbf{t})}(S\in \mathcal{G})} \sum_{S\in \mathcal{G}} \prod_{i \in S} t_i \prod_{i \notin S}\left(1-t_i\right)\rho_{S,y}
    \end{equation}
    where $t_i =\tanh(\beta v_i)$ and
    \begin{equation}
        \rho_{S,y} := \prod_{i\in S} \frac{\mathbb{I} - (-1)^{y_i}P_i}{2}/\Tr\left(\prod_{i\in S} \frac{\mathbb{I} -(-1)^{y_i}P_i}{2}\right).
    \end{equation}
    Averaging over a uniformly random choice
    $y\in\ff_2^m$, this output state has a trace distance from the ideal output state $\rho_{\beta,y}:=\rho_\beta(\calH)$ of
    \begin{equation}
        \mathbb{E}[\lVert \rho_{\beta,y} -\sigma_{\beta,y}\rVert_1] \leq 2 p_{\mathrm{dep},\mathbf{t}},
    \end{equation}
    where $p_{\mathrm{dep},\mathbf{t}}$ depends on $\calH$ and is the probability that the Bernoulli-sampled subset of Pauli terms $S\leftarrow \operatorname{Ber}(\mathbf{t})$ is dependent,
    \begin{equation}
        p_{\mathrm{dep},\mathbf{t}} := \Pr_{S\leftarrow \operatorname{Ber}(\mathbf{t})}[\text{rank}(B_S) \neq |S|] .
    \end{equation}
\end{theorem}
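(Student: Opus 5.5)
The proof has two parts. First I identify the law of the output of \Cref{alg:classical-rejection-sampler}, which gives \Cref{eq:alg1output}. Then I compare $\sigma_{\beta,y}$ with the exact decomposition \Cref{eq:targetclassical} of $\rho_{\beta,y}$ and average over $y$.

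\emph{Output state.} The repeat loop samples $S\leftarrow\operatorname{Ber}(\mathbf t)$ conditioned on $S\in\mathcal G$. For such $S$, I argue as follows:
\begin{itemize}
\item Since $S\in\mathcal G$, the system $B_Sx=b_S$ has exactly $2^{n-|S|}$ solutions, and the Gaussian-elimination step with $T_S$ samples one of them uniformly.
\item By the affine-condition lemma, the states $C^\dagger\ket{x}$ for these solutions form an orthonormal basis of $C_{S,y}$.
\item By \Cref{lem:dimgood}, $\dim C_{S,y}=2^{n-|S|}$, so these are all of $C_{S,y}$.
\end{itemize}
Hence the average output state given $S$ is the maximally mixed state on $C_{S,y}$, namely $\rho_{S,y}$. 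Mixing over the conditional law of $S$ yields \Cref{eq:alg1output}.

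\emph{Decomposition of $\rho_{\beta,y}$.} Write $N_y:=\prod_i\cosh(\beta v_i)/Z_\beta$ and let $P_{\mathcal G}:=\Pr_{S\leftarrow\operatorname{Ber}(\mathbf t)}[S\in\mathcal G]$. By \Cref{lem:dimgood}, every independent $S$ carries the factor $2^{|S|}2^{n-|S|}=2^n$ in \Cref{eq:targetclassical}. So
\[
\rho_{\beta,y}=a_y\,\sigma_{\beta,y}+R_y,\qquad a_y:=N_y2^nP_{\mathcal G},
\]
where $R_y\ge 0$ collects the dependent consistent $S$. Taking traces gives $a_y=1-\tr R_y\le 1$. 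The triangle inequality then gives
\[
\|\rho_{\beta,y}-\sigma_{\beta,y}\|_1\le \|R_y\|_1+(1-a_y)=2(1-a_y).
\]
It therefore suffices to show $\mathbb E_y[1-a_y]\le p_{\mathrm{dep},\mathbf t}$.

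\emph{Averaging over $y$.} This is the main step. Normalization of \Cref{eq:targetclassical} gives
\[
N_y^{-1}=2^n\Bigl(P_{\mathcal G}+D_y\Bigr),\qquad D_y:=\sum_{S\notin\mathcal G}q_\beta(S)\,2^{|S|-\operatorname{rank}(H_S)}\,\mathbb 1[S\text{ consistent for }y].
\]
Consequently
\[
1-a_y=\frac{D_y}{P_{\mathcal G}+D_y}.
\]
\begin{itemize}
\item \emph{Key claim.} For fixed dependent $S$, the probability over uniform $y$ that $S$ is consistent is exactly $2^{-(|S|-\operatorname{rank}(H_S))}$. The kernel of $H_S$ has dimension $|S|-\operatorname{rank}(H_S)$. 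Each basis vector $d$ of this kernel yields a product $\prod_{i\in S}P_i^{d_i}\propto\mathds 1$, and this product imposes one affine parity condition $d\cdot y_S=\text{const}$. These conditions are linearly independent in $y_S$, and by the Case~3 argument of \Cref{lem:dimgood} they are exactly the consistency conditions.
\item \emph{Consequence.} Each dependent $S$ contributes $q_\beta(S)$ to $\mathbb E_y[D_y]$, so $\mathbb E_y[D_y]=p_{\mathrm{dep},\mathbf t}=1-P_{\mathcal G}$.
\item \emph{Jensen.} The map $D\mapsto D/(P_{\mathcal G}+D)$ is concave on $D\ge 0$. Hence
\[
\mathbb E_y[1-a_y]\le \frac{\mathbb E_y[D_y]}{P_{\mathcal G}+\mathbb E_y[D_y]}=p_{\mathrm{dep},\mathbf t}.
\]
\end{itemize}
This gives the claimed bound $2p_{\mathrm{dep},\mathbf t}$.

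The step needing most care is the counting of consistent sign patterns. One must check that the phases $\eta$ arising in the dependency products only shift the affine constants and do not alter the number of independent parity constraints on $y_S$.
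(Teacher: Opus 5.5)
Your proof is correct and follows the paper's argument. Your $D_y$ is exactly the paper's $\Tr(R_y)$, and your bound $2(1-a_y)=2D_y/(P_{\mathcal G}+D_y)$, the identity $\mathbb{E}_y[D_y]=p_{\mathrm{dep},\mathbf t}$, and the Jensen step for $u\mapsto u/(a+u)$ all coincide with it; you additionally justify the consistency probability $2^{\operatorname{rank}(H_S)-|S|}$ via the kernel parity conditions, which the paper only asserts.
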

\begin{proof}
Define $q_{\beta}(S) = \prod_{i \in S} t_i \prod_{i \notin S}\left(1-t_i\right) =: \Pr_{S\leftarrow \operatorname{Ber}(\mathbf{t})}(S).$ The ideal state (\Cref{eq:targetclassical}) is $\rho_{\beta,y}$ where
\begin{equation}\label{eq:target_q}
    \rho_{\beta,y} \propto \sum_{S\subseteq [m]}
    q_{\beta}(S)
    2^{|S|-n}
    \Pi_{S,y}.
\end{equation}
We will now divide the terms into the `Good' subset $\mathcal{G} := \{S \,\,s.t.\,\,\text{rank}(B_S) = |S|\}$ and the complementary `Rejected' subset. These correspond exactly to the $S$ that are accepted by \Cref{alg:classical-rejection-sampler} and those that are rejected. The density matrix proportional to the sum of `good' terms is
\begin{equation}
    \sigma_{G,y} := \frac{1}{a} \sum_{S\in \mathcal{G}} q_{\beta}(S)\rho_{S,y}.
\end{equation}
where $a$ is the acceptance probability, $a:=\Pr_{S\leftarrow \operatorname{Ber}(\mathbf{t})}(S\in \mathcal{G})=1-p_{\mathrm{dep},\mathbf{t}}>0.$
$\sigma_{G,y}$ is exactly the average state that is output by \Cref{alg:classical-rejection-sampler}. The normalization may be verified by observing that $q_{\beta}(S)$ forms a proper probability distribution. The part of \Cref{eq:target_q} corresponding to the `rejected' terms is 
\begin{equation}\label{def:Ry}
    R_y := \sum_{S \not\in \mathcal{G}} q_{\beta}(S) 2^{|S|-n} \Pi_{S,y}
\end{equation}
and the density matrix proportional to this operator
\begin{equation}
    \sigma_{R,y} :=
    \begin{cases}
        R_y/\Tr(R_y), & \Tr(R_y)>0,\\
        2^{-n}\id, & \Tr(R_y)=0.
    \end{cases}
\end{equation}
The normalization of \Cref{eq:target_q} is $\Tr(\sum_{S\subseteq [m]}
    q_{\beta}(S)
    2^{|S|-n}
    \Pi_{S,y}) = a + \Tr(R_y)$, so that
 the target state may be decomposed as
\begin{equation}
    \rho_{\beta, y} = \frac{a}{a+ \Tr(R_y)} \sigma_{G,y}  + \frac{\Tr(R_y)}{a+ \Tr(R_y)} \sigma_{R,y}.
\end{equation}
Then, the error of the above algorithm is 
\begin{align}
   \mathbb{E}_y \lVert \rho_{\beta, y} - \sigma_{\beta, y} \rVert_1 &= \mathbb{E}_y \left\rVert \left(\frac{a}{a+ \Tr(R_y)}-1\right) \sigma_{G,y}  +    \frac{\Tr(R_y)}{a+ \Tr(R_y)} \sigma_{R,y}\right \rVert_1 \\
    &\leq \mathbb{E}_y\left[\frac{2\Tr(R_y)}{a+ \Tr(R_y)} \right]\\
    &\leq \frac{2\mathbb{E}_y[\Tr(R_y)]}{a+\mathbb{E}_y[\Tr(R_y)]}.
\end{align}
The last inequality is Jensen's inequality for the concave function $u\mapsto u/(a+u)$ on $u\geq0$.
But writing out in full the definition of the projector in \Cref{def:Ry} we have, 
\begin{align}
2\mathbb{E}_y[\Tr(R_y)] &= 2\mathbb{E}_y[\sum_{S \not\in \mathcal{G}} q_{\beta}(S) 2^{|S|-\text{rank}(H_S)} \mathbb{I}(S \text{ consistent for } y)]\\
&=2\sum_{S \not\in \mathcal{G}} q_{\beta}(S) 2^{|S|-\text{rank}(H_S)} \Pr(S \text{ consistent for } y) \\
&=2\sum_{S \not\in \mathcal{G}} q_{\beta}(S) 2^{|S|-\text{rank}(H_S)} \frac{2^{\text{rank}(H_S)}}{2^{|S|}}\\
&= 2\sum_{S \not\in \mathcal{G}} q_{\beta}(S) \\
&= 2p_{\mathrm{dep},\mathbf{t}}.
\end{align} 

This proves the claim, since $a=1-p_{\mathrm{dep},\mathbf{t}}$.
\end{proof}

Now, we analyze the runtime of \Cref{alg:classical-rejection-sampler}.
\begin{theorem}[Expected runtime of \Cref{alg:classical-rejection-sampler}]\label{thm:expruntimerejectionsampler}
For a fixed Hamiltonian $\calH$, the pre-processing time required for \Cref{alg:classical-rejection-sampler} is $O(mn^2+n^3)$ and, after pre-processing, the expected runtime of \Cref{alg:classical-rejection-sampler} (over the sampling of $S$ in Line 3) is 
\begin{equation}
\mathbb{E}[T] = O\left(\frac{m+n^3}{1-p_{\mathrm{dep},\mathbf{t}}}\right).
\end{equation}
\end{theorem}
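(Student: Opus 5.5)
The runtime splits into a one-time pre-processing cost and a per-sample cost. The expected per-sample cost is the expected number of rejection rounds, times the cost of one round, plus the cost of the final output steps.

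\textbf{Pre-processing.} The main work is computing the simultaneously diagonalizing Clifford $C$ together with the data $g_i, r_i$. I would first find a basis of the row space of $H$ (equivalently, of the span of $\{\symp(P_i)\}$) by Gaussian elimination on the $2n\times m$ matrix; this costs $O(mn^2)$. Because the Paulis commute, this span is an isotropic subspace of dimension $k\le n$. The standard stabilizer-tableau procedure (symplectic Gram--Schmidt, or Aaronson--Gottesman style reduction) then produces a Clifford $C$ that maps the $k$ basis elements to $Z_1,\dots,Z_k$, in $O(n^3)$ time. Each $P_i$ is written in this basis, with coefficients recorded during the elimination. This gives $g_i\in\ff_2^n$ supported on the first $k$ coordinates, with $O(n^2)$ work per term and $O(mn^2)$ in total; the signs $r_i$ are obtained by tracking phases in the tableau at the same cost. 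The total is $O(mn^2+n^3)$.

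\textbf{Rejection loop.} Each round samples $S\leftarrow\mathrm{Ber}(\mathbf t)$ with $m$ coin flips, in time $O(m)$. It then checks whether $\operatorname{rank}(B_S)=|S|$. The key observation for the check is that $|S|>n$ already implies dependence, so the round can reject immediately in time $O(m)$. Otherwise $B_S$ is at most $n\times n$, and Gaussian elimination takes $O(n^3)$. One round therefore costs $O(m+n^3)$. The rounds are independent, and each accepts with probability $a=1-p_{\mathrm{dep},\mathbf t}$, so the number of rounds $N$ is geometric with $\mathbb E[N]=1/(1-p_{\mathrm{dep},\mathbf t})$. By Wald's identity, or simply linearity since the per-round cost is bounded deterministically, the loop has expected cost $O\bigl((m+n^3)/(1-p_{\mathrm{dep},\mathbf t})\bigr)$.

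\textbf{Output steps.} These are executed once. Computing $T_S$ by column Gaussian elimination on $B_S$ ($|S|\le n$) costs $O(n^3)$. Forming $b_S$ and $z$, sampling $u$, and computing $x=T_Sz$ cost $O(n^2)$. Producing the classical description of $C^\dagger\ket{x}$ means applying $X^x$ to the computational-basis tableau and conjugating by the stored $C^\dagger$, which costs $O(n^2)$ per generator update, $O(n^3)$ in total. All of this is absorbed into the bound, since $1/(1-p_{\mathrm{dep},\mathbf t})\ge1$.

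The only step needing care is the per-round rank test. It must not cost $O(mn^2)$, which would spoil the claimed bound; the early rejection when $|S|>n$ avoids this. The pre-processing step is standard, but it relies on the commuting structure to obtain the Clifford within $O(n^3)$.
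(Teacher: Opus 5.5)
Your proposal is correct and follows the same route as the paper's proof. The pre-processing costs $O(mn^2+n^3)$ via Gaussian elimination on the $2n\times m$ symplectic matrix plus tableau-based Clifford synthesis; each round costs $O(m)$ for sampling plus early rejection when $|S|>n$ or an $O(n^3)$ rank test; the number of rounds is geometric with mean $1/(1-p_{\mathrm{dep},\mathbf t})$; and the one-time $O(n^3)$ output steps are absorbed into the bound. You are somewhat more explicit than the paper in costing the computation of all $g_i,r_i$ and the final stabilizer description, but the argument is the same.
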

\begin{proof}
The $O(mn^2+n^3)$ pre-processing time arises as follows: first, we find a simultaneous basis for all the Paulis in $\calH$, which can be achieved by writing out $\bigl(\symp(P_1)\;\cdots\;\symp(P_m)\bigr) \in \mathbb{F}_2^{2n\times m}$ and performing Gaussian elimination on it ($O(mn^2)$ time). Then, we find a diagonalizing Clifford using stabilizer tableau methods in $\Theta(n^3)$ time. 

    Recall that 
    \begin{equation}
        p_{\mathrm{dep},\mathbf{t}} := \Pr_{S\leftarrow \operatorname{Ber}(\mathbf{t})}[\text{rank}(B_S) \neq |S|] =\sum_{S \not\in \mathcal{G}} q_{\beta}(S).
    \end{equation}
    So, the number of times we have to sample before we draw some $S$ such that $\text{rank}(B_S) = |S|$ and we accept is distributed as a geometric random variable whose expectation is the inverse of the success probability, $1/(1-p_{\mathrm{dep},\mathbf{t}})$. 

    The runtime for sampling any candidate $S$ is $O(m)$, for example by scanning over all the terms indexed by $1,\ldots, m$ and Bernoulli sampling with the correct bias for each. 
    
    After sampling, we further have to compute what the rank of $B_S$ is to decide whether to accept or reject. In fact, we may immediately reject if $|S|>n$ because $B_S \in \mathbb{F}_2^{|S|\times n}$ cannot possibly have rank greater than $n$. If $|S|\leq n$, computing its rank can be done by Gaussian elimination in time $O(n^3).$

    Thus, the expected total runtime to generate one accepted sample is $O\left(\frac{m+n^3}{1-p_{\mathrm{dep},\mathbf{t}}}\right)$.
    
    Finally, conditioned on sampling a good $S$, we have to perform Gaussian elimination on an $n\times n$ matrix and sample $u \in \mathbb{F}_2^{n-|S|}$ uniformly; these take time $O(n^3)$ and $O(n)$ respectively, so we absorb them into the existing term. 
\end{proof}
\subsubsection{Dequantization of QD-HDQI with USD}

We now observe that \Cref{alg:classical-rejection-sampler} dequantizes QD-HDQI utilizing USD as a quantum decoder. This is because $p_{\mathrm{dep},\mathbf{t}}$, which characterizes the trace-distance error of \Cref{alg:classical-rejection-sampler} in \Cref{theorem:trace_distance_error}, can be related exactly to the failure probability of QD-HDQI utilizing USD. Let us now explain why. 

We work in the setting of \Cref{corr:hdqiusdthreshold} which gives guarantees for our QD-HDQI+USD algorithm; i.e. we restrict to the case where the absolute values of the Hamiltonian coefficients satisfy $v_i = 1$, so that we can speak of a threshold for the binary erasure channel (BEC) induced by the USD decoder. With this restriction, let us consider a family of commuting Pauli Hamiltonians whose associated symplectic codes have BEC threshold $\epsilon^\ast$. The Hamiltonian signs do not affect the symplectic code, and hence do not affect this threshold. The threshold means that, for every fixed erasure probability $\epsilon<\epsilon^\ast$, the BEC decoding failure probability vanishes as the blocklength $m\to\infty$ along this code family. Since USD induces erasure probability $\epsilon=\tanh\beta$, \Cref{corr:hdqiusdthreshold} therefore gives the temperature threshold
\[
\beta^\ast=\tanh^{-1}(\epsilon^\ast).
\]

As explained before \Cref{corr:hdqiusdthreshold}, erasure decoding can be implemented optimally for unique recovery using Gaussian elimination. Let the erasure probability of the BEC be $t$. Given an erased set $S$, the erased coordinates are uniquely recoverable if and only if the corresponding columns of the parity-check matrix are linearly independent. Hence the failure probability of this decoder is
\begin{equation}
p_{\mathrm{fail}}^{\mathrm{BEC}(t)}
=
\Pr_{S\leftarrow \operatorname{Ber}(t\mathbf{1})}
\left[\operatorname{rank}(H_S)<|S|\right].
\end{equation}
This is exactly the quantity denoted by $p_{\mathrm{dep},t\mathbf{1}}$ for \Cref{alg:classical-rejection-sampler} in \Cref{theorem:trace_distance_error}.

Consequently, for every fixed $t<\epsilon^\ast$,
\begin{equation}
p_{\mathrm{fail}}^{\mathrm{BEC}(t)}
=
p_{\mathrm{dep},\mathbf{t}}
\longrightarrow 0
\qquad\text{as }m\to\infty,
\end{equation}
and by \Cref{theorem:trace_distance_error}, when the Bernoulli parameter used in \Cref{alg:classical-rejection-sampler} is $t$ for all coordinates, its error also vanishes:
\begin{equation}
\mathbb{E}_{y}\!\left[
\left\|\rho_{\beta,y}-\sigma_{\beta,y}\right\|_1
\right]
\leq
2p_{\mathrm{dep},\mathbf{t}}
\longrightarrow 0.
\end{equation}
Moreover, the acceptance probability of \Cref{alg:classical-rejection-sampler} is $1-p_{\mathrm{dep},\mathbf{t}}$, so it also runs efficiently in this regime: its expected number of repetitions is $1/(1-p_{\mathrm{dep},\mathbf{t}})=1+o(1)$. 
Since $t=\tanh(\beta)$ for \Cref{alg:classical-rejection-sampler} to sample the Gibbs state at inverse temperature $\beta$, every
\begin{equation}
    \beta<\beta^\ast:=\tanh^{-1}(\epsilon^\ast)
\end{equation}
is achievable, which matches \Cref{corr:hdqiusdthreshold} exactly. 

Let us give an insight into why: Gibbs sampling from commuting Hamiltonians reduces to decoding a PSC. As explained in \Cref{sec:commuting}, applying USD independently to the PSC outputs produces an erased set
$S\leftarrow \operatorname{Ber}(\mathbf t)$ with $t_i=\tanh(\beta v_i)$. Decoding of the resulting BEC fails precisely when the symplectic matrix restricted to $S$ is rank deficient. This is exactly the event on which \Cref{alg:classical-rejection-sampler} rejects, which is why it determines the error made by the algorithm.

\section{Application to non-commuting Hamiltonians}\label{sec:noncommuting}
For commuting Hamiltonians, we showed how QD-HDQI can use a substantially simplified noise state compared with the original HDQI algorithm.
It is therefore natural to ask whether this simplification extends beyond this commuting setting.
Indeed, the elegant structure of QD-HDQI for commuting Hamiltonians relies on substantial structure: the Gibbs noise state factorizes into a simple product state, and the corresponding quantum decoding problem reduces to transmitting the symplectic code through independent binary-input pure-state channels. Neither property need hold for a general non-commuting Hamiltonian.

Perhaps surprisingly, we show that our approach nevertheless extends beyond the commuting setting.
Specifically, we consider Hamiltonians whose anti-commutation graph decomposes into small (logarithmic-sized) connected components.
This is precisely the regime of non-commutativity in which the original HDQI algorithm admitted efficient pilot-state preparation.
We show that a natural choice for the noise function in this non-commuting setting leads to a noise state that factorizes across the connected components, bypassing the elaborate classical preprocessing required by the original HDQI algorithm to construct an MPS representation of the pilot state.
Furthermore, we show that the decoding problem corresponds to transmitting the symplectic code over a sequence of block-independent symmetric pure-state channels.

Finally, we show that for certain Hamiltonians in this regime, QD-HDQI can reach lower temperatures with quantum decoders than achievable with any classical decoder.
We demonstrate this explicitly for dilute semiclassical spin glasses.

\subsection{Intuition for QD-HDQI's noncommuting extension}
Let us group the Pauli terms in the Hamiltonian such that only Paulis within the same group may anticommute. 
To describe this mathematically, consider a partition $\{I_\gamma\}_{\gamma=1,\dots,C}$ of $[m]$ and consider a Pauli Hamiltonian of the form
\begin{equation}
    \cH
    = \sum_{\gamma=1}^C \cH_{\gamma},
    \qquad
    \cH_{\gamma}
    \coloneqq \sum_{j\in I_\gamma}(-1)^{y_j}v_jP_j,
\end{equation}
where $C\geq 1$ and $m_\gamma\coloneqq |I_\gamma|$. Furthermore, we assume that
\begin{enumerate}
    \item every Pauli in $I_\gamma$ commutes with every Pauli in
    $I_\delta$ whenever $\gamma\neq\delta$.
    \item $\max_\gamma m_\gamma$ is small, so typically $\mathcal O(\log m)$ or even $\mathcal O(1)$.
\end{enumerate}
We note that in the language of \cite{schmidhuber2025hamiltoniandecodedquantuminterferometry}, these conditions correspond to requiring that the connected components of the anti-commutation graph are of small size.
The anti-commutation graph of a Pauli Hamiltonian is defined as follows: It contains one vertex $v_i$ for every Pauli term $P_i$, and for $i\neq j$, $v_i$ and $v_j$ are connected by an edge if and only if $P_i$ and $P_j$ anti-commute.
The connected components of this graph correspond to the partition $\{I_\gamma\}_{\gamma=1,\dots,C}$ considered above.
The condition $\max_{\gamma}m_{\gamma} = \mathcal O(\log n)$ ensures efficient pilot-state preparation in the original HDQI algorithm for non-commuting Hamiltonians.

The intuition for why our algorithm still works in this regime is that the clusters $H_\gamma$ of Pauli terms can in some sense behave similarly to a single Pauli term in the commuting case.
That is, we will see that the noise state has a block product form across the partition, and similarly, the noise model corresponds to a block-independent pure-state channel.

We write $e=(e_1,\ldots,e_C)$ and $y=(y_1,\ldots,y_C)$ according to this
partition, and define
\begin{equation}
    Q_{e_\gamma}^{(\gamma)}
    \coloneqq \prod_{j\in I_\gamma}P_j^{e_j},
    \qquad
    Q_e\coloneqq \prod_{j=1}^m P_j^{e_j},
\end{equation}
where each product is taken in the fixed order of the Pauli terms.
By definition it holds that
\begin{equation}
    Q_e=\prod_{\gamma=1}^C Q_{e_\gamma}^{(\gamma)}.
\end{equation}

\subsection{Characterization of the noise function and the involved decoding problem}
We first obtain a normalized Pauli expansion for each block with following helper lemma.
\begin{lemma}
\label{lem:independent_block_expansion}
    Consider a Hamiltonian of the form
    \begin{equation}
        \mathcal K_y
        \coloneqq
        \sum_{j=1}^{r}(-1)^{y_j}v_jP_j,
    \end{equation}
    where $P_1,\ldots,P_r\in\widetilde{\mathcal P}_n$ are arbitrary $n$-qubit Pauli strings, $y\in\ff_2^r$ and
    $v_1,\ldots,v_r\in [0,1]$.
    For $e\in\ff_2^r$, let us define
    \begin{equation}
        Q_e\coloneqq P_1^{e_1}\cdots P_r^{e_r}
        \, .
    \end{equation}
    Then, for every $\beta\geq 0$, there exists a unique function
    $h:\ff_2^r\rightarrow\mathbb C$, independent of $y$, such that $\norm{h}_2=1$ and
    \begin{equation}\label{eq:sign_covariant_block_expansion}
        \forall y\in\ff_2^r: \sqrt{\frac{2^n}{Z_{\beta,y}}}
        \exp\!\left(-\frac{\beta}{2}\mathcal K_y\right)
        =
        \sum_{e\in\ff_2^r}
        h(e)(-1)^{y\cdot e}Q_e
    \end{equation}
    where
    \begin{equation}
        Z_{\beta,y}\coloneqq\tr\!\left[\exp(-\beta\mathcal K_y)\right]
        \, .
    \end{equation}
    In the special case that the symplectic vectors $\symp(P_j)$ are linearly independent, 
    $h(e)$ can be understood as the expansion coefficients of $\sqrt{\frac{2^n}{Z_{\beta,0}}}\exp\!\left(-\frac{\beta}{2}\mathcal K_0\right)$ into the basis $\{Q_e | e\in\ff_2^r\}$ and $Z_{\beta,y}=Z_{\beta,0}$ for all $y\in\ff_2^r$.
\end{lemma}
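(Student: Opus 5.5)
The plan is to turn the $y$-dependence into a Fourier problem. For each $y$, the target operator
\[
T_y\coloneqq\sqrt{2^n/Z_{\beta,y}}\exp(-\tfrac{\beta}{2}\mathcal K_y)
\]
has a unique expansion in the Pauli string basis. Write $H\in\ff_2^{2n\times r}$ for the matrix with columns $\symp(P_j)$ and let $s\in\ff_2^{2n}$ be a syndrome. Set
\[
T(y,s)\coloneqq 2^{-n}\tr[B_s^\dagger T_y],
\]
where $B_s$ is the Pauli string with $\symp(B_s)=s$. Since $\symp(Q_e)=He$, the identity \cref{eq:sign_covariant_block_expansion} is equivalent to the following family of scalar equations: for every $y$ and every $s$,
\begin{equation*}
    \sum_{e:He=s} h(e)\,\eta(Q_e)\,\overline{\eta(B_s)}\,(-1)^{y\cdot e} = T(y,s).
\end{equation*}
Here the phases are normalised as in the proof of \cref{lem:average_normalization}, and for $s\notin\operatorname{im}H$ the left side is empty, so these equations force $T(y,s)=0$. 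The unknowns are the $2^r$ values $h(e)$, grouped by cosets $e_0+\ker H$.

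\emph{Uniqueness.} On a fixed coset, the characters $e\mapsto(-1)^{y\cdot e}$ for $y\in\ff_2^r$ span all functions on that coset. So if the left side vanishes for all $y$, then $h\,\eta(Q_\cdot)$ vanishes on the coset. Hence the linear map $h\mapsto(\text{left side})_{y,s}$ is injective, which gives uniqueness.

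\emph{Existence.} I define $h$ by inverse Fourier transform in $y$:
\[
h(e)\coloneqq\overline{\eta(Q_e)}\,\eta(B_{He})\,2^{-r}\sum_{y}(-1)^{y\cdot e}\,T(y,He).
\]
Plugging this back in, the sum over $e\in e_0+\ker H$ of $(-1)^{(y+y')\cdot e}$ equals $|\ker H|(-1)^{(y+y')\cdot e_0}$ if $y+y'\in(\ker H)^\perp=\operatorname{im}H^\top$, and $0$ otherwise. The equation therefore holds exactly when $T$ satisfies a covariance
\[
T(y+H^\top u,s)=(-1)^{u\cdot s}\,T(y,s)\quad\text{for all }u,
\]
and also $T(y,s)=0$ for $s\notin\operatorname{im}H$.

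Verifying this covariance is the crux, and I expect it to be the main step. I would check it by a conjugation argument. Take $u\in\ff_2^{2n}$, let $R$ be the Pauli string with symplectic vector $\Omega u$, where $\Omega$ is the symplectic form. Then $(H^\top u)_j=1$ exactly when $R$ anticommutes with $P_j$. Hence
\[
R\,\mathcal K_y R^\dagger=\mathcal K_{y+H^\top u}.
\]
Conjugation preserves $Z_{\beta,y}$ and maps $T_y$ to $T_{y+H^\top u}$. It multiplies the coefficient of $B_s$ by the commutation sign of $B_s$ with $R$, which is $(-1)^{u\cdot s}$.

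The support condition I would get from the power series. Every word in the $P_j$ reduces to a phase times some $Q_e$, so $T(y,s)=0$ off $\operatorname{im}H$. As a by-product, the coefficients in this power-series expansion carry the sign $(-1)^{y\cdot e}$, with $e$ the parity vector of the word.

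\emph{Normalisation.} By Plancherel on each coset,
\[
\sum_{e}|h(e)|^2=\mathbb E_y\sum_s|T(y,s)|^2 .
\]
The summands are controlled by the normalisation of $T_y$: $\sum_s|T(y,s)|^2=2^{-n}\tr[T_y^\dagger T_y]=1$ for every $y$, by the choice of $Z_{\beta,y}$. Hence $\norm{h}_2=1$.

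\emph{Independent case.} If the $\symp(P_j)$ are linearly independent, then $\ker H=0$ and $H^\top$ is surjective onto $\ff_2^r$. So every $y$ is reached as $0+H^\top u$, and $\mathcal K_y$ is Pauli-conjugate to $\mathcal K_0$; thus $Z_{\beta,y}=Z_{\beta,0}$. Moreover, the $Q_e$ are then distinct Pauli strings up to phase, so $\{Q_e\}$ is a basis of the span. At $y=0$ the identity reads $T_0=\sum_e h(e)Q_e$, i.e. $h$ consists of the expansion coefficients of $T_0$ in this basis, as claimed.
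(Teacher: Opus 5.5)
Your proposal is correct and follows essentially the same route as the paper. Both arguments Fourier-transform in the sign vector $y$. Both then use the Pauli-conjugation covariance $R\,\mathcal K_y R^\dagger=\mathcal K_{y+H^\top u}$ (the paper's $R_t$ with $t\in\cC^\perp=\operatorname{im}H^\top$) to confine each Fourier component to the coset $\{e:He=s\}$, and both normalise by averaging $\tr[T_y^\dagger T_y]=2^n$ over $y$. The differences are only presentational: you work coefficientwise in the Pauli basis rather than with the operator-valued transform $\hat A(e)$, and you spell out uniqueness and the linearly independent special case, which the paper leaves implicit.
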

\begin{proof}
    Write $p_j\coloneqq\symp(P_j)$ and let
    \begin{equation}
        A_y
        \coloneqq
        \sqrt{\frac{2^n}{Z_{\beta,y}}}
        \exp\!\left(-\frac{\beta}{2}\mathcal K_y\right).
    \end{equation}
    Since the algebra generated by $P_1,\ldots,P_r$ is spanned by the
    ordered products $\{Q_e:e\in\ff_2^r\}$, every $A_y$ belongs to
    \begin{equation}
        \mathcal A
        \coloneqq
        \operatorname{span}\{Q_e:e\in\ff_2^r\}.
    \end{equation}
    Denote by $H$ the $2n\times r$ binary matrix whose $j$-th column is $p_j$.
    Therefore, $\cC\coloneqq\ker(H)$ is the symplectic code associated to $\mathcal K_y$.

    Consider the linear map $\Phi:\ff_2^{2n}\rightarrow\ff_2^r$ defined as
    \begin{equation}
        \Phi(x) \coloneqq \begin{pmatrix}
            \langle x,p_1 \rangle_s \\
            \langle x,p_2 \rangle_s \\
            \vdots \\
            \langle x,p_r \rangle_s
        \end{pmatrix}
    \end{equation}
    where
    \begin{equation}
        \langle a,b \rangle_s \coloneqq a\cdot Jb \, ,
        \quad 
        J \coloneqq \begin{pmatrix}
            0 & \mathds{1} \\ \mathds{1} & 0
        \end{pmatrix}
    \end{equation}
    is the standard symplectic inner product of $a,b\in\ff_2^{2n}$.
    In words, if $q$ is the symplectic representation of $Q\in\widetilde{\mathcal P}_n$, then the $i$-th entry of the vector $\Phi(q)$ is zero if $Q$ and $P_i$ commute, and it is $1$ if $Q$ and $P_i$ anti-commute.

    Observe that the image of $\Phi$ is
    \begin{equation}
        \Ima(\Phi) = \Ima(H^TJ) = \Ima(H^T) = \cC^\perp
    \end{equation}
    because $J$ is invertible.
    In particular, this means that for all $t\in\cC^\perp$ there exists a Pauli $R_t\in \widetilde{\mathcal{P}}_n$ such that
    \begin{equation}
        \forall j=1,\dots,r : R_tP_jR_t = (-1)^{t_j}P_j
        \, .
    \end{equation}
    In particular, this implies
    \begin{equation}\label{eq:Qe_conjugation}
        \forall t\in \cC^\perp : R_t Q_e R_t = (-1)^{t\cdot e}Q_e
        \, ,
    \end{equation}
    \begin{equation}
        \forall t\in \cC^\perp : R_t \mathcal K_y R_t = \mathcal K_{y+t}
        \, ,
    \end{equation}
    \begin{align}
        \forall t\in \cC^\perp :
 Z_{\beta,y+t}
  &=\tr\left[e^{-\beta\mathcal K_{y+t}}\right] \\
  &=\tr\left[R_t e^{-\beta\mathcal K_y}R_t\right]\\
  &=\tr\left[e^{-\beta\mathcal K_y}R_t^2\right]
  =Z_{\beta,y}
        \, ,
    \end{align}
    and therefore also
    \begin{equation}\label{eq:code_invariance}
        \forall t\in\cC^\perp : A_{y+t} = R_t A_y R_t
        \, .
    \end{equation}
    
    Let's now consider the Fourier transform  of $A_y$, i.e., we view $A_y$ as a matrix where each entry is a function of $y$, and we take the element-wise Fourier transform of each such function.
    We get
    \begin{equation}\label{eq:operator_ft}
        \hat{A}(e) = \frac{1}{\sqrt{2^r}} \sum_{y\in\ff_2^r}(-1)^{y\cdot e}A_y
        \quad\text{and}\quad
        A_y = \frac{1}{\sqrt{2^r}}\sum_{e\in\ff_2^r} (-1)^{y\cdot e}\hat{A}(e)
        \, .
    \end{equation}

    We shall now show that $\hat{A}(e)$ is proportional to $Q_e$.
    Consider the expansion of $\hat{A}(e)$ into the basis $\{ Q_x \mid x\in\mathcal{X}\}$ of $\mathcal A$, where $\mathcal{X}$ is a set of representatives for each coset in $\ff_2^r/\cC$
    \begin{equation}
        \hat{A}(e) = \sum_{x\in\mathcal X} \lambda_x(e) Q_x
        \, .
    \end{equation}
    Hence, we can see from \cref{eq:Qe_conjugation} that
    \begin{equation}
        \forall t\in\cC^\perp : R_t \hat{A}(e) R_t = \sum_{x\in\mathcal X} \lambda_x(e) (-1)^{t\cdot x} Q_x
    \end{equation}
    However, at the same time \cref{eq:code_invariance} also implies that
    \begin{equation}
        \forall t\in\cC^\perp : R_t\hat{A}(e)R_t = (-1)^{t\cdot e}\hat{A}(e)
        \, .
    \end{equation}
    Note that two characters $(-1)^{t\cdot x}$ and $(-1)^{t\cdot e}$ can only be equal for all $t\in\cC^\perp$ if $x+e\in \cC$.
    By linear independence of the basis operators $Q_x$, we must have $\lambda_x(e)=0$ unless $x+e\in\cC$.
    There is exactly one such representative $x\in\mathcal X$, and $Hx=He$ implies that $Q_x$ differs from $Q_e$ only by a phase.   
    Hence, there must exist a scalar $h(e)\in\mathbb{C}$ such that
    \begin{equation}
        \hat{A}(e) = \sqrt{2^r}h(e) Q_e
    \end{equation}
    And we conclude from \cref{eq:operator_ft} that
    \begin{equation}
        A_y = \sum_{e\in\ff_2^r}(-1)^{y\cdot e}h(e) Q_e
        \, .
    \end{equation}
    It remains to argue that $h$ is a properly normalized function.
    Recall that $\frac{1}{2^n}A_y^2$ is properly normalized, i.e.,
    \begin{equation}
        \forall y\in\ff_2^r : \tr[A_y^\dagger A_y] = 2^n
        \, .
    \end{equation}
    We conclude that
    \begin{align}
        1
        &= \frac{1}{2^n}\frac{1}{2^r}\sum_{y\in\ff_2^r} \tr[A_y^\dagger A_y] \\
        &= \frac{1}{2^n}\frac{1}{2^r}\sum_{y\in\ff_2^r}\sum_{e,e'\in\ff_2^r} \overline{h(e)}h(e') (-1)^{y\cdot (e+e')} \tr[Q_e^{\dagger}Q_{e'}] \\
        &= \frac{1}{2^n}\sum_{e\in\ff_2^r}\abs{h(e)}^2\tr[Q_e^{\dagger}Q_{e}] \\
        &= \sum_{e\in\ff_2^r} \abs{h(e)}^2
        \, .
    \end{align}
\end{proof}

We can apply \cref{lem:independent_block_expansion} to every block $\cH_\gamma$ and denote the
resulting coefficient function by $h_\gamma$.
Using the fact that the blocks $\cH_{\gamma}$ mutually commute, we can now write
\begin{align}
    \sqrt{\frac{2^n}{Z_{\beta,y}}}
    \exp\left(-\frac{\beta}{2}\cH\right)
    &= \sqrt{\frac{2^n}{Z_{\beta,y}}}
    \prod\limits_{\gamma=1}^C
    \exp\left(-\frac{\beta}{2}\cH_\gamma\right) \\
    &= \frac{1}{\sqrt{Z_{\beta,y}'}}
    \prod\limits_{\gamma=1}^C
    \sum_{e_\gamma\in\ff_2^{m_\gamma}}(-1)^{y_\gamma\cdot e_\gamma}h_\gamma(e_\gamma)Q^{(\gamma)}_{e_\gamma} \\
    &= \sum_{e\in\ff_2^m} \tilde f(e)Q_e,
\end{align}
where 
\begin{equation}
    Z_{\beta,y}' \coloneqq
    \frac{(2^n)^{C-1}Z_{\beta,y}} {\displaystyle\prod_{\gamma=1}^C Z_{\beta,y_\gamma}^{(\gamma)}}
\end{equation}
and
\begin{equation}
    \tilde f(e)\coloneqq \frac{1}{\sqrt{Z_{\beta,y}'}}(-1)^{y\cdot e}\prod\limits_{\gamma=1}^C h_\gamma(e_\gamma)
    \, .
\end{equation}
Unfortunately, $\tilde f$ is not necessarily a valid noise function for HDQI, as it may not properly be normalized, i.e., in general one could have $Z_\beta '\neq 1$.
As in the commuting case, our strategy will instead be to use the function 
\begin{equation}
    f(e) \coloneqq (-1)^{y\cdot e}\prod\limits_{\gamma=1}^C h_\gamma(e_\gamma)
\end{equation}
as a noise function in HDQI.
If HDQI succeeds, then the obtained state $\rho_f$ is a good approximation of the desired Gibbs state $\rho_{\tilde f}$ as long as $Z_{\beta}'\approx 1$.
As in the commuting case, we can show that this occurs with probability approaching $1$ (as long as the decoding problem is solvable).
\begin{theorem}[QD-HDQI for non-commuting Hamiltonians]\label{thm:analysis_clustered}
    For $y\in \mathbb{F}_2^m$, consider a non-commuting Hamiltonian $\mathcal H_y=\sum_{i=1}^m(-1)^{y_i}P_i$ with $\max_\gamma m_\gamma = \mathcal O(\log m)$ and some target inverse temperature $\beta\geq 0$.
    Assume we have a polynomial-time classical or quantum decoding algorithm for the associated decoding problem which fails with probability $\pfail$.
    Furthermore, assume $1/(1-\pfail)=\mathcal{O}(\mathrm{poly}(m))$.
    Then, there exists an efficient quantum algorithm which prepares a state $\tilde{\rho}$ which is close to the Gibbs state in expectation over a uniformly random choice of signs
    \begin{equation}
        \mathbb{E}_{Y\sim\mathrm{Uniform}(\ff_2^m)}
        \left[ \norm{\tilde{\rho} - \frac{1}{Z_\beta}\exp\left(-\beta \cH_y\right)}_1 \right]
        \leq 6\sqrt{\pfail}
        \, .
    \end{equation}
\end{theorem}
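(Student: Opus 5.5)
The plan is to mirror the proof of \cref{thm:analysis_commuting} exactly, with \cref{lem:independent_block_expansion} replacing \cref{lemma:coeff_function}. Set $h(e)\coloneqq\prod_\gamma h_\gamma(e_\gamma)$. Since each $h_\gamma$ is normalized, $\norm{h}_2=1$, and $h$ is independent of $y$. We run \cref{alg:algo1} with noise function $f(e)=(-1)^{y\cdot e}h(e)$.

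The first step is efficiency of the noise state. The state $\sum_e f(e)\ket e$ equals $Z^{y}\bigotimes_\gamma\bigl(\sum_{e_\gamma}h_\gamma(e_\gamma)\ket{e_\gamma}\bigr)$. Each block has $m_\gamma=\mathcal O(\log m)$ qubits. Its $2^{m_\gamma}=\mathrm{poly}(m)$ amplitudes can be computed classically, via a Walsh--Hadamard transform of $\exp(-\beta\cH_\gamma/2)$ over the block's sign vector as in \cref{lem:independent_block_expansion}, and the block state can then be loaded by a generic state-preparation circuit of size $\mathrm{poly}(m)$. Together with the standing assumptions of an efficient decoder and $1/(1-\pfail)=\mathrm{poly}(m)$, \cref{thm:analysis_codeword} gives $\norm{\tilde\rho-\rho_f}_1\le\pfail+2\alpha+\alpha^2$.

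The second step bounds $\alpha$. Because $f$ has exactly the form $g_y(e)=(-1)^{e\cdot y}h(e)$, \cref{lem:average_alpha_codeword} applies directly: the symmetrized decoder has $y$-independent failure probability $\pfail$ and $\mathbb E_Y[\alpha^2]=\pfail$. Jensen's inequality then gives $\mathbb E_Y[\alpha]\le\sqrt{\pfail}$.

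The third step bounds the normalization error. We have $\tilde f=f/\sqrt{Z'_{\beta,y}}$ and $\frac{1}{\sqrt{2^n}}\sum_e\tilde f(e)Q_e=\sqrt{\rho_\beta(\cH_y)}$, which is a normalized state for every $y$. This is precisely the hypothesis of \cref{lem:average_normalization} with $K_y=Z'_{\beta,y}$ and $h$ as above, so $\mathbb E_Y|Z'_{\beta,y}-1|\le2\sqrt{\pfail}$. Since $A_f=\sqrt{Z'_{\beta,y}}A_{\tilde f}$, we get $\rho_f=Z'_{\beta,y}\rho_\beta(\cH_y)$, hence $\norm{\rho_f-\rho_\beta}_1=|Z'_{\beta,y}-1|$. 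The triangle inequality then gives
\[
\mathbb E_Y\bigl[\norm{\tilde\rho-\rho_\beta}_1\bigr]\le\pfail+2\sqrt{\pfail}+\pfail+2\sqrt{\pfail}\le6\sqrt{\pfail}.
\]

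The main obstacle is making sure the block factorization is legitimate. Three points need checking.
\begin{enumerate}
\item The product $\prod_\gamma Q^{(\gamma)}_{e_\gamma}$ must equal $Q_e$ with no extra sign. After relabeling the terms so that the blocks are contiguous, this holds because distinct blocks commute.
\item The phase $\eta(Q_e)$ inside $\ket{\Psi_c}$ is absorbed into the decoding problem, which is the one for noise $h(e)\eta(Q_e)$ required by both lemmas. The decoder assumed in the theorem must be understood to be this one. Its block-channel interpretation is a separate matter and is not needed for the bound.
\item The normalization lemma needs $\frac{1}{\sqrt{2^n}}\sum_e\tilde f(e)Q_e$ to equal $\sqrt{\rho_y}$ exactly. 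This follows because $\exp(-\beta\cH/2)$ factorizes over the commuting blocks and each block's expansion carries its own $\sqrt{2^n/Z^{(\gamma)}_{\beta,y_\gamma}}$. Collecting these factors yields the stated $Z'_{\beta,y}$.
\end{enumerate}
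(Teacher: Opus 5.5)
Your proposal is correct and follows the paper's proof essentially step for step. It uses the block-product noise function $f(e)=(-1)^{y\cdot e}\prod_\gamma h_\gamma(e_\gamma)$, efficient block-state preparation followed by $Z^y$, $\mathbb{E}_Y[\alpha^2]=p_{\mathrm{fail}}$ with Jensen, $\mathbb{E}_Y|Z'_{\beta,y}-1|\le 2\sqrt{p_{\mathrm{fail}}}$ from the normalization lemma, and the same triangle-inequality bookkeeping as in the commuting case. The one point the paper makes explicit that your Walsh--Hadamard step leaves implicit is that $h_\gamma$ must be computed in the at most $2^{m_\gamma}$-dimensional Pauli-coefficient representation of the algebra spanned by the $Q^{(\gamma)}_{e_\gamma}$, by exponentiating the left-multiplication matrix of $\cH_{\gamma}$ there, rather than from $2^n\times 2^n$ matrices.
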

\begin{proof}
    We need to argue that the prerequisites for \cref{thm:analysis_codeword} are fulfilled, and that the involved quantity $\alpha$ is small.
    
    First, we argue that the noise state $\sum_ef(e)\ket{e}$ can always be prepared efficiently.
    First, we need to (classically) determine the functions $h_\gamma$ for each cluster.
    Then, the circuit complexity of preparing the state
    \begin{equation}
        \bigotimes_{\gamma=1}^C \left( \sum_{e_{\gamma}\in\ff_2^{m_\gamma}} h_\gamma(e_\gamma)\ket{e_\gamma}\right)
    \end{equation}
    is at most $\mathcal{O}(2^{\max_\gamma m_\gamma}C)$.
    The desired noise state is then obtained by simply applying $Z^y$.
    It remains to understand the complexity of classically calculating the functions $h_\gamma$.

    Following the proof of \cref{lem:independent_block_expansion}, recall that $h_\gamma(e)$ is the proportionality factor between $Q_{e_\gamma}^{(\gamma)}$ and $\hat{A}(e_\gamma)$ (up to a $\sqrt{2^{m_\gamma}}$ term).
    That is, it can be obtained by numerically evaluating both of these matrices.
    These operators should be represented symbolically as vectors of Pauli coefficients of size $\leq 2^r$ for $r=m_\gamma$, not as $2^n\times 2^n$ matrices.

    More precisely, the products $Q_{e_\gamma}^{(\gamma)}$ span an algebra of dimension $\leq 2^r$.
    Enumerating these products and removing duplicates up to phase gives a Pauli basis in time $2^{\mathcal O(r)}\operatorname{poly}
  (n)$.

    For each sign string $y_\gamma$, let $M_{y_\gamma}$ be the matrix of left multiplication by $\mathcal K_{y_\gamma}$ in this Pauli basis.
    Computing $e^{-\beta M_{y_\gamma}/2}\mathbf e_I$, where $\mathbf e_I$ represents the identity, gives the coefficient vector of $e^{-\beta\mathcal K_{y_\gamma}/2}$.
    Normalizing this vector gives the vector representation of $A_{y_\gamma}$.
    Next, we compute the Fourier transformation in this basis representation in order to obtain $\hat{A}(e_\gamma)$.
    Since $r=\mathcal O(\log m)$, computing $h_\gamma$ has polynomial complexity.

    Next, observe that due to \Cref{lem:average_normalization} the trace error due to the normalization is given by
    \begin{equation}
        \mathbb{E}_Y\norm{ \rho_{f} - \rho_{\tilde f} }_1 \leq 2\sqrt{\pfail}
    \end{equation}
    where $\pfail$ is the failure probability of the used quantum decoder.
    Finally, \cref{lem:average_alpha_codeword} applies, and, we have $\mathbb{E}_Y\left[ \alpha^2 \right]=\pfail$.

    The trace distance error calculation from the proof of \cref{thm:analysis_commuting} can be entirely adapted.
\end{proof}

\subsection{Involved classical and quantum decoding problems}
We have seen in \cref{thm:analysis_clustered} that HDQI can be utilized to prepare Gibbs states of non-commuting Hamiltonians, as long as the anti-commutation graph has small connected components and the associated quantum decoding problem can be solved with small failure probability.
In this section, we investigate the structure of this quantum decoding problem and show that it inherits a block-independence structure after an appropriate Clifford basis change.
Recall that the quantum decoding problem in this case consists of estimating the codeword  of the symplectic code $c\in\cC$ given the state
\begin{equation}\label{eq:psic_noncommuting}
    \ket{\Psi_c} = \sum_{e\in\ff_2^m} (-1)^{y\cdot e} \eta(Q_e) \prod\limits_{\gamma=1}^C h_\gamma(e_\gamma) \ket{c+e}
    \, .
\end{equation}

\begin{proposition}[Reduction to a block-product quantum decoding problem]\label{prop:block_pure_state_channel}
    There exists an efficiently computable Clifford unitary $U$ such that
    \begin{equation}
        \forall c\in \cC: U\ket{\Psi_c} =
        \tau(c)\bigotimes\limits_{\gamma=1}^C \ket{\phi_{\gamma,c_\gamma}}
    \end{equation}
    where $\ket{\Psi_c}$ is given by \cref{eq:psic_noncommuting}, $\tau(c)\in\{+1,-1,i,-i\}$ is some phase and
    \begin{equation}
        \ket{\phi_{\gamma,c_\gamma}} \coloneqq 
        \sum_{e_\gamma\in\ff_2^{m_\gamma}} h_\gamma'(e_\gamma)\ket{c_\gamma + e_\gamma}
    \end{equation}
    for some functions $h_\gamma':\ff_2^{m_\gamma}\rightarrow\mathbb{C}$ s.t. $\norm{h_\gamma'}=1$.
\end{proposition}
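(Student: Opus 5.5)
The plan is to strip off, by diagonal Clifford phase gates, every phase in \cref{eq:psic_noncommuting} that couples different blocks. What remains should be a product of block-local amplitudes acting on the shifted string $c+e$.

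\textbf{Step 1: decompose the phase.} Write $e=(e_\gamma)_\gamma$ and $Q_e=\prod_\gamma Q^{(\gamma)}_{e_\gamma}$. Let $(a_\gamma(e_\gamma),b_\gamma(e_\gamma))\coloneqq\symp(Q^{(\gamma)}_{e_\gamma})$; this map is linear in $e_\gamma$. Multiplying operators of the form $i^cX^aZ^b$ gives
\[
\eta(Q_e)=\prod_{\gamma}\eta\bigl(Q^{(\gamma)}_{e_\gamma}\bigr)\,(-1)^{\kappa(e)},
\qquad
\kappa(e)\coloneqq\sum_{\gamma<\delta} b_\gamma(e_\gamma)\cdot a_\delta(e_\delta).
\]
Thus $\kappa$ is a bilinear form with no diagonal-block part. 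Both remaining factors are then block-local: the factor $\prod_\gamma\eta(Q^{(\gamma)}_{e_\gamma})$, and the sign $(-1)^{y\cdot e}=\prod_\gamma(-1)^{y_\gamma\cdot e_\gamma}$. I would absorb them into a candidate $h'_\gamma(e_\gamma)\coloneqq(-1)^{y_\gamma\cdot e_\gamma}\eta(Q^{(\gamma)}_{e_\gamma})h_\gamma(e_\gamma)$. It keeps $\norm{h'_\gamma}=\norm{h_\gamma}=1$, by \cref{lem:independent_block_expansion} applied to each block. The difficulty is that $\kappa$ is evaluated on the error $e$, while a block-shift state $\bigotimes_\gamma\ket{\phi_{\gamma,c_\gamma}}$ can carry phases only as functions of the register content $x=c+e$ (or of $e_\gamma$ alone).

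\textbf{Step 2: move $\kappa$ onto the register.} Let $D$ be the diagonal Clifford $\ket{x}\mapsto(-1)^{\kappa(x)}\ket{x}$. It is a product of CZ gates between qubits of different blocks, one for each nonzero entry of the bilinear form; $Z$ gates are not needed since $\kappa$ has no diagonal part. I would take $U=D$, possibly composed with a further Pauli-$Z$ correction. Bilinearity gives
\[
\kappa(c+e)=\kappa(c)+\kappa(e)+\kappa(c,e)+\kappa(e,c),
\]
so that
\[
D\ket{\Psi_c}=(-1)^{\kappa(c)}\sum_e(-1)^{\kappa(c,e)+\kappa(e,c)}\prod_\gamma h'_\gamma(e_\gamma)\ket{c+e}.
\]
The global factor $(-1)^{\kappa(c)}$, together with any phase $\eta(Q_c)$ arising because $Q_c\propto\mathds 1$, goes into $\tau(c)$. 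This is where I would double-check the sign conventions, in particular whether $D$ should be applied or $D^\dagger$, and whether an extra $U_{\mathrm{phase}}$-type correction is required.

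\textbf{Step 3 (main obstacle): kill the cross term.} It remains to show that the linear-in-$e$ phase $(-1)^{\kappa(c,e)+\kappa(e,c)}$ is trivial for every codeword $c\in\cC$, or at least block-local and independent of $c$. Two facts are available. First, $c\in\ker H$ means $\sum_\gamma a_\gamma(c_\gamma)=0$ and $\sum_\gamma b_\gamma(c_\gamma)=0$. Second, Paulis in different blocks commute, i.e.\ $a_\gamma(x)\cdot b_\delta(x')+b_\gamma(x)\cdot a_\delta(x')=0$ for $\gamma\neq\delta$. I would expand
\[
\kappa(c,e)+\kappa(e,c)=\sum_\delta\Bigl(\sum_{\gamma<\delta}b_\gamma(c_\gamma)\Bigr)\cdot a_\delta(e_\delta)+\sum_\gamma b_\gamma(e_\gamma)\cdot\Bigl(\sum_{\delta>\gamma}a_\delta(c_\delta)\Bigr).
\]
Replacing the prefix sums by minus the suffix sums, using $Hc=0$, and then applying the cross-block commutation relation, should leave only terms of the form $\langle\symp(Q^{(\gamma)}_{c_\gamma}),\symp(Q^{(\gamma)}_{e_\gamma})\rangle_s$ within a single block. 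Such a term equals the commutation sign of $Q^{(\gamma)}_{c_\gamma}$ with $Q^{(\gamma)}_{e_\gamma}$ inside block $\gamma$.

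If that in-block residue survives, I would not try to cancel it. Instead I would absorb it by defining the block states directly as
\[
\ket{\phi_{\gamma,c_\gamma}}\coloneqq\sum_{e_\gamma}h'_\gamma(e_\gamma)\,\eta\bigl(Q^{(\gamma)}_{c_\gamma+e_\gamma}\bigr)\,\overline{\eta\bigl(Q^{(\gamma)}_{c_\gamma}\bigr)\eta\bigl(Q^{(\gamma)}_{e_\gamma}\bigr)}\ket{c_\gamma+e_\gamma},
\]
applying the in-block phase $\ket{x_\gamma}\mapsto\eta(Q^{(\gamma)}_{x_\gamma})\ket{x_\gamma}$ as part of $U$ so that the required $h'_\gamma$ becomes shift-invariant. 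This phase is diagonal but generally non-Clifford; it is still efficient since $m_\gamma=\mathcal O(\log m)$. Achieving a genuinely Clifford $U$ is the part I expect to need care. If the phases are $\pm1$ quadratic forms, CZ/S gates suffice; otherwise the statement holds with an efficient diagonal unitary in place of a Clifford.
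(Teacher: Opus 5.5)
Your Steps 1 and 2 are correct. In fact, your gate $D$ composed with the in-block phases $\overline{\eta(Q^{(\gamma)}_{x_\gamma})}$ is exactly the paper's $U_2$, since $\overline{\eta(Q_x)}=(-1)^{\kappa(x)}\prod_\gamma\overline{\eta(Q^{(\gamma)}_{x_\gamma})}$. The gap is Step 3, which is where the content of the proposition lies.

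If you carry out your substitution, what survives is not the in-block symplectic product but only half of it:
\[
\kappa(c,e)+\kappa(e,c)\equiv\sum_{\gamma} b_\gamma(c_\gamma)\cdot a_\gamma(e_\gamma)\pmod 2 .
\]
So after $D$ the block-$\gamma$ factor is $X^{c_\gamma}Z^{M_\gamma^{T}c_\gamma}\ket{h'_\gamma}$, where $(M_\gamma)_{ij}=b_i\cdot a_j$ for $i,j\in I_\gamma$. The full product $\langle\symp(Q^{(\gamma)}_{c_\gamma}),\symp(Q^{(\gamma)}_{e_\gamma})\rangle_s=c_\gamma^{T}A_\gamma e_\gamma$ actually vanishes for every codeword, but this half-form does not.

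Concrete example: take block~1 $=\{X_1Z_2,\,Z_1Z_2,\,Z_1X_2\}$ in this order (with $e_1,e_2,e_3$ indexing these terms), block~2 $=\{Y_1Y_2\}$, and the codeword $c=(1,0,1\,|\,1)$. Then $b_1(c_1)=(1,1)$ and the residue is $(-1)^{e_1+e_3}$. Since $h_1$ is nonzero on $000$ and $100$ for every $\beta>0$, $Z^{(1,0,1)}\ket{h'_1}$ is not proportional to $\ket{h'_1}$. Hence $D\ket{\Psi_c}$ is not of the required form.

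Your fallback does not repair this, for three reasons.
\begin{itemize}
\item The phase $\eta(Q^{(\gamma)}_x)^{\pm1}$ is in fact a diagonal Clifford (built from $S$ and CZ gates), so the worry about non-Clifford phases is misplaced.
\item The fallback presupposes that the residue equals $\eta(Q^{(\gamma)}_{c+e})\overline{\eta(Q^{(\gamma)}_{c})\eta(Q^{(\gamma)}_{e})}=(-1)^{c^{T}L_\gamma e+b_\gamma(c)\cdot a_\gamma(e)}$. The actual residue is only $(-1)^{b_\gamma(c)\cdot a_\gamma(e)}$. Composing with this phase therefore trades your residue for $(-1)^{c_\gamma^{T}L_\gamma e_\gamma}$. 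That is exactly what the paper is left with after $U_2$, and in the example it equals $(-1)^{e_2}$, again nonzero since $h_1(010)\neq 0$. Your displayed $\ket{\phi_{\gamma,c_\gamma}}$ accordingly has $c_\gamma$-dependent amplitudes.
\item Pauli-$Z$ corrections only add $c$-independent linear phases, so they cannot remove a $c$-dependent one.
\end{itemize}

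What is missing is the argument behind the paper's third step. Your own two facts, $Hc=0$ and inter-block commutation, give $A_\gamma c_\gamma=0$ for each block. That is, $Q^{(\gamma)}_{c_\gamma}$ commutes with every term of its block. Since $M_\gamma+M_\gamma^{T}=A_\gamma$, the Paulis $X^{x}Z^{M_\gamma^{T}x}$ for $x\in D_\gamma=\{c_\gamma:c\in\cC\}$ pairwise commute: their symplectic product is $x^{T}A_\gamma x'=0$.

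Hence $(x,M_\gamma^{T}x)\mapsto(x,0)$ is an isomorphism between isotropic subspaces. It extends to a symplectic map, which gives a block Clifford $V_\gamma$ with $V_\gamma X^{x}Z^{M_\gamma^{T}x}V_\gamma^\dagger\propto X^{x}$ for all $x\in D_\gamma$. Then $U=(\bigotimes_\gamma V_\gamma)\,D$ proves the statement with $\ket{h''_\gamma}\coloneqq V_\gamma\ket{h'_\gamma}$. This $U$ is Clifford, so no retreat to a general diagonal unitary is needed.

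The same isotropy even lets you take $V_\gamma$ diagonal. Choose a symmetric $S_\gamma$ with $S_\gamma x=M_\gamma^{T}x$ for $x\in D_\gamma$; this is possible because $x'^{T}M_\gamma^{T}x$ is symmetric on $D_\gamma$. Then realize $X^{x}\mapsto X^{x}Z^{S_\gamma x}$, up to phase, with $S$ and CZ gates.
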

\begin{proof}
    We will decompose our Clifford circuit into three steps $U=U_3U_2U_1$.
    In our first step, we simply apply $U_1=Z^y$ to obtain the state
    \begin{equation}
        \ket{\Psi_c'} = (-1)^{y\cdot c} \sum_{e\in\ff_2^m} \eta(Q_e) \prod\limits_{\gamma=1}^C h_\gamma(e_\gamma) \ket{c+e}
    \, .
    \end{equation}

    For the second step, let us write $P_i=i^{\kappa_i}X^{a_i}Z^{b_i}$.
    Direct multiplication gives
    \begin{equation}
        \eta(Q_x)=i^{q(x)},
        \qquad
        q(x)=\sum_i\kappa_ix_i
        +2\sum_{i<j}(b_i\cdot a_j)x_ix_j\pmod 4
    \end{equation}
    for any $x\in\ff_2^m$.
    We define our second unitary $U_2$ to act as
    \begin{equation}
        U_2\ket{x}\coloneqq\overline{\eta(Q_x)}\ket{x}
    \end{equation}
    and it is a diagonal Clifford, by the standard
    quadratic-form description of Clifford operators.
    To understand the action of $U_2$, let us first define some preliminaries.
    Denote by $H_\gamma$ the binary matrix with columns $\symp(P_i)$ for $i\in I_\gamma$.
    Then, $A_\gamma\coloneqq H_\gamma^TJH_\gamma$ is the anticommutation matrix corresponding to the block $\gamma$.
    Denote by $L_\gamma$ the strictly lower-triangular part of $A_\gamma$, i.e., $A_\gamma=L_\gamma + L_\gamma^T$.
    By definition, it follows that we can write
    \begin{equation}
        Q_u^{(\gamma)}Q_v^{(\gamma)} = (-1)^{u^TL_\gamma v}Q_{u+v}^{(\gamma)}
    \end{equation}
    for any $u,v\in\ff_2^{m_\gamma}$.
    In particular, we obtain
    \begin{equation}
        Q_{c+e}
        = \prod\limits_{\gamma} Q_{c_\gamma+e_\gamma}^{(\gamma)}
        = \prod\limits_{\gamma} Q_{c_\gamma}^{(\gamma)}Q_{e_\gamma}^{(\gamma)} (-1)^{c_\gamma^T L_\gamma e_\gamma}
        = Q_cQ_e \prod\limits_{\gamma} (-1)^{c_\gamma^T L_\gamma e_\gamma}
    \end{equation}
    where in the last step we used the commutativity between blocks.
    Furthermore, we have
    \begin{equation}
        \eta(Q_{c+e})=\eta(Q_c)\eta(Q_e)\prod_\gamma(-1)^{c_\gamma^T L_\gamma e_\gamma}
    \end{equation}
    because $Q_c$ is proportional to the identity.
    We hence get
    \begin{align}
        \ket{\Psi_c''}
        \coloneqq U_2\ket{\Psi_c'}
        &= (-1)^{y\cdot c} \sum_{e\in\ff_2^m} \eta(Q_e) \prod\limits_{\gamma=1}^C h_\gamma(e_\gamma) \overline{\eta(Q_{c+e})}\ket{c+e} \\
        &= (-1)^{y\cdot c} \overline{\eta(Q_c)}\sum_{e\in\ff_2^m} \prod\limits_{\gamma=1}^C h_\gamma(e_\gamma) (-1)^{c_\gamma^T L_\gamma e_\gamma}\ket{c+e} \\
        &= (-1)^{y\cdot c}\overline{\eta(Q_c)} \bigotimes_{\gamma=1}^C X^{c_\gamma}Z^{L_\gamma^T c_\gamma} \ket{h_\gamma}
    \end{align}
    where $\ket{h_\gamma}\coloneqq\sum_{e_\gamma}h_\gamma(e_\gamma)\ket{e_\gamma}$.

    We have now achieved a state in block-product form, however, there are still unwanted $c_\gamma$-dependent phases within each block caused by the $Z^{L_\gamma^T c_\gamma}$ operator.
    In our third step, we will get rid of this phase by finding a Clifford operator $V_\gamma$ for each block which fulfills
    \begin{equation}\label{eq:Vgamma_action}
        V_\gamma X^{c_\gamma}Z^{L_\gamma^T c_\gamma} V_\gamma^\dagger = \alpha_\gamma(c_\gamma)X^{c_\gamma}
    \end{equation}
    for some phase $\alpha_\gamma(c_\gamma)\in\{1,-1,i,-i\}$.
    Picking $U_3=\otimes_\gamma V_\gamma$ leads us to the desired final state
    \begin{align}
        \ket{\Psi_c'''}
        &\coloneqq U_3\ket{\Psi_c''} \\
        &= (-1)^{y\cdot c}\overline{\eta(Q_c)}\prod_\gamma \alpha_\gamma(c_\gamma) \bigotimes_{\gamma=1}^C X^{c_\gamma} V_\gamma \ket{h_\gamma} \\
        &= (-1)^{y\cdot c}\overline{\eta(Q_c)}\prod_\gamma \alpha_\gamma(c_\gamma) \bigotimes_{\gamma=1}^C X^{c_\gamma} \ket{h_\gamma'}
    \end{align}
    where $\ket{h_\gamma'}\coloneqq\sum_{e_\gamma}h_\gamma'(e_\gamma)\ket{e_\gamma}$ and $h_\gamma'$ is defined through the equation $V_\gamma\ket{h_\gamma} = \ket{h_\gamma'}$.

    It remains to construct Clifford unitaries that act as \cref{eq:Vgamma_action}.
    For convenience, let us define the Paulis $\mathcal P_\gamma(u)\coloneqq X^uZ^{L_\gamma^Tu}$ and $\mathcal Q_\gamma(u)\coloneqq X^{u}$ for $u\in\ff_2^{m_\gamma}$.
    Our goal is to find $V_\gamma$ such that
    \begin{equation}
        V_\gamma \mathcal P_\gamma(u) V_\gamma^\dagger = \alpha_\gamma(u) \mathcal Q_\gamma(u)
    \end{equation}
    for all $x\in D_\gamma \coloneqq \{c_\gamma | c\in\cC \}$.
    $D_\gamma$ is clearly a subspace of $\ff_2^{m_\gamma}$ and it fulfills $D_\gamma\subset \ker(A_\gamma)$ because for any $c\in\cC$
    \begin{equation}
        A_\gamma c_\gamma = H_\gamma^TJH_\gamma c_\gamma = -\sum_{\delta\neq\gamma} H_\gamma^TJH_\delta c_\delta = 0
    \end{equation}
    where we used $\sum_\gamma H_\gamma c_\gamma=Hc=0$ and $H_\gamma^TJH_\delta=0$ for $\gamma\neq\delta$ due to the commutativity between blocks.

    Clearly, $\mathcal Q_\gamma(x)$ and $\mathcal Q_\gamma(x')$ commute for all $x,x'\in D_\gamma$.
    We now show that $\mathcal P_\gamma(x)$ and $\mathcal P_\gamma(x')$ commute as well:
    \begin{equation}
        \langle \mathcal P_\gamma(x), \mathcal P_\gamma(x') \rangle_{\mathrm s}
        = x \cdot L_\gamma^T x' + x' \cdot L_\gamma^T x = x^TA_\gamma x' = 0
    \end{equation}
    because $x\in\ker(A_\gamma)$.

    The map $(x,L_\gamma^T x)\mapsto(x,0)$, for $x\in D_\gamma$, is therefore an isomorphism between isotropic subspaces and therefore extends to a symplectic transformation.
    Symplectic Gaussian elimination and Clifford synthesis yield an efficiently computable Clifford $V_\gamma$ implementing this transformation, so \cref{eq:Vgamma_action} holds with phases $\alpha_\gamma(x)\in\{1,-1,i,-i\}$.
\end{proof}

In summary, under the Clifford basis transformation, the quantum decoding problem takes a block-product form.
More concretely, define the block pure-state channel
\begin{equation}
    \mathsf W_\gamma:
    \ff_2^{m_\gamma}\ni x \mapsto \sum_{e_{\gamma}\in\ff_2^{m_\gamma}} h_\gamma'(e_\gamma) \ket{x + e_\gamma}
    \, .
\end{equation}
Then the quantum decoding problem is precisely that of recovering
$c\in\cC$ after transmission through
\begin{equation}
    \mathsf W_1\otimes\mathsf W_2\otimes\cdots\otimes\mathsf W_C.
\end{equation}
The channel is independent across blocks, although it need not be identical
from one block to another and it may contain arbitrary correlations within
each block. 
Since $U$ is a known Clifford circuit independent of $c$, a decoder for this product block channel can be composed with this basis change to obtain a decoder for the original quantum decoding problem that occurs in our algorithm, without changing its success probability.

\subsection{Example: Dilute classical spin glasses with non-commuting defects}
\label{sec:defected_xorsat}
We now apply our construction for non-commuting Hamiltonians to random dilute spin glasses with non-commuting defects.
Let
\begin{equation}
    A\in\ff_2^{(1-R)m\times m},
    \qquad
    R=1-\frac{\ell}{r},
\end{equation}
be drawn from the standard binary $(\ell,r)$-regular LDPC ensemble (as described in \cite{richardson_modern_2008}), where $3\leq \ell<r$ are fixed.
Fix some constant probability $q\in [0,1/2]$ and independently mark each $i\in[m]$ with probability $q$.
Let $D\subseteq[m]$ denote the set of marked indices.
Finally, choose $y\in\ff_2^m$ uniformly.
We consider the Hamiltonian of the form
\begin{equation}
    \cH_{A,D,y} \coloneqq \sum_{i=1}^m(-1)^{y_i}P_i
    \, .
    \qquad
    P_i
    \coloneqq
    \begin{cases}
        X^{a_i} & \text{if }i\in D,\\
        Z^{a_i} & \text{if }i\notin D
    \end{cases}
    \label{eq:defected_xorsat_hamiltonian}
\end{equation}
where $a_i$ denotes the $i$th column of $A$.
We will refer to the coordinates in $D$ as the \emph{flipped} bits.
Notice that we take coupling magnitudes to be one (i.e. $v_i=1$).
Our formalism can also easily capture the case where the couplings are non-uniform, but we do not pursue this generalization here for simplicity.

Let $N\coloneqq[m]\setminus D$.
For notational simplicity, we can re-order the Pauli terms such that the coordinates in $D$ occur before those in $N$.
Under reordering, the symplectic matrix of the Hamiltonian is
\begin{equation}
    H = \begin{pmatrix}
        A_D&0\\
        0&A_N
    \end{pmatrix}
\end{equation}
where $A_S$ denotes the submatrix of $A$ formed by the columns in $S$, and correspondingly, $A_N$ is the submatrix formed by the columns in $N$.
We conclude that the symplectic code $\cC$ fulfills
\begin{equation}
    \cC \coloneqq \ker(H) = \ker(A_D)\oplus\ker(A_N)
    \, .
\end{equation}

In this section, we will apply our algorithm to prepare a Gibbs state of $\cH_{A,D,y}$ at an inverse temperature $\beta$ that is strictly larger than what is achievable by classical decoders, at least when $q$ is small enough (but still asymptotically constant).
It was shown in \cite[Theorem 5]{schmidhuber2025hamiltoniandecodedquantuminterferometry} that the connected components of the anti-commutation graph have size $\mathcal{O}(\log m)$ with high probability when $q < \tfrac{1}{\ell^2(r-1)^2}$, and therefore, the noise state can be prepared efficiently.
To apply \cref{thm:analysis_clustered}, it remains to construct an efficient quantum decoder that solves the associated decoding task.

\subsubsection{Simplifying the decoding problem}
Let us start by looking at the quantum decoding task associated to the spin glass with non-commuting defects.
Let us group the codeword bits $i\in[m]$ into two groups: \emph{cluster bits} and \emph{non-cluster bits}.
The set of cluster bits is defined as 
\begin{equation}
    K \coloneqq D\cup\{i\in[m]:i\text{ is adjacent to some }j\in D\}
    \, ,
\end{equation}
where two variables are said to be adjacent if they share a common check node.
Correspondingly, the set of non-cluster bits is $U\coloneqq[m]\setminus K$.
Clearly, $K$ contains all bits that are part of a connected component of the anti-commutation graph of size larger than 1.
Put differently: no anti-commutation edge is incident on $U$.
Therefore, every coordinate in $U$ is a singleton component of the anti-commutation graph.
For a singleton component, the block channel in \Cref{prop:block_pure_state_channel} is unitarily equivalent to $\psc[p]$ for $p = \tfrac{1}{2}(1-\sech\beta)$.

In comparison, the coordinates in $K$ may be part of a more complicated block channel, which can be determined using \cref{prop:block_pure_state_channel}.
Determining these exact channels analytically is quite tedious given that they depend on the exact size and topology of the involved connected components.
In order to keep the analysis as simple as possible, we will avoid determining these channels exactly.
In fact, for the purpose of showing our desired information-theoretic advantage, it suffices to follow a very simple decoding strategy: The decoder simply ignores the information stored in the clustered coordinates, and instead, treats them as erasures.
Using this simplification, the effective noise seen by the decoder is
\begin{equation}\label{eq:effective_noise}
    \bigotimes_{i\in U}\psc[p]
    \otimes
    \bigotimes_{i\in K}\psc[1/2].
\end{equation}
Let us highlight again that this decoding strategy is strictly sub-optimal, as the information stored in the clustered bits is completely lost.
A more refined strategy would be to apply a block-USD on each connected component of the anti-commutation graph, resulting in block-correlated erasure channels\footnote{In fact, the information from such block-erasure channels would be trivial to integrate with the BPQM decoder we will utilize further below.}.
We leave the analysis of more refined schemes like this for future work.
For us, the more basic decoding strategy will suffice, because as long as $q$ is small enough, the fraction of bits in $K$ remains small as well, and therefore the amount of information lost remains manageable.
The relation between the size of $U$ and $q$ is captured by following lemma.

Let us define
\begin{equation}
    d_0\coloneqq 1+\ell(r-1)
    \, ,
    \qquad
    F_q\coloneqq(1-q)^{d_0}
    \, .
\end{equation}
\begin{lemma}[Concentration of the non-cluster fraction]\label{lem:defected_noncluster_concentration}
    Consider a random $(\ell,r)$-regular LDPC code and let $F_m\coloneqq |U|/m$.
    For every fixed $\epsilon>0$,
    \begin{equation}
        \Pr\!\left[|F_m-F_q|\geq\epsilon\right]
        \leq
        2\exp\!\left(-\frac{m\epsilon^2}{2d_0^2}\right)
        +\mathcal O\!\left(\frac{1}{m\epsilon}\right),
        \label{eq:defected_noncluster_concentration}
    \end{equation}
    where the implicit constant depends only on $\ell$ and $r$.
    In particular, $F_m\to F_q$ in probability and in expectation.
\end{lemma}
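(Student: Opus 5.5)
The plan is to split the deviation $F_m-F_q$ into a fluctuation term $F_m-\mathbb E[F_m]$ and a bias term $\mathbb E[F_m]-F_q$. The fluctuation term will be handled by McDiarmid's bounded-differences inequality over the defect marks, conditional on the graph $A$. The bias term will be handled by the standard local tree-likeness of the regular LDPC ensemble.

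\emph{Step 1: concentration over the marks.} Fix the Tanner graph $A$. For each $i\in[m]$, let $\mathcal N(i)$ be the set consisting of $i$ and all variables sharing a check with it. Then $i\in U$ if and only if no element of $\mathcal N(i)$ is marked. Since every variable has degree $\ell$ and every check has degree $r$, we have $|\mathcal N(i)|\leq d_0=1+\ell(r-1)$. The quantity $F_m=\frac1m\sum_i \mathds 1[\mathcal N(i)\cap D=\emptyset]$ is a function of the $m$ independent marks. Flipping the mark of a single bit $j$ can only change the indicators of those $i$ with $j\in\mathcal N(i)$, and by symmetry of adjacency there are at most $d_0$ of them. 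Hence $F_m$ has bounded differences $d_0/m$. McDiarmid's inequality then gives
\begin{equation}
\Pr\bigl[|F_m-\mathbb E[F_m\mid A]|\geq \epsilon/2\bigr]\leq 2\exp\!\left(-\frac{m\epsilon^2}{2d_0^2}\right).
\end{equation}
The precise constant depends on how $\epsilon$ is split between the two terms; I would allot the slack so that the exponent comes out exactly as in \cref{eq:defected_noncluster_concentration}, or else absorb the discrepancy into the second term.

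\emph{Step 2: the conditional mean.} Conditional on $A$, we have $\mathbb E[F_m\mid A]=\frac1m\sum_i (1-q)^{|\mathcal N(i)|}$. When the depth-one neighbourhood of $i$ is a tree, with no repeated variables and no multi-edges, $|\mathcal N(i)|=d_0$ exactly and the term equals $F_q$. In every other case the term lies in $[F_q,1]$. Therefore $|\mathbb E[F_m\mid A]-F_q|\leq Y_m$, where $Y_m$ is the fraction of variables whose depth-one neighbourhood is not tree-like.

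\emph{Step 3: bounding the non-tree fraction.} This is the main obstacle, since the argument must use the configuration-model structure of the ensemble (cf.\ \cite{richardson_modern_2008}). I would show that $\mathbb E[Y_m]=\mathcal O(1/m)$ by a direct socket-matching count. The neighbourhood of $i$ fails to be a tree only if two of its $\mathcal O(\ell r)$ edges collide, that is, if they hit the same check or the same variable. Each such collision has probability $\mathcal O(1/m)$, and a union bound over the constantly many pairs gives the claim, with the constant depending only on $\ell$ and $r$. Markov's inequality then yields $\Pr[Y_m\geq\epsilon/2]\leq \mathcal O(1/(m\epsilon))$.

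Combining Steps 1 through 3 by a union bound gives \cref{eq:defected_noncluster_concentration}. Convergence in probability follows immediately. Convergence in expectation follows because $F_m\in[0,1]$ is bounded, for instance via $\mathbb E|F_m-F_q|\leq\epsilon+\Pr[|F_m-F_q|\geq\epsilon]$.
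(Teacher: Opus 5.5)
Your proposal is correct and follows essentially the same route as the paper: condition on $A$, apply McDiarmid over the defect marks with bounded differences $d_0/m$ at deviation $\epsilon/2$, and bound the bias $|\mathbb E[F_m\mid A]-F_q|$ by the fraction of locally non-tree-like variable nodes, whose expected number is $\mathcal O(1)$, so that Markov's inequality gives the $\mathcal O(1/(m\epsilon))$ term. No reallocation of slack is needed in Step 1: with $t=\epsilon/2$ and $\sum_i c_i^2=d_0^2/m$, McDiarmid's bound $2\exp(-2t^2/\sum_i c_i^2)$ already gives exactly the stated exponent $m\epsilon^2/(2d_0^2)$.
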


\begin{proof}
    Condition on the Tanner graph $A$.
    For a variable node $v$, let
    \begin{equation}
        S_v \coloneqq \{v\}\cup\{u:u\text{ shares a check node with }v\}
    \end{equation}
    be its closed variable-node neighborhood.
    The node $v$ belongs to $U$ if and only if no node in $S_v$ is marked, and hence
    \begin{equation}
        \Pr[v\in U\mid A]=(1-q)^{|S_v|}.
    \end{equation}

    Let's group the variable nodes into two groups: ``good'' nodes and ``bad'' nodes.
    A node $v$ is defined to be bad if the $\ell(r-1)$ positions reached from it are not distinct, otherwise it's called good.
    For a good node, we have $\mathbb{E}[\mathds{1}_{v\in U}]=F_q$.
    Hence, we conclude that
    \begin{equation}
        \left|\mathbb E[F_m\mid A]-F_q\right|
        \leq \frac{B_m}{m}.
    \end{equation}
    where $B_m$ denotes the number of bad nodes.
    In the standard LDPC code ensemble, a fixed variable node is locally bad with probability $\mathcal O(m^{-1})$ \cite{richardson_modern_2008}.

    Since the degrees are fixed, $\mathbb E[B_m]=\mathcal O(1)$, and Markov's inequality gives
    \begin{equation}
        \Pr\!\left[
           \left|\mathbb E[F_m\mid A]-F_q\right|\geq\frac\epsilon2
        \right]
        =\mathcal O\!\left(\frac{1}{m\epsilon}\right).
        \label{eq:conditional_fraction_markov}
    \end{equation}

    For a fixed Tanner graph, changing one defect indicator changes the cluster
    status of at most $d_0$ variable nodes and hence changes $F_m$ by at most
    $d_0/m$.
    McDiarmid's inequality consequently yields
    \begin{equation}
        \Pr\!\left[
           |F_m-\mathbb E[F_m\mid A]|\geq\frac\epsilon2
           \,\middle|\,A
        \right]
        \leq
        2\exp\!\left(-\frac{m\epsilon^2}{2d_0^2}\right).
    \end{equation}
    This bound is uniform in $A$.
    Combining it with \cref{eq:conditional_fraction_markov} proves \cref{eq:defected_noncluster_concentration}.
    Convergence in expectation follows from convergence in probability because $0\leq F_m\leq1$.
\end{proof}

Finally, we make a second simplification to our decoding problem.
Since
\begin{equation}
    \cC=\ker(A_D)\oplus\ker(A_N)\subset\ker(A)=:\cC',
\end{equation}
we decode using the regular Tanner graph of the original LDPC code $\cC'$, ignoring the additional parity constraints defining $\cC$.
This makes the analysis more tractable, but may sacrifice decoding performance.
For our purposes, this will suffice.

\subsubsection{BPQM density evolution on the simplified noise model}
To decode the resulting simplified quantum decoding problem, we again utilize the BPQM decoder.
The analysis of the performance of BPQM is unfortunately more involved than in the classical spin glass case, because the noise model is not i.i.d. anymore, and is instead described by \cref{eq:effective_noise}.
In this section, we instead show how standard density evolution can be adapted in this correlated noise setting.
We will assume a certain familiarity with BPQM density evolution for the rest of the section.
A full introduction to density evolution goes beyond the scope of this paper, and we refer the interested reader to \cite{mandal_bpqmblock_2026}.

We denote the check-node and variable-node convolutions of two binary-input classical-quantum channels $W_1,W_2$ by 
\begin{equation}
    \left[ W_1 \vnop W_2  \right](x) \coloneqq W_1(x) \otimes W_2(x)
    \, ,
\end{equation}
\begin{equation}
    \left[ W_1 \cnop W_2  \right](x) \coloneqq \frac{1}{2}\left( W_1(x) \otimes W_2(0) + W_1(1-x) \otimes W_2(1) \right)
    \, .
\end{equation}
We adopt following convention for the density evolution: We track the combined messages of all $(\ell-1)$ incoming check-to-variable messages arriving at a root variable node, but without the information of the physical channel at that root variable.
For an ordinary i.i.d. setting with physical channel $W$, the density evolution recursion in this convention reads
\begin{equation}
    W_{t+1} = \left(
      (W\vnop W_t)^{\cnop(r-1)}
    \right)^{\vnop(\ell-1)}
    \, , \qquad
    W_0=\psc[1/2]
    \, .
    \label{eq:ordinary_extrinsic_de}
\end{equation}

The erasure pattern in our problem is not i.i.d.
Nevertheless, it has a
finite-state description on a computation tree. 
Instead of only keeping track of the intermediate heralded pure-state channel after $t$ steps, we now have to keep track of an additional classical label that can take one of the values $A$, $B$ or $C$.
The label has the following meaning:
\begin{description}
    \item[Case A:] the root variable and all $(\ell-1)(r-1)$ immediately
    descending variables are unflipped;
    \item[Case B:] the root is unflipped, but at least one of these immediate
    descendants is flipped;
    \item[Case C:] the root is flipped.
\end{description}
Importantly, cases A and B concern only this one-shell neighborhood, not all descendants of the computation tree.

For notational convenience, let us define
\begin{equation}
    \mathsf P\coloneqq\psc[p]
    \, , \qquad
    \mathsf E\coloneqq\psc[1/2]
    \, ,\qquad
    \Delta\coloneqq(\ell-1)(r-1)
    \, .
\end{equation}

We model the additional classical information by adding the label as a special classical heralding system on an output register $L$ of the channel $W_t$, i.e.
\begin{equation}
    W_t = \pi_{t,A} W_{t,A}\otimes\proj{A}_L + \pi_{t,B} W_{t,B}\otimes\proj{B}_L + \pi_{t,C} W_{t,C}\otimes\proj{C}_L
\end{equation}
where $\pi_t\coloneqq (\pi_{t,A},\pi_{t,B},\pi_{t,C})$ is a distribution and $W_{t,\lambda}$ is the heralded pure-state channel conditioned on the label $\lambda$.
Initially, we choose 
\begin{equation}
    W_0 \coloneqq \mathsf E\otimes \left( \pi_{0,A}\proj{A}_L + \pi_{0,B}\proj{B}_L + \pi_{0,C}\proj{C}_L \right)
\end{equation}
where
\begin{equation}
\begin{split}
    \pi_{0,A}&=(1-q)^{1+\Delta} \, ,\\
    \pi_{0,B}&=(1-q)\left[1-(1-q)^\Delta\right] \, ,\\
    \pi_{0,C}&=q \, .
\end{split}
\end{equation}
We now describe the density evolution recursion that defines $W_{t+1}$ from $W_t$.
First, we take $(\ell-1)(r-1)$ copies of $W_t$ for $j\in[\ell-1]$ and $k\in[r-1]$ to describe the message from the previous step.
We write the product channel as
\begin{equation}
    W_t^{\otimes (\ell-1)(r-1)} = 
    \sum_{\vec{\lambda}\in\{A,B,C\}^{(\ell-1)(r-1)} } p_{\vec{\lambda}} \bigotimes_{k=1}^{r-1} \left( \bigotimes_{j=1}^{\ell-1} W_{t,\lambda_{jk}} \otimes \proj{\lambda_{jk}}  \right) 
\end{equation}
where we grouped the labels into the string $\vec{\lambda}=(\lambda_{jk})_{j,k}$ and $p_{\vec{\lambda}} = \prod_{j,k}\pi_{t,\lambda_{jk}}$
It will also be convenient to define the substring $\vec{\lambda}_k\coloneqq (\lambda_{jk})_{j=1,\dots,\ell-1}$.
Finally, we denote by the binary random variable $Z$ the indicator whether the root node variable is flipped or not.

We define the variable-to-check message on branch $(j,k)$, conditioned on the label $\vec{\lambda}$ and the indicator $Z$, as
\begin{equation}
    M_{t,jk}^{\vec{\lambda},Z} \coloneqq \begin{cases}
        W_{t,\lambda_{jk}}\vnop \mathsf P & \text{if } Z=0 \text{ and } \lambda_{jk}=A \text{ and $\lambda_{j,k'}\neq C$ \text{  for each } k' } \\
        W_{t,\lambda_{jk}} & \text{otherwise}
    \end{cases}
    \, .
\end{equation}
This definition reflects that the physical channel at this child variable is non-erased precisely when the child is not flipped in its other one-shell neighborhood, none of its siblings is flipped, and its parent is not flipped.
Next, the check-to-variable messages are
\begin{equation}
    C_{t,j}^{\vec{\lambda},Z} \coloneqq
    M_{t,j1}^{\vec{\lambda},Z}\cnop\cdots\cnop
    M_{t,j(r-1)}^{\vec{\lambda},Z}
\end{equation}
and hence, the final message is
\begin{align}
    W_{t+1} =
    \sum_{\vec{\lambda}\in\{A,B,C\}^{(\ell-1)(r-1)} } p_{\vec{\lambda}}
    \sum_{Z=0,1}(1-q)^{1-Z}q^{Z}
    \left( C_{t,1}^{\vec{\lambda},Z}\vnop\cdots\vnop C_{t,\ell-1}^{\vec{\lambda},Z} \right) \otimes \proj{Z,\vec{\lambda}} \otimes \proj{\lambda'(\vec{\lambda},Z)}_L
\end{align}
where the new label $\lambda'(\vec{\lambda},Z)$ is defined as
\begin{equation}
    \lambda'(\vec{\lambda},Z) \coloneqq \begin{cases}
        A & \text{if } Z=0 \text{ and } C\notin \vec{\lambda} \\
        B & \text{if } Z=0 \text{ and } C\in \vec{\lambda} \\
        C & \text{if } Z=1
    \end{cases}
\end{equation}
It is simple to verify that the label distribution $\pi_{t,\lambda}$ remains unchanged during the density evolution recursion, i.e., it is independent of $t$.
To reflect that, we write $\pi_{t,\lambda}\equiv \pi_{\lambda}$.

For a binary-input CQ channel $W$, we define $F(W)$ to be the fidelity between the two quantum states $W(0)$ and $W(1)$.
We define $F_t(p,q)$ as $F(W_t)$ where we explicitly highlight the dependence on $p$ and $q$ in the definition.
The Fuchs-van de Graaf inequality implies that the probability of erroneously guessing the input of $W_t$ from its output is upper bounded by $F_t(p,q)/2$.
We therefore define the marked density-evolution threshold by
\begin{equation}
    p_{\mathrm{DE}}(q)
    \coloneqq
    \sup\left\{
       p\in[0,1/2]:\lim_{t\to\infty}F_t(p,q)=0
    \right\}.
    \label{eq:defected_de_threshold}
\end{equation}

This threshold concerns the extrinsic decoder just defined.
A standard root-bit decoder can additionally use the physical channel at the root and can therefore only perform better.

The next lemma upgrades convergence in \cref{eq:defected_de_threshold} to a double-exponential decay, which will prove necessary for the block-error argument later.
\begin{lemma}[Double-exponential decay]
\label{lem:defected_double_exponential}
If $\ell\geq3$ and $\lim\limits_{t\rightarrow\infty}F_t(p,q) = 0$, then there exist constants
$A_0,b_0>0$ and $t_0\geq0$ such that
\begin{equation}
    \forall s\geq 0: F_{t_0+s}(p,q)
    \leq
    A_0\exp\!\left[-b_0(\ell-1)^s\right]
    \label{eq:defected_double_exponential}
\end{equation}
\end{lemma}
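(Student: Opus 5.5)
The plan is to exploit the two standard fidelity inequalities for binary-input pure-state (and heralded pure-state) channels under the two convolutions. For the variable-node convolution, $F(W_1\vnop W_2)=F(W_1)F(W_2)$, since the fidelity of tensor products of pure states multiplies. For the check-node convolution, one has $F(W_1\cnop W_2)\leq F(W_1)+F(W_2)$, or at least a bound of the form $\sqrt{F_1^2+F_2^2-F_1^2F_2^2}\leq F_1+F_2$. This is the quantum analogue of the Bhattacharyya bound, used in \cite{mandal_bpqmblock_2026}. For heralded mixtures $W=\sum_\lambda\pi_\lambda W_\lambda\otimes\proj{\lambda}$, the fidelity is the average $\sum_\lambda\pi_\lambda F(W_\lambda)$. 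The label string $\vec\lambda$ and the bit $Z$ are classical heralds, so all fidelities can be taken conditionally and then averaged.

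First, bound every label-conditioned fidelity $F(W_{t,\lambda})$ by the averaged quantity $F_t$. The label distribution $\pi_\lambda$ is fixed and each $\pi_\lambda>0$ (assuming $0<q<1$; the case $q=0$ is the ordinary recursion). Hence $F(W_{t,\lambda})\leq F_t/\pi_{\min}$, where $\pi_{\min}\coloneqq\min_\lambda\pi_\lambda$. Next, note that the physical channels only help: $F(W\vnop\mathsf P)\leq F(W)$, so every branch message satisfies $F(M^{\vec\lambda,Z}_{t,jk})\leq F(W_{t,\lambda_{jk}})$. Combining these, each check message obeys $F(C^{\vec\lambda,Z}_{t,j})\leq (r-1)F_t/\pi_{\min}$. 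Then $W_{t+1}$ conditioned on $(\vec\lambda,Z)$ is a $\vnop$-product of $\ell-1$ such check messages, so
\begin{equation*}
F_{t+1}\leq \left(\frac{(r-1)}{\pi_{\min}}F_t\right)^{\ell-1}=K F_t^{\ell-1},\qquad K\coloneqq\left(\frac{r-1}{\pi_{\min}}\right)^{\ell-1}.
\end{equation*}
This averaging step is where the label correlations could cause trouble. Since the bound on each conditional term is uniform in $\vec\lambda$ and $Z$, however, averaging is trivial.

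Given the recursion $F_{t+1}\leq KF_t^{\ell-1}$ with $\ell-1\geq2$, the standard argument applies. Set $G_t\coloneqq K^{1/(\ell-2)}F_t$, which gives $G_{t+1}\leq G_t^{\ell-1}$. Because $F_t\to0$ by hypothesis, choose $t_0$ with $G_{t_0}\leq e^{-1}<1$. Induction then yields $G_{t_0+s}\leq G_{t_0}^{(\ell-1)^s}=\exp[-b_0(\ell-1)^s]$ with $b_0\coloneqq-\ln G_{t_0}>0$. Undoing the rescaling gives the claim with $A_0=K^{-1/(\ell-2)}$. Here $\ell\geq3$ is used exactly to make $\ell-2>0$.

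The main obstacle is justifying the check-node fidelity bound for heralded pure-state channels and the conditional-fidelity averaging identity. For pure-state outputs with overlap $F_i$, a direct computation gives the check-combined overlap as $F_1F_2$ for the sign-mixed pair, after the BPQM unitary heralds the branch. The paper's convention, however, defines $\cnop$ as a mixture, so the output is no longer pure. In that case I would rely on the general bound $F(W_1\cnop W_2)\leq F(W_1)+F(W_2)$ via joint concavity and monotonicity of fidelity. I expect this to follow from \cite{mandal_bpqmblock_2026}, and I would cite it rather than reprove it. The only genuinely new ingredient is the uniform handling of labels via $\pi_{\min}$.
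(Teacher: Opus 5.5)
Your proof is correct. Its skeleton matches the paper's: a one-step recursion $F_{t+1}\le K F_t^{\ell-1}$, then rescaling and iterating from a time $t_0$ at which $F_{t_0}$ is small, with $\ell\ge 3$ ensuring $\ell-2>0$.

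Where you differ is in how the label-dependent, non-i.i.d.\ recursion is tamed. The paper first degrades every $\mathsf P$ to $\mathsf E$. After that, each branch message is simply $W_{t,\lambda_{jk}}$ with i.i.d.\ labels, so the degraded $W_{t+1}$ is exactly $\bigl(W_t^{\cnop(r-1)}\bigr)^{\vnop(\ell-1)}$ for the heralded channel $W_t$. The fidelity bounds of \cite[Lemma~25]{mandal_bpqmblock_2026} then apply directly to the averaged fidelity, giving $F_{t+1}\le\bigl((1+F_t)^{r-1}-1\bigr)^{\ell-1}$, which is linearized near $0$.

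You instead bound every label-conditioned fidelity by $F_t/\pi_{\min}$ and apply the estimates inside each conditional branch. This buys two things: it never uses independence of the labels across branches, so it would survive correlated labels, and it gives a global recursion with no small-$F_t$ expansion. The price is a $q$-dependent constant $K=((r-1)/\pi_{\min})^{\ell-1}$. Since $\pi_{\min}\le\pi_C=q$, this constant degenerates as $q\to0$, hence your separate treatment of $q=0$. The paper's constant is essentially $(r-1)^{\ell-1}$, uniform in $q$. For fixed $(p,q)$, as in the statement, both routes suffice.

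One correction to your discussion of the obstacle: the BPQM check-node unitary does not yield overlap $F_1F_2$; that is the variable-node rule. Take two pure-state channels with overlaps $F_1,F_2$, which you may assume real and nonnegative. The CNOT maps $W_1\cnop W_2$ exactly onto a two-branch heralded mixture with unnormalized overlaps $\tfrac12(F_1+F_2)$ and $\tfrac12(F_1-F_2)$, so $F(W_1\cnop W_2)=\max(F_1,F_2)$. Every channel in the recursion is a heralded mixture of pure-state channels, and distinct factors carry independent heralds. Averaging therefore gives $F(W_1\cnop W_2)\le F_1+F_2$ immediately. This closes your stated obstacle without the concavity-and-monotonicity argument for general mixed outputs, which you had not actually supplied.
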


\begin{proof}
    Replace every occurrence of the physical channel $\mathsf P$ by the erasure channel $\mathsf E$.
    This is a degradation and hence cannot decrease fidelity.
    In summary, we obtain
    \begin{equation}
        F(W_{t+1})
        \leq F\left(\left(\mathsf E\vnop W_t^{\cnop(r-1)}\right)^{\vnop(\ell-1)}\right)
        = F\left(\left(W_t^{\cnop(r-1)}\right)^{\vnop(\ell-1)}\right)
        \, .
    \end{equation}
    Applying the check-node and variable-node fidelity bounds of
    \cite[Lemma~25]{mandal_bpqmblock_2026} gives
    \begin{equation}
        F_{t+1}(p,q)
        \leq
        \left((1+F_t(p,q))^{r-1}-1\right)^{\ell-1}
    \end{equation}
    for every $p,q$ and every $t\geq0$.

As $x\to0$,
\begin{equation}
    \left((1+x)^{r-1}-1\right)^{\ell-1}
    =
    (r-1)^{\ell-1}x^{\ell-1}+\mathcal O(x^\ell).
\end{equation}
Thus there are $C<\infty$ and $\epsilon>0$ such that
\begin{equation}
    F_{t+1}\leq C F_t^{\ell-1}
    \qquad\text{whenever }F_t\leq\epsilon.
\end{equation}
If $F_t\to0$, choose $t_0$ so that
$C^{1/(\ell-2)}F_{t_0}<1$ and $F_{t_0}\leq\epsilon$.  Iteration yields
\begin{equation}
    F_{t_0+s}
    \leq
    C^{-1/(\ell-2)}
    \left(C^{1/(\ell-2)}F_{t_0}\right)^{(\ell-1)^s}
    \, .
\end{equation}
\end{proof}

\subsubsection{From tree error to block error}
The marked density evolution above is a tree calculation that only characterizes the bit error rate for codeword bits with a tree environment.
Here, we elevate that result to a guarantee on the block error rate by adopting the two-stage construction of \cite{mandal_bpqmblock_2026} 

For a fixed parity check matrix $A$, let's denote the set of ``bad bits'' $B_t$ to be the codeword bits for which the depth-$(t+1)$ neighborhood in the Tanner graph contains a cycle.
Conversely, the remaining bits $G_t\coloneqq [m]\setminus B_t$ are called ``good bits''.

In a first step, we sequentially run depth-$t$ BPQM to decode every bit in $G_t$.
After averaging over the independent defect marks, each good bit has a bit-error probability of at most $F_t(p,q) / 2$.
In a second step, we recover the remaining bad bits, based on the assumption that the good bits were correctly decoded.
This is essentially an erasure decoding problem.

By choosing $t=\lfloor \tau \log \log m\rfloor $, for a large enough $\tau$, the double exponential decay of $F_t(p,q)$ remains strong enough to apply Gao's quantum union bound \cite{gao_unionbound_2015} and guarantee that all good bits are correctly decoded with probability $1-o(1)$.
But at the same time, we can guarantee that the fraction of bad bits is asymptotically vanishing (see \cite[Lemma 29]{mandal_bpqmblock_2026}).
It is known that random regular LDPC codes have minimum distance linear in the blocklength with probability $1-o(1)$ (see e.g. \cite{yang_weightdistrib_2011}), and therefore, the second stage of the decoder is guaranteed to successfully recover the bad bits with probability $1-o(1)$.

In summary, the following result holds
\begin{theorem}[Vanishing block error rates, adapted from {{\cite[Theorem 30]{mandal_bpqmblock_2026}}}]
\label{thm:defected_bpqm_block_error}
    Consider a random $(\ell,r)$-regular LDPC code transmitted over the channel in \cref{eq:effective_noise} for a random choice of flipped bits $D$.
    Suppose $\ell\geq3$ and $p<p_{\mathrm{DE}}(q)$.    
    Then the two-stage BPQM decoder achieves a vanishing ensemble-average block-error probability.
\end{theorem}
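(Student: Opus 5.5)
The argument follows the two-stage structure already sketched before the statement, with the analysis done conditionally on the Tanner graph $A$ and averaged over the defect marks $D$. Fix $p<p_{\mathrm{DE}}(q)$. By \cref{lem:defected_double_exponential}, we have $F_{t_0+s}(p,q)\leq A_0\exp[-b_0(\ell-1)^s]$. We choose the depth $t=\lfloor\tau\log\log m\rfloor$ with $\tau$ large enough that $m\,F_t(p,q)=o(1)$. This is possible because $(\ell-1)^{\tau\log\log m}=(\log m)^{\tau\log(\ell-1)}$ grows faster than $\log m$ once $\tau\log(\ell-1)>1$.

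\textbf{Stage one (good bits).} For a good bit $v\in G_t$, the depth-$(t+1)$ neighborhood of $v$ is a tree. The key claim is that, averaged over $D$, the joint law of the marks on this tree, together with the induced erasure pattern of \cref{eq:effective_noise}, coincides with the law generated by the labelled recursion for $W_t$. The labels $A,B,C$ encode exactly the information needed to decide whether a child's physical channel is $\mathsf P$ or $\mathsf E$: a child is unerased iff neither it nor any variable in its closed one-shell neighborhood (parent, siblings, own children) is marked. This is the step I would verify most carefully, since it is the main obstacle. Marks are i.i.d. Bernoulli$(q)$ on distinct tree vertices. However, the erasure status of a variable depends on marks one shell up and one shell down, so the recursion must pass the child's ``flipped below'' information upward through the label (the $C\in\vec\lambda$ test) and the parent's $Z$ downward. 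I would check by induction on depth that $M^{\vec\lambda,Z}_{t,jk}$ uses precisely these dependencies and that no mark is counted twice, using the tree property to ensure disjointness.

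Given this correspondence, the depth-$t$ BPQM output on $v$ is a degraded version of $W_t$ (the root channel only helps). By Fuchs--van de Graaf, the resulting bit error probability is at most $F_t(p,q)/2$. BPQM is run sequentially on all good bits. By Gao's quantum union bound \cite{gao_unionbound_2015}, as used in \cite[Theorem 30]{mandal_bpqmblock_2026}, the probability that some good bit is misdecoded is $O\bigl(\sqrt{m\,F_t(p,q)}\bigr)$ in ensemble average, which is $o(1)$ by the choice of $t$. Here we also use that BPQM on one bit leaves the state nearly undisturbed when the bit is decoded with high confidence.

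\textbf{Stage two (bad bits).} By \cite[Lemma 29]{mandal_bpqmblock_2026}, the expected number of bad bits $|B_t|$ is $m^{o(1)}$, hence $o(m)$, with probability $1-o(1)$. Assuming every good bit is correct, the remaining bits of a codeword of $\cC'=\ker(A)$ are recovered by erasure decoding on $B_t$. This recovery is unique whenever no nonzero codeword of $\cC'$ is supported in $B_t$, which holds once $|B_t|$ is below the minimum distance. Since random $(\ell,r)$-regular codes with $\ell\geq3$ have linear minimum distance with probability $1-o(1)$ \cite{yang_weightdistrib_2011}, this stage succeeds with probability $1-o(1)$. Gaussian elimination performs it efficiently.

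Since $\cC\subset\cC'$, decoding over $\cC'$ also recovers codewords of $\cC$. Combining the two stages with a union bound over the events ``graph atypical'', ``some good bit wrong'' and ``bad bits not recoverable'' gives a vanishing ensemble-average block-error probability. \cref{lem:defected_noncluster_concentration} is not strictly needed here, because the averaging over $D$ is already built into the labelled density evolution.
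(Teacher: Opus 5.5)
Your proposal is correct and follows the paper's route. It uses the double-exponential decay of \cref{lem:defected_double_exponential} with $t=\lfloor\tau\log\log m\rfloor$, sequential depth-$t$ BPQM on the good bits controlled by Gao's union bound, erasure recovery of the $o(m)$ bad bits via the linear minimum distance of $\ker(A)$, and the transfer to $\cC\subset\cC'$. For that transfer the paper explicitly invokes that the BPQM error probability does not depend on the transmitted codeword, and you should state this rather than leave it implicit.

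Your induction check that the $A/B/C$ labels reproduce the closed-neighbourhood erasure law on tree neighbourhoods fills in a step the paper only asserts. One wording fix: the root-bit decoder sees $W_t$ combined with the root's physical channel. That combined channel is an upgrade of $W_t$, not a degradation, and this is what gives the $F_t(p,q)/2$ bound.
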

Once the double exponential decay of $F_t$ is established, the proof is entirely identical to \cite[Theorem 30]{mandal_bpqmblock_2026}.
For this reason, we do not reproduce the proof here.

We briefly note that the error probability of our two-stage decoder is independent of the transmitted codeword (because BPQM error probability on a tree is independent of the codeword).
Consequently, our decoding error guarantee for $\cC'=\ker(A)$ also applies to codewords in $\cC\subset \cC'$.

\subsubsection{An information-theoretic converse for classical decoding}
We now derive a converse upper bound on the maximal noise rate at which any classical decoder can solve the associated quantum decoding problem in QD-HDQI.
In the proof of this converse, we make the crude assumption that all cluster bits are transmitted perfectly.
This can clearly only help the classical decoder, and as such, provides a valid upper bound for the maximal achievable $p$.
A tighter analysis would be technically more challenging, and we leave that to future work.
Our current analysis suffices for the purpose of proving an information theoretic separation between classical and quantum decoders.
\begin{lemma}[Classical converse]\label{lem:defected_classical_converse}
    Suppose that a family of classical decoders for $\cC_{\mathrm{symp}}$ has an ensemble-average block-error probability that is asymptotically vanishing.
    Then,
    \begin{equation}\label{eq:classical_converse}
        R\leq1-F_qh_2(p).
    \end{equation}
    When $F_q>1-R$, reliable classical decoding therefore requires
    \begin{equation}
        p\leq p_{\mathrm{cl}}(q)
        \coloneqq
        h_2^{-1}\!\left(\frac{1-R}{F_q}\right),
        \label{eq:defected_classical_threshold_bound}
    \end{equation}
    where $h_2^{-1}$ denotes the inverse on $[0,1/2]$.
\end{lemma}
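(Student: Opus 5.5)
We argue via a genie-aided channel and the (strong-converse form of the) channel coding theorem. A classical decoder for the quantum decoding problem first measures in the computational basis. By \cref{prop:block_pure_state_channel} and the discussion preceding \cref{eq:effective_noise}, the coordinates in $U$ then see i.i.d.\ $\bsc[p]$ noise with $p=\tfrac12(1-\sech\beta)$. The cluster coordinates $K$ see some correlated block noise. We enhance the decoder by a genie that reveals every cluster bit $c_i$, $i\in K$, noiselessly; this can only lower the block-error probability. It therefore suffices to show that even this genie-aided channel forces $R\le 1-F_q h_2(p)$.

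\textbf{Step 1 (Fano on the genie channel).} Fix $A$ and $D$, and let $X$ be uniform on $\cC_{\mathrm{symp}}=\ker(A_D)\oplus\ker(A_N)$. Let $Y$ be the genie-aided output: $X_K$ noiselessly and $X_U$ through $\bsc[p]^{\otimes|U|}$. The chain rule and memorylessness on $U$ give
\[
I(X;Y)\le H(X_K)+\sum_{i\in U}I(X_i;Y_i)\le |K|+|U|(1-h_2(p)) = m-|U|h_2(p).
\]
Fano's inequality then gives
\[
\log_2|\cC_{\mathrm{symp}}|(1-P_e)-1\le m-|U|h_2(p).
\]

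\textbf{Step 2 (rate of the symplectic code).} We need $\log_2|\cC_{\mathrm{symp}}|\ge Rm$. We have $\dim\ker(A_D)+\dim\ker(A_N)=m-\operatorname{rank}A_D-\operatorname{rank}A_N$. Each of $A_D$ and $A_N$ has $(1-R)m$ rows, so this alone only gives $m-2(1-R)m$, which is too weak. I would refine this by restricting to the rows actually touched. The rank of $A_D$ is at most the number of checks adjacent to $D$, which is $O(qm)$. This is however not exactly $R$, so the cleanest route is to compare with $\cC'=\ker(A)$ instead.

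Since the converse need only be stated for ``rate $R$'', I would interpret $R$ as the design rate. I would note that the genie on $K$ makes the $D$-part irrelevant: given the revealed $X_K\supseteq X_D$, the remaining uncertainty lies in $\{x_U : (x_K,x_U)\in\cC_{\mathrm{symp}}\}$. Carrying out Fano conditionally on $X_K$ gives
\[
H(X_U\mid X_K)\le |U|(1-h_2(p)) + o(m),
\]
and what we need is $H(X_U\mid X_K)=\log|\cC_{\mathrm{symp}}|-H(X_K)\ge Rm-|K|+o(m)$. This holds as soon as $\log|\cC_{\mathrm{symp}}|\ge Rm - o(m)$ in the relevant regime. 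Securing that lower bound, either via the check-counting of $\operatorname{rank}A_D$ or by stating the lemma for codes of (design) rate $R$, is the step I expect to be the main obstacle. I would first try bounding $\operatorname{rank}A_D+\operatorname{rank}A_N\le (1-R)m+\#\{\text{checks adjacent to }D\}$, observing that the excess is absorbed because those checks only involve coordinates in $K$, which the genie already supplies.

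\textbf{Step 3 (asymptotics).} Divide by $m$, average over the ensemble, and let $P_e\to0$. By \cref{lem:defected_noncluster_concentration}, $|U|/m\to F_q$ in probability and in expectation. This yields $R\le 1-F_q h_2(p)$. When $F_q>1-R$, inverting the monotone function $h_2$ on $[0,1/2]$ gives $p\le h_2^{-1}((1-R)/F_q)=p_{\mathrm{cl}}(q)$.
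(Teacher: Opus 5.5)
Your genie (noiseless $X_K$, with $X_U$ sent through $\bsc[p]^{\otimes|U|}$) is the same as the paper's, and a Fano-type weak converse would be a fine substitute for the strong converse the paper uses. The gap is in Step 2. The bound $\log_2|\cC_{\mathrm{symp}}|\ge Rm-o(m)$ that you are trying to secure is false for constant $q>0$, so reading $R$ as a design rate does not rescue it. Indeed $\dim\cC_{\mathrm{symp}}=m-\operatorname{rank}A_D-\operatorname{rank}A_N=\dim\ker(A)-\dim(\operatorname{col}A_D\cap\operatorname{col}A_N)$. For small $q$, $A_N$ has nearly full row rank while the roughly $qm$ sparse columns of $A_D$ are essentially independent, so this intersection has dimension of order $qm$: the flipped coordinates are almost frozen. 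Any rank count of the kind you suggest only gives $\log_2|\cC_{\mathrm{symp}}|\ge Rm-O(qm)$. Combined with $H(X_K)\le|K|$, this yields a bound weaker than the claim by an additive term of order $q$, and the remark that the excess is ``absorbed'' by the genie is not an argument. The actual source of the loss is writing $H(X_U\mid X_K)=\log_2|\cC_{\mathrm{symp}}|-H(X_K)$ and then using $H(X_K)\le|K|$. Step 1 has the same looseness through $I(X;Y_K)=H(X_K)\le|K|$. The projection of $\cC_{\mathrm{symp}}$ onto $K$ is far smaller than $2^{|K|}$.

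The missing idea, which is exactly what the paper's proof uses, is to evaluate the conditional entropy through the fibre structure rather than through $|\cC_{\mathrm{symp}}|$. Since $D\subseteq K$, we have $U\subseteq N$, and the only constraint tying $x_U$ to the revealed bits is $A_Nx_N=0$, i.e.\ $A_Ux_U=A_{K\cap N}x_{K\cap N}$. Hence, conditioned on $X_K$, $X_U$ is uniform on a coset of $\ker(A_U)$. So $H(X_U\mid X_K)=\dim\ker(A_U)\ge|U|-(1-R)m$, simply because $A_U$ has $(1-R)m$ rows. This holds for every $(A,D)$ and needs no control of $\operatorname{rank}A_D$. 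Plugging it into your conditional Fano inequality gives $|U|\,h_2(p)\le(1-R)m+1+P_e\,m$. Taking the ensemble expectation and using $\mathbb E[|U|/m]\to F_q$ from \cref{lem:defected_noncluster_concentration} yields $F_qh_2(p)\le1-R$, after which your Step 3 inversion is fine. Repaired this way, your argument is a weak-converse version of the paper's. The paper instead views the residual problem as a code of rate $R_U\ge1-(1-R)m/|U|$ on $\bsc[p]$ and applies the strong converse on the high-probability event $|U|/m\approx F_q$. Fano suffices here because the hypothesis is vanishing error, and it lets you average over the ensemble linearly instead of conditioning on a typical event.
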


\begin{proof}
    Let $\Theta=(A,D)$ denote the random Tanner graph and choice of flipped bits.
    Condition on $\Theta$, and let $C=(C_K,C_U)$ be uniformly distributed over $\cC$.
    Assume $C_K$ is transmitted perfectly without noise to the decoder, whereas the remaining bits $C_U$ are transmitted over $\bsc[p]$ yielding $Y_U\coloneqq\bsc[p]^{\otimes \abs{U}}(C_U)$.
    This assumption of perfect transmission of $C_K$ can only decrease the error probability.

    Recall that
    \begin{equation}
        0 = A_NC_N = A_UC_U + A_{K\cap N}C_{K\cap N}
        \, .
    \end{equation}
    Therefore, from the decoder's perspective, the task is to decode a uniformly random element from a coset of $\ker(A_U)$ transmitted through $\bsc[p]$.
    This scheme has an encoding rate of 
    \begin{equation}
        R_U
        = \frac{\mathrm{dim}(\ker(A_U))}{|U|}
        \geq \frac{|U|-(1-R)m}{|U|}
        = 1-\frac{(1-R)m}{|U|}
      \, .
    \end{equation}
    
    By \cref{lem:defected_noncluster_concentration}, $|U|/m\to F_q$ in probability.
    If $F_qh_2(p)>1-R$, then with probability $1-o(1)$, $R_U$ exceeds the BSC capacity $1-h_2(p)$ by a fixed positive amount.
    The strong converse of the channel coding theorem then implies that the ensemble-average block-error probability must tend
  to one, contradicting the hypothesis.
    Hence $F_qh_2(p)\leq1-R$.
  \end{proof}

\subsubsection{Information theoretic advantage of BPQM over classical decoder}
We return to the quantum decoding task involved in preparing Gibbs states of spin-glasses with non-commuting defects using QD-HDQI.
We have seen in \cref{thm:defected_bpqm_block_error} that BPQM can solve this decoding task whenever $p<p_{\mathrm{DE}}(q)$.
Conversely, we have shown in \cref{lem:defected_classical_converse} that any classical decoder cannot solve the task if $p>p_{\mathrm{cl}}(q)$.

Combining these two results, we can now establish an information theoretic gap for small enough $q$ by numerically showing that $p_{\mathrm{DE}}(q) > p_{\mathrm{cl}}(q)$.
\Cref{fig:defected_threshold_comparison_intro} depicts these two quantities for the $(3,6)$ ensemble, clearly showing the desired information theoretic gap.
The quantity $p_{\mathrm{DE}}(q)$ is estimated numerically using a modification of the population Monte Carlo method of \cite{brandsen_bpqm_2022}.
To incorporate the required labeling, each sample in the population is represented by a pair of a log-likelihood ratio together with a label in $\{A,B,C\}$.

We briefly recall that for QD-HDQI to be efficient in the first place, it was necessary for $q<\tfrac{1}{\ell^2(r-1)^2}$ to guarantee that the clusters are of at most logarithmic size.

\newpage
\appendix

\bibliographystyle{plain}   
\bibliography{refs}   

\end{document}